\newif\ifdraft \drafttrue
\newif\iffull \fulltrue
\makeatletter \@input{tex.flags} \makeatother
\newcommand{\algorithmicinit}{\textbf{Initialisation:}}
\newcommand{\INITIALIZATION}{\item[\algorithmicinit]}
\definecolor{DarkGreen}{rgb}{0.1,0.5,0.1}
\definecolor{DarkRed}{rgb}{0.5,0.1,0.1}
\definecolor{DarkBlue}{rgb}{0.1,0.1,0.5}
\newcommand\cA{\mathcal{A}}
\newcommand\cP{\mathcal{P}}
\newcommand\cQ{\mathcal{Q}}
\DeclareMathOperator*{\Expectation}{\mathbb{E}}
\newcommand{\Ex}[2]{\Expectation_{#1}\left[#2\right]}
\newcommand{\eps}{\varepsilon}
\def\epsilon{\varepsilon}
\renewcommand{\hat}{\widehat}
\DeclareMathOperator*{\argmax}{\mathrm{argmax}}
\newtheorem*{theorem*}{Theorem}
\newtheorem*{observation*}{Observation}
\declaretheorem[
  name=Theorem,
  refname={theorem, theorems},
  Refname={Theorem, Theorems}]{theorem}
\declaretheorem[
  name=Lemma,
  refname={lemma, lemmas},
  Refname={Lemma, Lemmas}]{lemma}
\declaretheorem[
  name=Corollary,
  refname={corollary, corollaries},
  Refname={Corollary, Corollaries}]{corollary}
\declaretheorem[
  name=Definition,
  refname={definition, definitions},
  Refname={Definition, Definitions}]{definition}
\declaretheorem[
  name=Observation,
  refname={observation, observations},
  Refname={Observation, Observations}]{observation}
\title{Logarithmic Query Complexity for Approximate Nash Computation in Large Games}
\author{Paul W. Goldberg \thanks{University of Oxford. Emails: \href{mailto:paul.goldberg@cs.ox.ac.uk}{paul.goldberg@cs.ox.ac.uk}, \href{mailto:francisco.marmolejo@cs.ox.ac.uk}{francisco.marmolejo@cs.ox.ac.uk}} \and 
Francisco J. Marmolejo-Coss\'{i}o \footnotemark[1] \thanks{Supported by the Mexican National Council of Science and Technology (CONACyT)} \and
 Zhiwei Steven Wu\thanks{University of Pennsylvania. Email:
    \href{mailto:wuzhiwei@cis.upenn.edu}{wuzhiwei@cis.upenn.edu}.
    %\sw{can we fill in other authors' info?}
    }}
\begin{document}

\maketitle

\begin{abstract}
We investigate the problem of equilibrium computation for ``large'' $n$-player games.
Large games have a Lipschitz-type property that no single player's utility
is greatly affected by any other individual player's actions.
In this paper, we mostly focus on the case where any change of strategy by a player
causes other players' payoffs to change by at most $\frac{1}{n}$.
We study algorithms having query access to the game's payoff function,
aiming to find $\epsilon$-Nash equilibria.
We seek algorithms that obtain $\epsilon$ as small as possible, in time polynomial in $n$.

Our main result is a randomised algorithm that achieves $\epsilon$ approaching
$\frac{1}{8}$ for 2-strategy games in a {\em completely uncoupled} setting, where each player
observes her own payoff to a query, and adjusts her behaviour independently
of other players' payoffs/actions. $O(\log n)$ rounds/queries are required.
We also show how to obtain a slight improvement over $\frac{1}{8}$,
by introducing a small amount of communication between the players.

Finally, we give extension of our results to large games with more
than two strategies per player, and alternative largeness parameters.
\end{abstract}

\section{Introduction}\label{sec:intro}

In studying the computation of solutions of multi-player games,
we encounter the well-known problem that a game's payoff
function has description length exponential in the number of players.
One approach is to assume that the game comes from a
concisely-represented class (for example, graphical games,
anonymous games, or congestion games), and another one is to consider
algorithms that have query access to the game's payoff function.

In this paper, we study the computation of approximate Nash equilibria
of multi-player games having the feature that if a player
changes her behaviour, she only has a small effect on the payoffs
that result to any other player. These games, sometimes called
{\em large} games, or {\em Lipschitz} games, have recently been
studied in the literature, since they model various real-world
economic interactions; for example, an individual's choice of what
items to buy may have a small effect on prices, where other individuals
are not strongly affected. Note that these games do not
have concisely-represented payoff functions, which makes them a
natural class of games to consider from the query-complexity perspective.
It is already known how to compute approximate {\em correlated equilibria}
for unrestricted $n$-player games. Here we study the more demanding solution
concept of approximate Nash equilibrium.

%literature on large games
Large games (equivalently, small-influence games) are studied in
Kalai~\cite{E04} and Azrieli and Shmaya~\cite{AS13}.
In these papers, the existence of pure $\epsilon$-Nash equilibria for
$\epsilon = \gamma \sqrt{8n\log(2kn)}$ is established, where $\gamma$
is the largeness/Lipschitz parameter of the game, and $k$ is the number
of pure strategies for each player.
In particular, since we assume that $\gamma = \frac{1}{n}$ and $k=2$ we notice
that $\epsilon = O(n^{-1/2})$ so that there exist arbitrarily
accurate pure Nash equilibria in large games as the number of players
increases. Kearns et al.~\cite{KPRR15} study this class of games from
the mechanism design perspective of mediators who aim to achieve a
good outcome to such a game via recommending actions to players.
Babichenko~\cite{Bab-GEB13} studies large binary-action {\em anonymous} games.
Anonymity is exploited to create
a randomised dynamic on pure strategy profiles that with high
probability converges to a pure approximate
equilibrium in $O(n \log n)$ steps.
% Anonymity is exploited to create
% a randomised dynamic on pure strategy profiles that with high
% probability converges to an approximate $2\gamma \epsilon$ pure
% Nash equilibrium in $O(n \log n)$ steps.
% In this scenario, $\epsilon$ is an approximation parameter
% for the aggregate function of the game and each step consists
% of choosing a player at uniform random to best respond to the
% current pure strategy profile.

%literature on query complexity
Payoff query complexity has been recently studied as a measure of the difficulty
of computing game-theoretic solutions, for various classes of games.
Upper and lower bounds on query complexity have been obtained
for bimatrix games~\cite{FGGS,FS}, congestion games~\cite{FGGS},
and anonymous games~\cite{GT14}.
For general $n$-player games (where the payoff function is exponential
in $n$), the query complexity is exponential in $n$ for exact Nash,
also exact correlated equilibria~\cite{HN13}; likewise for approximate
equilibria with deterministic algorithms
(see also \cite{BB13}). For randomised algorithms,
query complexity is exponential for {\em well-supported} approximate
equilibria~\cite{B13}, which has since been strengthened to
any $\epsilon$-Nash equilibria~\cite{CCT15}.
With randomised algorithms, the query complexity of approximate
correlated equilibrium is $\Theta(\log n)$ for any positive $\epsilon$
\cite{GR14}.

Our main result applies in the setting of {\em completely uncoupled dynamics} in equilibria computation. These dynamics have been studied extensively: Hart and Mas-Colell~\cite{HMC00} show that there exist finite-memory uncoupled strategies that lead to pure Nash equilibria in every game where they exist. Also, there exist finite memory uncoupled strategies that lead to $\epsilon$-NE in every game. Young's interactive trial and error \cite{Y09} outlines completely uncoupled strategies that lead to pure Nash equilibria with high probability when they exist. Regret testing from Foster and Young \cite{FY06} and its $n$-player extension by Germano and Lugosi in \cite{GL05} show that there exist completely uncoupled strategies that lead to an $\epsilon$-Nash equilibrium with high probability. Randomisation is essential in all of these approaches, as Hart and Mas-Colell~\cite{HMC03} show that it is impossible to achieve convergence to Nash equilibria for all games if one is restricted to deterministic uncoupled strategies. This prior work is not concerned with rate of convergence; by contrast here we obtain efficient bounds on runtime.
Convergence in adaptive dynamics for exact Nash equilibria is also studied by Hart and Mansour in \cite{HM10} where they provide exponential lower bounds via communication complexity results. Babichenko~\cite{B13} also proves an exponential lower bound on the rate of convergence of adaptive dynamics to an approximate Nash equilibrium for general binary games. Specifically, he proves that there is no $k$-queries dynamic that converges to an $\epsilon$-WSNE in $\frac{2^{\Omega(n)}}{k}$ steps with probability of at least $2^{-\Omega(n)}$ in all $n$-player binary games. Both of these results motivate the study of specific subclasses of these games, such as the ``large'' games studied here.

\section{Preliminaries}\label{sec:prelim}

We consider games with $n$ players where each player has $k$ actions
$\cA = \{0,1,...,k-1\}$.
Let $a = (a_i, a_{-i})$ denote an~\emph{action
profile} in which player $i$ plays action $a_i$ and the remaining
players play action profile $a_{-i}$. We also consider~\emph{mixed
strategies}, which are defined by the probability distributions over
the action set $\cA$. We write $p = (p_i, p_{-i})$ to denote
a~\emph{mixed-strategy profile} where $p_i$ is a distribution over $\mathcal{A}$ corresponding to the $i$-th player's mixed strategy. To be more precise, $p_i$ is a vector $ (p_{ij})_{j=1}^{k-1}$ such that $\sum_{j=1}^{k-1}p_{ij} \leq 1$ where $p_{ij}$ denotes the $i$-th player's probability mass on her $j$-th strategy. Furthermore, we denote $p_{i0} = 1 - \sum_{j=1}^{k-1}p_{ij}$ to be the implicit probability mass the $i$-th player places on her $0$-th pure strategy.

Each player $i$ has a payoff function
$u_i\colon \cA^n \rightarrow [0, 1]$ mapping an action profile to some
value in $[0,1]$. We will sometimes write
$u_i(p) = \Ex{a\sim p}{u_i(a)}$ to denote the expected payoff of
player $i$ under mixed strategy $p$. An action $a$ is player
$i$'s~\emph{best response} to mixed strategy profile $p$ if
$a \in \argmax_{j\in \mathcal{A}} u_i(j , p_{-i})$.

We assume our algorithms or the players have no other prior knowledge of
the game but can access payoff information through querying
a~\emph{payoff oracle} $\cQ$. For each~\emph{payoff query}
specified by an action profile $a\in \cA^n$, the query oracle will
return $(u_i(a))_{i=1}^n$, the $n$-dimensional vector of payoffs to
each player. Our goal is to compute an~\emph{approximate Nash equilibrium} with a small number of queries.
In the completely uncoupled setting, a query works as follows:
each player $i$ chooses her own action $a_i$ independently of the other
players, and learns her own payoff $u_i(a)$ but no other payoffs.

\begin{definition}[Regret; (approximate) Nash equilibrium]
Let $p$ be a mixed strategy profile, the~\emph{regret} for player $i$ at $p$ is
\[
  reg(p, i) = \max_{j\in\mathcal{A}} \Ex{a_{-i}\sim p_{-i}}{u_i(j,
    a_{-i})} - \Ex{a \sim p}{u_i(a)}.
\]
A mixed strategy profile $p$ is an $\eps$-\emph{approximate Nash
  equilibrium} ($\epsilon$-NE) if for each player $i$, the regret satisfies
$reg(p, i) \leq \eps$.
\end{definition}

In section \ref{sec:various-lipschitz} we will address the stronger notion of a {\em well-supported} approximate Nash equilibrium. In essence, such an equilibrium is a mixed-strategy profile where players only place positive probability on actions that are approximately optimal. In order to precisely define this, we introduce $supp(p_i) = \{j \in \mathcal{A} \ | \ p_{ij} > 0 \}$ to be the set of actions that are played with positive probability in player $i$'s mixed strategy $p_i$.

\begin{definition}[Well-supported approximate Nash equilibrium]\label{def:WSNE}
A mixed-strategy profile $p = (p_i)_{i=1}^n$ is an $\epsilon$ well-supported Nash Equilibrium ($\epsilon$ -WSNE) if and only if the following holds for all players $i \in [n]$:
$$
j \in supp(p_i) \Rightarrow \max_{\ell\in\mathcal{A}} \Ex{a_{-i}\sim p_{-i}}{u_i(\ell,
    a_{-i})} - u_i(j) < \epsilon
$$

\end{definition}

An $\epsilon$-WSNE is always an $\epsilon$-NE, but the converse is not necessarily true as a player may place probability mass on strategies that are more than $\epsilon$ from optimal yet still maintain a low regret in the latter. 

\begin{observation}
To find an exact Nash (or even, correlated)
equilibrium of a large game, in the worst case it is necessary to query the game
exhaustively, even with randomised algorithms. This uses a
similar negative result for general games due to~\cite{HN13},
and noting that we can obtain a strategically equivalent $\gamma$-large game
(Def.~\ref{def:gammalarge}), by scaling down the payoffs into the interval $[0,\gamma]$.
\end{observation}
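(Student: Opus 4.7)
The plan is to establish this by a direct reduction from the general-game lower bound of Hart and Nisan \cite{HN13}. Concretely, I would take an arbitrary $n$-player game $G$ with payoffs $u_i\colon \cA^n \to [0,1]$ for which \cite{HN13} shows that finding an exact Nash (respectively correlated) equilibrium requires exhaustive querying, even with randomisation. I would then define the scaled game $G'$ with payoffs $u'_i = \gamma \cdot u_i$ for every player $i$, so that all payoffs of $G'$ lie in $[0,\gamma]$.

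Next I would verify the two properties needed for the reduction. First, $G'$ is $\gamma$-large in the sense of Definition~\ref{def:gammalarge}: since each $u'_i$ takes values in an interval of length $\gamma$, a unilateral deviation by one player changes any other player's payoff by at most $\gamma$, which is exactly the largeness condition. Second, $G$ and $G'$ are strategically equivalent: multiplying every player's payoff by the same positive constant $\gamma$ preserves best-response correspondences, hence the sets of (exact) Nash equilibria and of correlated equilibria are identical in $G$ and $G'$.

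With these two facts in hand, the reduction is immediate. Any randomised algorithm $A$ that, given payoff-oracle access to a $\gamma$-large game, outputs an exact Nash (or correlated) equilibrium using strictly fewer than $k^n$ queries could be turned into an algorithm for arbitrary games: given oracle access to $G$, simulate $A$ on $G'$ by answering each query $a$ with $\gamma \cdot (u_i(a))_{i=1}^n$, each entry of which requires exactly one query to $G$. The output is an equilibrium of $G'$, and by strategic equivalence also of $G$, contradicting the \cite{HN13} lower bound. Hence exhaustive querying is necessary for exact Nash (and correlated) equilibria in large games as well.

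The only real subtlety is being careful that the reduction preserves the query-complexity accounting (each simulated query to $G'$ costs exactly one query to $G$, and the scaling by $\gamma$ is done locally without extra queries) and that strategic equivalence holds for the solution concept in question; both are routine, so I would not anticipate any substantive obstacle.
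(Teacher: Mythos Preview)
Your proposal is correct and follows exactly the argument the paper sketches within the observation itself: scale an arbitrary game's payoffs by $\gamma$ to obtain a strategically equivalent $\gamma$-large game, then invoke the Hart--Nisan lower bound. The paper gives no further detail beyond that one-sentence hint, so your fleshed-out reduction (verifying largeness from the range $[0,\gamma]$, strategic equivalence under positive scaling, and one-to-one query simulation) is precisely what is intended.
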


We will assume the following~\emph{largeness} condition in our games.
Informally, such largeness condition implies that no single
player has a large influence on any other player's utility function.
\begin{definition}[Large Games]\label{def:gammalarge}
  A game is~\emph{$\gamma$-large} if for any two distinct players $i\neq j$,
  any two distinct actions $a_j$ and $a_j'$ for player $j$, and any
  tuple of actions $a_{-j}$ for everyone else:
\[
 |u_i(a_j, a_{-j}) - u_i(a_j', a_{-j})| \leq \gamma \in [0,1].
\]
\end{definition}
We will call $\gamma$ the \emph{largeness parameter} of the game;
in \cite{AS13} this quantity is called the Lipschitz value of the game.
One immediate implication of the largeness assumption is the following
Lipschitz property of the utility functions.

\begin{restatable}{lemma}{lips}\label{lem:lips}
  For any player $i\in [n]$, and any action $j\in \cA$, the fixed
  utility function $u_i(j, p_{-i}):[0,1]^{(n-1)\times (k-1)} \rightarrow [0,1]$ is a $\gamma$-Lipschitz function of the
  second argument $p_{-i} \in [0,1]^{(n-1)\times (k-1)} $ w.r.t. the $\ell_1$ norm.
\end{restatable}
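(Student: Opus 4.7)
The plan is to reduce to a single-player change via a hybrid argument, and then exploit largeness to bound the effect of changing one player's mixed strategy. Write $p_{-i}$ and $q_{-i}$ for two mixed profiles of the players other than $i$. Enumerate the players $\ell \neq i$ as $\ell_1, \ldots, \ell_{n-1}$, and define hybrid profiles $r^{(0)} = p_{-i}, r^{(1)}, \ldots, r^{(n-1)} = q_{-i}$, where $r^{(t)}$ agrees with $q_{-i}$ on players $\ell_1, \ldots, \ell_t$ and with $p_{-i}$ on the remaining players. By the triangle inequality,
\[
|u_i(j, p_{-i}) - u_i(j, q_{-i})| \;\le\; \sum_{t=1}^{n-1} \bigl|u_i(j, r^{(t-1)}) - u_i(j, r^{(t)})\bigr|,
\]
so it suffices to bound each single-player step by $\gamma$ times the $\ell_1$ change in that player's mixed strategy.

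For a fixed step $t$, only player $\ell_t$'s mixed strategy changes, from $p_{\ell_t}$ to $q_{\ell_t}$. Let $s$ denote the (fixed) strategies of all other players (including $i$'s action $j$), and set $v_a := u_i(j,\, a,\, s_{-\ell_t})$ for each pure action $a \in \cA$ of player $\ell_t$. The $\gamma$-largeness assumption applied to player $\ell_t$ and player $i$ gives $|v_a - v_{a'}| \le \gamma$ for all $a, a' \in \cA$. By linearity of expectation,
\[
u_i(j, r^{(t-1)}) - u_i(j, r^{(t)}) \;=\; \sum_{a\in\cA} \bigl(p_{\ell_t, a} - q_{\ell_t, a}\bigr)\, v_a.
\]
Since $\sum_a (p_{\ell_t, a} - q_{\ell_t, a}) = 0$, subtracting any constant $c$ from each $v_a$ leaves the sum unchanged; choosing $c = \min_a v_a$ and using $|v_a - c|\le\gamma$ gives, via the standard identity equating half the $\ell_1$ distance with total variation,
\[
\bigl|u_i(j, r^{(t-1)}) - u_i(j, r^{(t)})\bigr| \;\le\; \gamma \cdot \mathrm{TV}(p_{\ell_t}, q_{\ell_t}) \;=\; \tfrac{\gamma}{2}\sum_{a=0}^{k-1} |p_{\ell_t,a} - q_{\ell_t,a}|.
\]

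The last bookkeeping step converts this to the $\ell_1$ norm on the $(k-1)$-dimensional representation used in the lemma. Since $p_{\ell_t, 0} - q_{\ell_t, 0} = -\sum_{a=1}^{k-1}(p_{\ell_t, a} - q_{\ell_t, a})$, the triangle inequality gives $|p_{\ell_t,0} - q_{\ell_t,0}| \le \sum_{a=1}^{k-1}|p_{\ell_t,a} - q_{\ell_t,a}|$, so $\mathrm{TV}(p_{\ell_t}, q_{\ell_t}) \le \sum_{a=1}^{k-1}|p_{\ell_t,a} - q_{\ell_t,a}|$. Summing the per-step bounds over $t$ yields
\[
|u_i(j, p_{-i}) - u_i(j, q_{-i})| \;\le\; \gamma \sum_{\ell \neq i}\sum_{a=1}^{k-1}|p_{\ell, a} - q_{\ell, a}| \;=\; \gamma\, \|p_{-i} - q_{-i}\|_1,
\]
as required. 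The only genuinely delicate point is the single-player step: the rest is a hybrid argument plus a minor translation between total variation and the $\ell_1$ norm on the truncated parametrisation of the simplex. I do not anticipate any obstacle beyond being careful that the largeness hypothesis — stated for pure deviations — extends to mixed deviations via the zero-sum-of-differences trick above.
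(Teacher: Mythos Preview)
Your proof is correct, and it follows a somewhat different route from the paper's own argument. The paper decomposes the path from $q$ to $q'$ \emph{coordinate-by-coordinate}: it changes one entry $q_{ij}$ at a time (so $(n-1)(k-1)$ steps), and spends most of its effort showing that there is an ordering of the coordinates so that every intermediate vector is still a valid mixed profile (i.e.\ stays in the product of simplices). Each single-coordinate step then contributes $\gamma|\delta_{ij}|$ directly by largeness, and the telescoping sum gives $\gamma\|q'-q\|_1$. Your decomposition is instead \emph{player-by-player}: only $n-1$ hybrid steps, no validity issue since both endpoints $p_{\ell_t}$ and $q_{\ell_t}$ are already valid mixed strategies, and the single-player bound comes from the total-variation inequality plus the translation back to the $(k-1)$-coordinate $\ell_1$ norm. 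Your approach is cleaner in that it sidesteps the simplex-feasibility argument entirely; the paper's approach is slightly more elementary in that it never invokes total variation and works purely with the truncated parametrisation throughout. One small point worth making explicit: when you assert $|v_a - v_{a'}|\le\gamma$, the other players in $s_{-\ell_t}$ are playing mixed strategies, so this bound is obtained by averaging the pure-profile largeness inequality over the randomness of those players---trivial, but worth a clause since the definition of largeness is stated only for pure profiles.
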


\begin{proof}
  Without loss of generality consider $i = 1$ and $j = 0$. Let
  $q = p_{-1}$ and $q' = p'_{-1}$ be two mixed strategy profiles for
  the other players. For $i \geq 2$ and $j \in \mathcal{A} \setminus \{0\}$, let $\delta_{ij} = q'_{ij} - q_{ij}$. Note that $\|q - q'\|_1 = \sum_{ij} |\delta_{ij}|$.

Let $e_{ij}$ be the unit vector that has a 1 in the $(ij)$-th entry and 0 elsewhere. We first show that there exists an ordering of the discrete set $\{ (ij) \ | \ 2 \leq i \leq n,  \ 1 \leq j \leq k \}$ denoted by $\{\alpha_1, \alpha_2,..., \alpha_{(n-1)(k-1)}\}$ such that for all $\ell = 1,...,(n-1)(k-1)$, the vector $q_\ell = q + \sum_{i=1}^\ell \delta_{\alpha_i} e_{\alpha_i}$ represents valid mixed strategy profiles for players $i \geq 2$.

Suppose that we fix $i$, and consider $q_i$ and $q'_i $ as the mixed strategies of player $i$ arising in $q$ and $q'$. We recall that these are vectors in $[0,1]^{k-1}$ whose components sum is less than 1. We consider two cases. In the first, suppose that there exists a $j$ such that $\delta_{ij} < 0$ by definition, $\delta_{ij} < q_{ij}$, hence $q_i + \delta_{ij} e_j$ is a valid mixed strategy for player $i$.

In the second, suppose that $\delta_{ij} > 0$ for all $j$. Now suppose that $\delta_{ij} > q_{i0} = 1 - \sum_{j=1}^{k-1} q_{ij}$ for all $j$. If such is the case then $q'_{i}$ cannot possibly be a valid mixed strategy for player $i$, hence it must be the case that for some $j$, $\delta_{ij} < q_{i0}$, hence once again $q_i + \delta_{ij} e_j$ is a valid mixed strategy for player $i$. 

Since such a choice of valid updates by $\delta_{ij}$ can always be found for valid $q_i$ and $q'_i$, we can recursively find valid shifts by $\delta_{ij}$ in a specific coordinate to reach $q'_i$ from $q_i$. If this is applied in order for all players $i \geq 2$, the aforementioned claim holds and indeed $q_\ell = q + \sum_{i=1}^\ell \delta_{\alpha_i} e_{\alpha_i}$ for some ordering $\{\alpha_1,...,\alpha_{(n-1)(k-1)}\}$. 

With this in hand, we can use telescoping sums and the largeness condition to prove our lemma. For simplicity of notation, in what follows we assume that $q_0 = q$, and we recall that by definition $q_{(n-1)(k-1)} = q'$.

\begin{align*}
  |u_i(j, q') - u_i(j, q)| &=   \left|\sum_{\ell = 1}^{(n-1)(k-1)} u_i(j, q_\ell) - u_i(j, q_{\ell-1}) \right| \\
  \mbox{(Triangle Inequality)}\qquad &\leq \sum_{\ell = 1}^{(n-1)(k-1)} \left| u_i(j, q_\ell) - u_i(j, q_{\ell-1}) \right|\\
  \mbox{(Definition of Largeness)}\qquad &\leq \sum_{\ell = 1}^{(n-1)(k-1)} \gamma |\delta_{\alpha_\ell}| = \gamma \|q' - q\|_1
\end{align*}
which proves our claim.
\end{proof}

From now on until Section~\ref{sec:extensions} we will focus on $\frac{1}{n}$-large
binary action games where $\mathcal{A} = \{0,1\}$ and $\gamma =\frac{1}{n}$.
The reason for this is that the techniques we introduce can be more conveniently
conveyed in the special case of $\gamma=\frac{1}{n}$, and subsequently extended
to general $\gamma$.

Recall that $p_i$ denotes a mixed strategy of player $i$.
In the special case of binary-action games, we slightly abuse the notation
to let $p_i$ denote the probability that player $i$ plays 1 (as opposed to
0), since in the binary-action case, this single probability describes $i$'s
mixed strategy.

The following notion of~\emph{discrepancy} will be useful.
\begin{definition}[Discrepancy]
Letting $p$ be a mixed strategy profile, the~\emph{discrepancy} for player $i$ at $p$ is
\[
disc(p, i) = \left| \Ex{a_{-i}\sim p_{-i}}{u_i(0, a_{-i})} - \Ex{a_{-i}\sim p_{-i}}{u_i(1, a_{-i})} \right|.
\]
\end{definition}

\begin{paragraph}{Estimating payoffs for mixed profiles}
We can approximate the expected payoffs for any mixed strategy profile
by repeated calls to the oracle $\cQ$. In particular, for any target
accuracy parameter $\beta$ and confidence parameter $\delta$, consider
the following procedure to implement an oracle $\cQ_{\beta, \delta}$:
\begin{itemize}
\item For any input mixed strategy profile $p$, compute a new mixed
  strategy profile $p' = (1 - \frac{\beta}{2})p + (\frac{\beta}{2})\mathbf{1}$ such
  that each player $i$ is playing uniform distribution with
  probability $\frac{\beta}{2}$ and playing distribution $p_i$ with
  probability $1 - \frac{\beta}{2}$.
\item Let $N = \frac{64}{\beta^3} \log\left( 8n /\delta \right)$, and
  sample $N$ payoff queries randomly from $p'$, and call the oracle
  $\cQ$ with each query as input to obtain a payoff vector.
\item Let $\hat u_{i,j}$ be the average sampled payoff to player $i$
  for playing action $j$.\footnote{If the player $i$ never plays an
    action $j$ in any query, set $\hat u_{i,j} = 0$.} Output the
  payoff vector $(\hat{u}_{ij})_{i\in [n], j\in\{0, 1\}}$.
\end{itemize}
\end{paragraph}

\begin{lemma}\label{lem:approx-query}
  For any $\beta, \delta \in (0, 1)$ and any mixed strategy profile
  $p$, the oracle $\cQ_{\beta, \delta}$ with probability at least
  $1 - \delta$ outputs a payoff vector
  $(\hat u_{i,j})_{i\in [n], j\in\{0, 1\}}$ that has an additive error of at
  most $\beta$, that is for each player $i$, and each action
  $j\in \{0, 1\}$,
\[
  |u_i(j, p_{-i}) - \hat{u}_{i, j}| \leq \beta.
\]
\end{lemma}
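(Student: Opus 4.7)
The plan is to decompose the error $|u_i(j, p_{-i}) - \hat{u}_{i,j}|$ into a \emph{bias} term coming from the smoothing step $p \mapsto p'$, and a \emph{sampling} term coming from the finite number of queries, then apply \Cref{lem:lips} to the first and a Chernoff/Hoeffding argument to the second before taking a union bound.

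For the bias term, I would observe that in the binary-action case the definition $p'_i = (1-\beta/2)p_i + \beta/4$ gives $|p'_i - p_i| \le \beta/2$ coordinatewise, so $\|p'_{-i}-p_{-i}\|_1 \le (n-1)\beta/2$. Combining with \Cref{lem:lips} and $\gamma = 1/n$,
\[
|u_i(j, p'_{-i}) - u_i(j, p_{-i})| \;\le\; \frac{1}{n}\cdot\frac{(n-1)\beta}{2} \;\le\; \frac{\beta}{2}.
\]
So it suffices to show that with high probability $|\hat u_{i,j} - u_i(j, p'_{-i})| \le \beta/2$ for every $i, j$.

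For the sampling term, let $M_{i,j}$ be the number of the $N$ queries in which player $i$ played action $j$. Under $p'$, player $i$ plays action $j$ with probability $p'_{ij} \ge \beta/4$, so $\mathbb{E}[M_{i,j}] \ge N\beta/4$, and a multiplicative Chernoff bound gives $\Pr[M_{i,j} < N\beta/8] \le \exp(-N\beta/32)$. Conditional on the \emph{identities} of the queries in which player $i$ picked action $j$, by independence of the players' mixing under $p'$ the remaining coordinates are i.i.d.\ draws from $p'_{-i}$, so the payoffs being averaged are i.i.d.\ with mean $u_i(j, p'_{-i})$ and range $[0,1]$. By Hoeffding, conditional on $M_{i,j} = m \ge N\beta/8$,
\[
\Pr\!\left[|\hat u_{i,j} - u_i(j, p'_{-i})| > \beta/2 \,\middle|\, M_{i,j}=m\right] \;\le\; 2\exp\!\left(-2m(\beta/2)^2\right) \;\le\; 2\exp(-N\beta^3/16).
\]
Plugging in $N = (64/\beta^3)\log(8n/\delta)$ makes both failure probabilities at most $\delta/(4n)$, and a union bound over the $2n$ pairs $(i,j)$ yields total failure probability at most $\delta$.

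I expect the only subtlety to be the conditioning argument when converting ``average over the random subset of queries where $i$ played $j$'' into an honest i.i.d.\ average so that Hoeffding applies cleanly; the right framing is to note that, because the players' coordinates are independent under $p'$, conditioning on the event $\{a_i^{(t)}=j\}$ for a particular sample does not change the distribution of $a_{-i}^{(t)}$. Everything else is routine: Lipschitz for the bias, Chernoff for the count, Hoeffding for the mean, and a union bound over $2n$ (player, action) pairs.
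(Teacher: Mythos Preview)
Your proposal is correct and is essentially the argument the paper is pointing to: the paper's own ``proof'' is the single sentence ``The lemma follows from Proposition~1 of~\cite{GR14} and the largeness property,'' and your write-up unpacks exactly that --- the largeness/Lipschitz step (\Cref{lem:lips} with $\gamma=1/n$) handles the bias from $p\mapsto p'$, and the Chernoff-then-Hoeffding step is the content of the cited proposition. Your conditioning observation (that under $p'$ the coordinates are independent, so conditioning on $a_i^{(t)}=j$ leaves $a_{-i}^{(t)}\sim p'_{-i}$) is the right way to justify the i.i.d.\ application of Hoeffding, and your constants check out against $N=(64/\beta^3)\log(8n/\delta)$. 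In short, you have supplied the details the paper omits, by the same route.
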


The lemma follows from Proposition 1 of~\cite{GR14} and the largeness property.

\begin{paragraph}{Extension to Stochastic Utilities.}
We consider a generalisation
where the utility to player $i$ of any pure profile $a$ may consist of a probability
distribution $D_{a,i}$ over $[0,1]$, and if $a$ is played, $i$ receives a
sample from $D_{a,i}$.
The player wants to maximise her expected utility with respect to
sampling from a (possibly mixed) profile, together with sampling from
any $D_{a,i}$ that results from $a$ being chosen.
If we extend the definition of $\cQ$ to output samples of the $D_{a,i}$ for
any queried profile $a$, then $\cQ_{\beta,\delta}$ can be defined in
a similar way as before, and simulated as above using samples from $\cQ$.
Our algorithmic results extend to this setting.
\end{paragraph}

%%% Local Variables:
%%% mode: latex
%%% TeX-master: "large-games-sagt.tex"
%%% End:

\section{Warm-up: $0{\cdot}25$-Approximate Equilibrium}\label{sec:warmup}

In this section, we exhibit some simple procedures whose general approach is
to query a constant number of mixed strategies (for which additive approximations
to the payoffs can be obtained by sampling).
Observation~\ref{obs:half} notes that a $\frac{1}{2}$-approximate Nash equilibrium
can be found without using any payoff queries:

\begin{observation}\label{obs:half}
Consider the following ``uniform''
mixed strategy profile. Each player puts $\frac{1}{2}$ probability mass on
each action: for all $i$, $p_{i} = \frac{1}{2}$. Such a mixed strategy
profile is a $\frac{1}{2}$-approximate Nash equilibrium.
\end{observation}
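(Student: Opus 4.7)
The plan is to fix an arbitrary player $i$ and directly compute her regret at the uniform profile $p$ where $p_j = \tfrac{1}{2}$ for every player $j$. The key idea is that when player $i$ herself is mixing uniformly, her expected payoff is the equally-weighted average of her expected payoffs from the two pure-strategy deviations, so the regret reduces to a simple function of the discrepancy $\operatorname{disc}(p,i)$ defined just above the observation.

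Concretely, writing $v_0 := \mathbb{E}_{a_{-i}\sim p_{-i}}[u_i(0, a_{-i})]$ and $v_1 := \mathbb{E}_{a_{-i}\sim p_{-i}}[u_i(1, a_{-i})]$, I would first expand
\[
\mathbb{E}_{a\sim p}[u_i(a)] = \tfrac{1}{2} v_0 + \tfrac{1}{2} v_1,
\]
using $p_i = \tfrac{1}{2}$. Then
\[
\operatorname{reg}(p, i) \;=\; \max(v_0,v_1) - \tfrac{1}{2}(v_0 + v_1) \;=\; \tfrac{1}{2}|v_0 - v_1| \;=\; \tfrac{1}{2}\operatorname{disc}(p, i).
\]
Since payoffs lie in $[0,1]$, both $v_0, v_1 \in [0,1]$, hence $|v_0 - v_1| \le 1$ and $\operatorname{reg}(p,i) \le \tfrac{1}{2}$. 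Because $i$ was arbitrary, $p$ is a $\tfrac{1}{2}$-NE.

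There is essentially no obstacle here: the argument is a one-line computation that does not even invoke the largeness property, relying only on the fact that utilities are bounded in $[0,1]$ and that $i$'s own mixing weight is $\tfrac{1}{2}$. The observation should therefore be stated as a short remark, perhaps noting explicitly that the bound holds for \emph{any} $n$-player binary-action game with payoffs in $[0,1]$, and that the rest of the paper's work is in bringing the approximation below $\tfrac{1}{2}$ by exploiting largeness.
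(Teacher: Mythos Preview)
Your argument is correct and is exactly the natural one-line computation the paper has in mind; the paper in fact states this as a bare observation with no written proof, so your expansion via $\operatorname{reg}(p,i)=\tfrac{1}{2}\operatorname{disc}(p,i)\le\tfrac{1}{2}$ is the intended justification. Your remark that largeness is not used here is also apt.
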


We present two algorithms that build on Observation~\ref{obs:half} to
obtain better approximations than $\frac{1}{2}$.
For simplicity of presentation,
we assume that we have access to a mixed strategy query oracle
$\cQ_M$ that returns exact expected payoff values for any input mixed
strategy $p$. Our results continue to hold if we replace $\cQ_M$ by
$\cQ_{\beta, \delta}$.~\footnote{In particular, if we use $\cQ_{\beta,\delta}$
for our query access, then with probability at least $1-\delta$
we will get $(\eps + O(\beta))$-approximate equilibrium,
where $\eps$ is the approximation performance obtainable via access to $\cQ_M$.}

\paragraph{\bf Obtaining $\eps= 0{\cdot}272$.}
{
First, we show that having each player making small adjustment from the
``uniform'' strategy can improve $\eps$ from $\frac{1}{2}$ to around
$0{\cdot}27$. We simply let players with large regret shift more
probability weight towards their best responses. More formally,
consider the following algorithm {\bf OneStep} with two parameters
$\alpha, \Delta\in [0, 1]$:
\begin{itemize}
\item Let the players play the ``uniform'' mixed strategy. Call the
  oracle $\cQ_{M}$ to obtain the payoff values of $u_i(0, p_{-i})$ and
  $u_i(1, p_{-i})$ for each player $i$.
\item For each player $i$, if
  $u_i(0, p_{-i}) - u_i(1, p_{-i}) > \alpha$, then set $p_{i} = \frac{1}{2} - \Delta$;
  if $u_i(1, p_{-i}) - u_i(0, p_{-i}) > \alpha$, set $p_{i} = \frac{1}{2} + \Delta$;
  otherwise keep playing $p_{i} = \frac{1}{2}$.
\end{itemize}

\begin{restatable}{theorem}{onestep}
\label{lem:onestep}
If we use algorithm {\bf OneStep} with parameters $\alpha = 2 - \sqrt{\frac{11}{3}}$ and
$\Delta = \sqrt{\frac{11}{48}} - \frac{1}{4}$, then the resulting mixed strategy profile is an
$\eps$-approximate Nash equilibrium with $\eps \leq 0{\cdot}272$.
\end{restatable}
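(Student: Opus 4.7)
The plan is to bound each player's regret at the post-shift profile $p^1$ by a three-way case analysis that depends only on the signed discrepancy $d_i(p) := u_i(1,p_{-i}) - u_i(0,p_{-i})$ at the initial uniform profile $p^0$ and at $p^1$. The starting step is a Lipschitz estimate: by Lemma~\ref{lem:lips}, both $u_i(0,\cdot)$ and $u_i(1,\cdot)$ are $\gamma$-Lipschitz with $\gamma = 1/n$, and since every other player's marginal on action $1$ shifts by at most $\Delta$ we have $\|p^1_{-i}-p^0_{-i}\|_1 \le (n-1)\Delta$. Applying Lipschitz to each of the two utilities gives
\[
|d_i(p^1) - d_i(p^0)| \;\le\; 2\gamma(n-1)\Delta \;\le\; 2\Delta.
\]
The factor of $2$ from the two Lipschitz applications is what drives the final constants.

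The cases, with $d^0 := d_i(p^0)$ and $d^1 := d_i(p^1)$, are as follows. \textbf{(A)} If $|d^0| \le \alpha$, player $i$ keeps $p_i = \tfrac12$, so her regret equals $\tfrac12 |d^1| \le \tfrac12(\alpha + 2\Delta)$. \textbf{(B1)} If $|d^0| > \alpha$ (WLOG $d^0 > \alpha$) and $d^1 \ge 0$, she plays $p_i = \tfrac12 + \Delta$, the best response is still action $1$, and a direct computation gives regret $(\tfrac12 - \Delta)\,d^1 \le \tfrac12 - \Delta$ (using $d^1 \le 1$). \textbf{(B2)} If $d^0 > \alpha$ but $d^1 < 0$, the Lipschitz bound forces $|d^1| \le 2\Delta - \alpha$, so the regret, now $(\tfrac12 + \Delta)\,|d^1|$, is at most $(\tfrac12 + \Delta)(2\Delta - \alpha)$. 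The case $d^0 < -\alpha$ is symmetric.

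It remains to verify that the chosen $\alpha, \Delta$ balance these three bounds. Setting (A) $=$ (B1) yields $\alpha = 1 - 4\Delta$; substituting this into (A) $=$ (B2) reduces to
\[
6\Delta^2 + 3\Delta - 1 = 0,
\]
whose positive root is $\Delta = (\sqrt{33}-3)/12 = \sqrt{11/48} - \tfrac14$, and then $\alpha = 2 - \sqrt{11/3}$, exactly matching the theorem. The common value of the three upper bounds is $\tfrac12 - \Delta = (9-\sqrt{33})/12$, which is just below $0{\cdot}272$. The conceptually interesting step is identifying case (B2): even when the discrepancy observed at $p^0$ is large enough for the player to confidently pick a direction, the simultaneous shifts of the other players can flip the true best response at $p^1$, and the threshold $\alpha$ must be tuned so that the magnitude $|d^1|$ available in such a flip is small enough for $(\tfrac12+\Delta)|d^1|$ to match cases (A) and (B1). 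Everything else is a routine Lipschitz-plus-algebra calculation.
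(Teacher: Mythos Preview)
Your proof is correct and follows essentially the same three-case analysis as the paper: bound the change in discrepancy by $2\Delta$ via the Lipschitz lemma, then split according to whether $|d^0|\le\alpha$, or $|d^0|>\alpha$ with the best response unchanged, or $|d^0|>\alpha$ with the best response flipped, obtaining the same three upper bounds $\tfrac{\alpha}{2}+\Delta$, $\tfrac12-\Delta$, and $(\tfrac12+\Delta)(2\Delta-\alpha)$. You additionally carry out the algebra that equates the three bounds and recovers the stated $\alpha,\Delta$, which the paper omits.
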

}

\begin{proof}
  Let $p$ denote the ``uniform'' mixed strategy, and $p'$ denote the
  output strategy by {\bf OneStep}. We know that
  $\|p - p'\|_1 \leq n \Delta$. By~\Cref{lem:lips}, we know that for
  any player $i$ and action $j$,
  $|u_i(j, p_{-i}) - u_i(j, p_{-j}')|\leq \Delta$.

  Consider a player $i$ whose discrepancy in $p$ satisfies
  $disc(p,i) \leq \alpha$. Then such player's discrepancy in $p'$ is
  at most $disc(p', i)\leq \alpha + 2\Delta$, so her regret in $p'$ is
  bounded by
\begin{equation}\label{junk}
    reg(p', i) = p_i' \, disc(p', i) = disc(p', i)/2 \leq \alpha/2 +
    \Delta.
\end{equation}

Consider a player $i$ such that $disc(p,i)>\alpha$. Then we consider
two different cases. In the first case, the best response of player
$i$ remains the same in both profiles $p$ and $p'$. Since
$disc(p', i) \leq 1$, we can bound the regret by
\begin{equation}\label{junk1}
reg(p', i) = p_i' \, disc(p', i) = \left( \frac{1}{2} - \Delta \right).
\end{equation}

In the second case, the best response of player $i$ changes when the
profile $p$ changes to $p'$. In this case, the discrepancy is at most
$2\Delta - \alpha$, and so the regret is bounded by
\begin{equation}\label{junk2}
  reg(p', i) = p_i'\, disc(p', i) = \left( \frac{1}{2} + \Delta \right)(2\Delta - \alpha).
\end{equation}
By combining all cases from~\Cref{junk,junk1,junk2}, we know the
regret is upper-bounded by
\begin{equation}\label{junk3}
  reg(p', i) \leq \max \left( \frac{\alpha}{2}+ \Delta,
    \frac{1}{2}-\Delta, \frac{1}{2} (1 + 2\Delta) (2\Delta - \alpha)
  \right)
\end{equation}

By choosing values
\[
(\alpha^*, \Delta^*) = \left( 2 - \sqrt{\frac{11}{3}}, \sqrt{\frac{11}{48}} - \frac{1}{4}  \right) \approx (0{\cdot}085, 0{\cdot}229)
\]
The right hand side of~\Cref{junk3} is bounded by $0{\cdot}272$.  Thus if we
use the optimal $\alpha^*$ and $\Delta^*$ in our algorithm, we can
attain an $\epsilon = 0{\cdot}272$ approximate Nash equilibrium.
\end{proof}

\paragraph{\bf Obtaining $\epsilon = 0{\cdot}25$.}
{
% We now outline another one-step shift procedure that guarantees an
% $\epsilon =0.25$ equilibrium for all players. The procedure is as
% follows:
We now give a slightly more sophisticated algorithm than the previous
one. We will again have the players starting with the ``uniform''
mixed strategy, then let players shift more weights toward their best
responses, and finally let some of the players switch back to the
uniform strategy if their best responses change in the
adjustment. Formally, the algorithm {\bf TwoStep} proceeds as:
\begin{itemize}
\item Start with the ``uniform'' mixed strategy profile, and query the
  oracle $\cQ_M$ for the payoff values. Let $b_i$ be player $i$'s best
  response.
\item For each player $i$, set the probability of playing their best
  response $b_i$ to be $\frac{3}{4}$. Call $\cQ_M$ to obtain payoff values for
  this mixed strategy profile, and let $b'_i$ be each player $i$'s
  best response in the new profile.
\item For each player $i$, if $b_i\neq b'_i$, then resume playing
  $p_{i} = \frac{1}{2}$. Otherwise maintain the same mixed
  strategy from the previous step.
\end{itemize}

\begin{restatable}{theorem}{twostep}\label{lem:twostep}
  The mixed strategy profile output by {\bf TwoStep} is an
  $\eps$-approximate Nash equilibrium with $\eps \leq 0{\cdot}25$.
\end{restatable}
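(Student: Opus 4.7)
My plan is to classify each player into one of three cases based on how her best response evolves across the three profiles $p^0$ (initial uniform), $p^1$ (after the $\tfrac{3}{4}$-shift), and $p^{out}$ (the final output). Writing $\delta_i^\tau := u_i(1,p^\tau_{-i}) - u_i(0,p^\tau_{-i})$ for the signed discrepancy of player $i$ at profile $p^\tau$ (so the best response $b_i^\tau$ is determined by the sign of $\delta_i^\tau$), the cases are (A1) $b_i^0 = b_i^1 = b_i^{out}$, (A2) $b_i^0 = b_i^1 \neq b_i^{out}$, and (B) $b_i^0 \neq b_i^1$. In cases (A1) and (A2) the player plays $\tfrac{3}{4}$ on $b_i^0$ in $p^{out}$, while in case (B) she reverts to $\tfrac{1}{2}$. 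The main analytic tool will be \Cref{lem:lips} applied to $\phi_i(p_{-i}) := u_i(1,p_{-i}) - u_i(0,p_{-i})$, which is $\tfrac{2}{n}$-Lipschitz in $\ell_1$. Letting $m_i$ count the case-(B) players other than $i$, each other player's coordinate shifts by exactly $\tfrac{1}{4}$ between the relevant profile pair (or not at all), giving
\[
|\delta_i^1 - \delta_i^0| \le \tfrac{n-1}{2n},\qquad |\delta_i^{out} - \delta_i^0| \le \tfrac{n-1-m_i}{2n},\qquad |\delta_i^{out} - \delta_i^1| \le \tfrac{m_i}{2n}.
\]

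Next I would bound the regret per case. Case (A1) is immediate: regret $= \tfrac{1}{4}|\delta_i^{out}| \le \tfrac{1}{4}$. In case (A2) the regret equals $\tfrac{3}{4}|\delta_i^{out}|$, and since $\delta_i^{out}$ has opposite sign from both $\delta_i^0$ and $\delta_i^1$, each of the last two estimates becomes a one-sided cap on $|\delta_i^{out}|$; taking the tighter yields $|\delta_i^{out}| \le \min\{(n-1-m_i)/(2n),\, m_i/(2n)\} \le (n-1)/(4n)$, so regret $< \tfrac{3}{16}$. In case (B) the regret equals $\tfrac{1}{2}|\delta_i^{out}|$, and since $\delta_i^0,\delta_i^1$ straddle zero, whichever side $\delta_i^{out}$ lies on, the estimate against the $\delta_i^\tau$ on the opposite side of zero becomes one-sided, yielding $|\delta_i^{out}| \le \max\{(n-1-m_i)/(2n),\, m_i/(2n)\} \le (n-1)/(2n) < \tfrac{1}{2}$, hence regret $< \tfrac{1}{4}$.

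The main obstacle will be case (B): a naive application of the Lipschitz estimate against $\delta_i^0$ or $\delta_i^1$ alone bounds $|\delta_i^{out}|$ only by $1$, which would permit regret as large as $\tfrac{1}{2}$ and break the target. The saving insight is that the first-round best-response flip itself forces $\delta_i^0$ and $\delta_i^1$ onto opposite sides of zero, so that for whichever side $\delta_i^{out}$ occupies, one of the two Lipschitz estimates is automatically one-sided and caps $|\delta_i^{out}|$ by the appropriate of the two fractions above, independently of $m_i$. Combining the worst of the three case bounds then gives $\epsilon \le \tfrac{1}{4}$.
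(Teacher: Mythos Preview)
Your proof is correct. The core device---turning the Lipschitz estimate into a one-sided cap on $|\delta_i^{out}|$ whenever the signed discrepancy has crossed zero---is the same one the paper relies on, and your case (B) matches the paper's ``return to uniform'' case almost exactly (the paper uses the crude bound $\|p^{\tau}_{-i}-p^{\tau'}_{-i}\|_1\le (n-1)/4$ throughout and obtains $|\delta_i^{out}|\le\tfrac12$, just as your $\max$ does). Where you depart is the case split itself (by best-response evolution rather than by initial discrepancy and whether the player reverts) and, more substantively, the parameter $m_i$. In your case (A2) both $\delta_i^0$ and $\delta_i^1$ lie on the side opposite $\delta_i^{out}$, so you obtain \emph{two} one-sided caps and may take their minimum; since these two caps sum to $(n-1)/(2n)$, the minimum is at most $(n-1)/(4n)$ and the regret falls below $\tfrac{3}{16}$. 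This refinement is not cosmetic: with only the crude $\tfrac12$ bound, case (A2) would give regret $\le\tfrac38$ and overshoot the target. The paper's written argument for the corresponding subcase (its third case, best response flipping at $p''$) asserts a regret bound of $\tfrac18$, but the displayed reasoning there only justifies $\tfrac38$; your $m_i$ device supplies precisely the missing ingredient.
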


}

\begin{proof}
  Let $p$ denote the ``uniform'' strategy profile, $p'$ denote the
  strategy profile after the first adjustment, and $p''$ denote the
  output strategy profile by {\bf TwoStep}.

  For any player $i$, there are three cases regarding the discrepancy
  $disc(p, i)$.
\begin{enumerate}
\item The discrepancy $disc(p, i) > \frac{1}{2}$;
\item The discrepancy $disc(p, i) \leq \frac{1}{2}$ and player $i$ returns to the
  uniform mixed strategy at the end;
\item The discrepancy $disc(p, i) \leq \frac{1}{2}$ and player $i$ does not
  return to the uniform mixed strategy in the end. 
\end{enumerate}

Before we go through all the cases, the following facts are
useful. Observe that $\|p - p'\|, \|p - p''\|, \|p' - p''\|\leq n/4$,
so for any action $j$,
\begin{equation}\label{ohman}
\max\{|u_i(j, p_{-i}') - u_i(j, p_{-i}'')|, |u_i(j, p_{-i}) - u_i(j,
p_{-i}'), |u_i(j, p_{-i}) - u_i(j, p_{-i}'')| \}
\leq \frac{1}{4}
\end{equation}
It follows that
\[
  \max\{|disc(p',i) - disc(p'', i)|, |disc(p, i) - disc(p', i)|,
  |disc(p,i ) - disc(p'', i)| \} \leq \frac{1}{2}
\]
We will now bound the regret of player $i$ in the first case.  Since
in the mixed strategy profile $p$, the best response of player $i$ is
better than the other action by more than $\frac{1}{2}$. This means the best
response action will remain the same in $p'$ and $p''$ for this
player, and she will play this action with probability $\frac{3}{4}$ in the end,
so her regret is bounded by $\frac{1}{4}$.

Let us now focus on the second case where discrepancy
$disc(p, i) \leq \frac{1}{2}$ and player $i$ returns to the uniform strategy
of part 1. It is sufficient to show that the discrepancy at the end
satisfies $disc(p'', i) \leq \frac{1}{2}$. Without loss generality, assume
that the player best response in the ``uniform'' strategy profile is
action $b_i = 1$, and the best response after the first adjustment is
action $b_i = 0$. This means
\[
% 0 \leq u_i(1, p_{-i}) - u_i(0, p_{-i})\leq  \frac{1}{2}
u_i(1, p_{-i}) - u_i(0, p_{-i})\geq 0
\quad \mbox{and, }\quad
u_i(0, p_{-i}') - u_i(1, p_{-i}') \geq 0.
\]
By combining with~\Cref{ohman}, we have
\begin{align*}
  u_i(1, p_{-i}'') - u_i(0, p_{-i}'')\leq  u_i(1, p_{-i}') - u_i(0, p_{-i}') + \frac{1}{2}  \leq  \frac{1}{2}\\
  u_i(0, p_{-i}'') - u_i(1, p_{-i}'') \leq  u_i(0, p_{-i}) - u_i(1, p_{-i}) + \frac{1}{2} \leq \frac{1}{2}.
\end{align*}
Therefore, we know $disc(p'',i) \leq \frac{1}{2}$, and hence the regret
$reg(p'',i) \leq \frac{1}{4}$.

Finally, we consider the third case where
$disc(p, i) \leq \frac{1}{2}$ and player $i$ does not return to a
uniform strategy. Without loss generality, assume that action 1 is
best response for player $i$ in both $p$ and $p'$, and so
$u_i(1, p'_{-i}) \geq u_i(0, p'_{-i})$. By~\Cref{ohman}, we also have
\[
u_i(0, p_{-i}'') - u_i(1, p_{-i}'') \leq \frac{1}{2}.
\]
If in the end her best response changes to 0, then the regret is
bounded by $reg(p'',i) \leq \frac{1}{8}$. Otherwise if the best
response remains to be 1, then the regret is again bounded by
$reg(p'', i) \leq \frac{1}{4}$

Hence, in all of the cases above we could bound the player's regret by
$\frac{1}{4}$.\end{proof}

%%% Local Variables:
%%% mode: latex
%%% TeX-master: "large-games-sagt.tex"
%%% End:

\section{$\frac{1}{8}$-Approximate Equilibrium via Uncoupled Dynamics}

In this section, we present our main algorithm that achieves
approximate equilibria with $\eps \approx \frac{1}{8}$ in a completely
uncoupled setting. In order to arrive at this we first model game
dynamics as an uncoupled continuous-time dynamical system where a
player's strategy profile updates depend only on her own
mixed strategy and payoffs. Afterwards we present a discrete-time
approximation to these continuous dynamics to arrive at a query-based
algorithm for computing $(\frac{1}{8} + \alpha)$-Nash equilibrium
with query complexity logarithmic in the number of players.
Here, $\alpha>0$ is a parameter that can be chosen, and the number of mixed-strategy
profiles that need to be tested is inversely proportional to $\alpha$.
Finally, as mentioned in Section~\ref{sec:prelim}, we recall that these
algorithms carry over to games with stochastic utilities, for which we can
show that our algorithm uses an essentially optimal number of queries.

Throughout the section, we will rely on the following notion
of a~\emph{strategy/payoff state}, capturing the information
available to a player at any moment of time.

\begin{definition}[Strategy-payoff state]
  For any player $i$, the~\emph{strategy/payoff state} for player $i$
  is defined as the ordered triple $s_i = (v_{i1}, v_{i0}, p_i)\in [0, 1]^3$,
  where $v_{i1}$ and $v_{i0}$ are the player's utilities for playing
  pure actions 1 and 0 respectively, and $p_i$ denotes the player's
  probability of playing action 1. Furthermore, we denote the player's
  discrepancy by $D_i = |v_{i1} - v_{i0}|$ and we let $p_i^*$ denote the
  probability mass on the best response, that is if
  $v_{i1} \geq v_{i0}$, $p_i^* = p_i$, otherwise $p_i^* = 1-p_i$.
\end{definition}

\subsection{Continuous-Time Dynamics}

First, we will model game dynamics in continuous time, and assume that
a player's strategy/payoff state (and thus all variables it contains)
is a differentiable time-valued function. When we specify these values
at a specific time $t$, we will write
$s_i(t) = (v_{i1}(t), v_{i0}(t), p_i(t))$. Furthermore, for any
time-differentiable function $g$, we denote its time derivative by
$\dot{g} = \frac{d}{dt} g$.  We will consider continuous game dynamics
formally defined as follows.

\begin{definition}[Continuous game dynamic]
  A continuous game dynamic consists of an update function $f$ that specifies a player's strategy update at time $t$. Furthermore, $f$ depends only on $s_i(t)$ and $\dot{s}_i(t)$. In other words,
  $\dot{p}_i(t) = f(s_i(t), \dot{s}_i(t))$ for all $t$.
\end{definition}

\begin{observation}
We note that in this framework, a specific player's updates do not depend on
other players' strategy/payoff states nor their history of play.
This will eventually lead us to uncoupled Nash equilibria computation
in Section~\ref{sec:discrete-time-step}.
\end{observation}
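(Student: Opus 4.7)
The observation is essentially a syntactic consequence of the preceding Definition, so my plan is to spell out the two clauses rather than carry out any real computation. First I would expand the update rule: by definition, $\dot p_i(t) = f(s_i(t), \dot s_i(t))$, where $s_i(t) = (v_{i1}(t), v_{i0}(t), p_i(t))$. Every coordinate of both arguments carries the index $i$, so $f$ literally has no place to plug in a state $s_j$ for $j \neq i$. This gives the first clause: no explicit dependence on other players' states.

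For the second clause (no dependence on history of play), I would point out that both $s_i(t)$ and $\dot s_i(t)$ are evaluated at the single instant $t$, so the dynamic is first-order Markovian: there is no integral over $[0,t]$, no lookup of $s_i(t')$ for any earlier $t' < t$, and no additional side channel. Consequently, player $i$'s trajectory is fully determined by $f$ together with the current values of her own strategy/payoff state, which is precisely what makes the dynamic ``completely uncoupled'' in the sense of Hart and Mas-Colell~\cite{HMC03}.

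I do not expect any genuine obstacle here, but the clarification worth flagging, and the one step I would take care to articulate, is that information about the other players \emph{does} flow implicitly into $v_{i0}(t)$ and $v_{i1}(t)$ through the utility function $u_i$, since those payoffs are shaped by the opponents' current mixed strategies. The content of the observation is that this is the \emph{only} channel: $f$ itself never sees $s_{-i}$ or any past state, so the discrete-time implementation developed in Section~\ref{sec:discrete-time-step} can be run by each player using nothing more than her own observed payoffs and her own current mixing probability.
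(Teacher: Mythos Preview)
Your proposal is correct. The paper itself provides no proof for this observation; it is stated as an immediate remark following the definition of a continuous game dynamic, and your unpacking of the two clauses (no $s_j$ for $j\neq i$ appears in $f$, and only instantaneous data at time $t$ is used) is exactly the intended reading.
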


A central object of interest in our continuous dynamic is a linear
sub-space $\mathcal{P} \subset [0,1]^3$ such that all strategy/payoff
states in it incur a bounded regret.  Formally, we will define
$\mathcal{P}$ via its normal vector $\vec{n} = (-\frac{1}{2},\frac{1}{2},1)$ so that
$\mathcal{P} = \{ s_i | \ s_i \cdot \vec{n} = \frac{1}{2} \}$. Equivalently,
we could also write
$\mathcal{P} = \{s_i \ | \ p_i^* =
\frac{1}{2}(1+D_i)\}$. (See~\Cref{fig:plane} for a visualisation.)
With this observation, it is straightforward to see that any player
with strategy/payoff state in $\mathcal{P}$ has regret at most $\frac{1}{8}$.

\begin{figure}
\center{
\begin{tikzpicture}[scale=0.7]
\tikzstyle{xxx}=[dashed,thick]

%\draw[cyan!20] (0,0) grid (12,10);

\draw[thick,<->](5,0)--(0,2)--(0,9); \draw[thick,->](0,2)--(12,3); %axes
\draw[xxx](6,8.5)--(6,2.5)--(9.5,1.7);

\fill[red!20](3,0.8)--(0,5)--(6,8.5)--(9.5,4.75)--cycle; %pink region
\draw[thick,red](0,5)--(9.5,4.75);
\draw[xxx](0,2)--(12,3); %dashed axis
\draw[xxx,gray](6,2.5)--(6,8.5); %back of cube

\draw[thick](3,0.8)--(9.5,1.7)--(9.5,8)--(3,7.5)--cycle; %front square
\draw[thick](3,7.5)--(0,8)--(6,8.5)--(9.5,8); %top square

\node at(-0.5,2){$0$};\node at(-0.5,5){$\frac{1}{2}$};\node at(-0.5,8){$1$};\node at(-2.5,8.7){$p_i=\Pr[{\rm play}~1]$};
\node at(5.25,0.25){$v_{i0}$};\node at(12.3,3){$v_{i1}$};

\node at(3,0.8){\textcolor{red}{$\bullet$}};\node at(6,8.5){\textcolor{red}{$\bullet$}};
\end{tikzpicture}
\caption{Visualisation of $\mathcal{P}$; on the red line,
$v_{i0}=v_{i1}$ so the player is indifferent and mixes with equal probabilities;
at the red points the player has payoffs of 0 and 1, and makes
a pure best response.}\label{fig:plane}}
\end{figure}
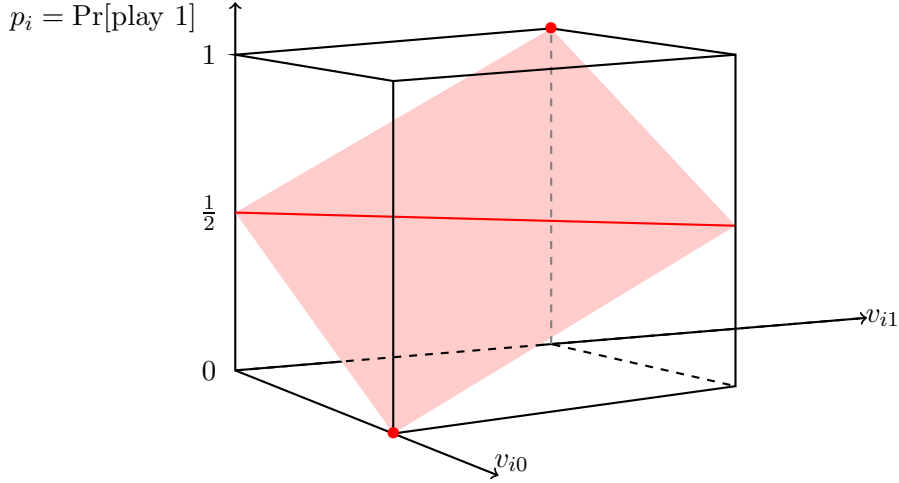

\begin{lemma}\label{lem:max-regret}
If player $i$'s strategy/payoff state satisfies $s_i\in \cP$,
then her regret is at most $\frac{1}{8}$.
\end{lemma}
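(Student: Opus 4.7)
The plan is to unpack the definitions, reduce to a single-variable optimization in $D_i$, and observe that the maximum of $\tfrac{1}{2} D_i(1-D_i)$ on $[0,1]$ is $\tfrac{1}{8}$, attained at $D_i = \tfrac{1}{2}$.

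First, I would reduce to a canonical case by symmetry. Without loss of generality, assume $v_{i1}\ge v_{i0}$, so that the best response of player $i$ is action $1$, and hence $p_i^\ast = p_i$ and $D_i = v_{i1}-v_{i0}$. (The opposite case is identical after swapping the roles of the two actions.) Then the expected utility to player $i$ is $p_i v_{i1} + (1-p_i) v_{i0}$, and the best-response utility is $v_{i1}$, so
\[
reg(p,i) \;=\; v_{i1} - \bigl(p_i v_{i1} + (1-p_i) v_{i0}\bigr) \;=\; (1-p_i)\,(v_{i1}-v_{i0}) \;=\; (1-p_i^\ast)\,D_i.
\]

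Next I would substitute the constraint coming from $s_i \in \mathcal{P}$. By definition, membership in $\mathcal{P}$ is equivalent to $p_i^\ast = \tfrac{1}{2}(1+D_i)$, so $1 - p_i^\ast = \tfrac{1}{2}(1-D_i)$. Plugging this into the regret expression gives
\[
reg(p,i) \;=\; \tfrac{1}{2}(1-D_i)\,D_i \;=\; \tfrac{1}{2}\bigl(D_i - D_i^2\bigr).
\]

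Finally I would maximize this one-variable function over the admissible range $D_i \in [0,1]$. Elementary calculus (or completing the square) gives the maximum at $D_i = \tfrac{1}{2}$, with value $\tfrac{1}{2}\cdot \tfrac{1}{2}\cdot\tfrac{1}{2} = \tfrac{1}{8}$, which proves the bound. I do not anticipate any real obstacle here; the only thing to be slightly careful about is confirming that the constraint $p_i^\ast = \tfrac{1}{2}(1+D_i)$ is a valid probability for every $D_i\in[0,1]$ (it lies in $[\tfrac{1}{2},1]$, so it is), and that we correctly identified $p_i^\ast$ with $p_i$ under the WLOG assumption so that the regret formula uses $(1-p_i^\ast)$ rather than $p_i^\ast$.
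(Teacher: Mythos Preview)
Your proof is correct and follows essentially the same approach as the paper: express the regret as $(1-p_i^\ast)D_i$, substitute the constraint $p_i^\ast=\tfrac{1}{2}(1+D_i)$ from the definition of $\mathcal{P}$, and maximize the resulting quadratic in $D_i$ to obtain $\tfrac{1}{8}$ at $D_i=\tfrac{1}{2}$, $p_i^\ast=\tfrac{3}{4}$. The paper's version is terser, simply asserting the regret formula and the maximizer, while you carefully derive $(1-p_i^\ast)D_i$ from the definitions; otherwise the arguments are identical.
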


\begin{proof}
  This follows from the fact that a player's regret can be expressed
  as $D_i(1-p_i^*)$ and the fact that all points on $\mathcal{P}$ also
  satisfy $p_i^* = \frac{1}{2}(1+D_i)$.  In particular, the maximal
  regret of $\frac{1}{8}$ is achieved when $D_i = \frac{1}{2}$ and
  $p_i^* = \frac{3}{4}$.
\end{proof}

Next, we want to show there exists a dynamic that allows all players to
eventually reach $\cP$ and remain on it over time. We notice that
for a specific player, $\dot{v}_{i1}$, $\dot{v}_{i0}$ and subsequently
$\dot{D}_i$ measure the cumulative effect of other players shifting
their strategies. However, if we limit how much any individual player
can change their mixed strategy over time by imposing
$|\dot{p}_i| \leq 1$ for all $i$, ~\Cref{lem:lips} guarantees
$|\dot{v}_{ij}| \leq 1$ for $j = 0,1$ and consequently
$|\dot{D}_i| \leq 2$. With these quantities bounded, we can
consider an adversarial framework where we construct game dynamics by
solely assuming that $|\dot{p}_i(t)| \leq 1$,
$|\dot{v}_{ij}(t)| \leq 1$ for $j = 0,1$ and $|\dot{D}_i(t)| \leq 2$
for all times $t\geq 0$.

Now assume an adversary controls $\dot{v}_{i0}$, $\dot{v}_{i1}$ and
hence $\dot{D}_i$, one can show that if a player sets 
$\dot{p}_i(t) = \frac{1}{2}(\dot{v}_{i1}(t) - \dot{v}_{i0}(t))$, then
she could stay on $\cP$ whenever she reaches the subspace.

\begin{lemma}\label{lem:remain-plane}
If $s_i(0) \in \mathcal{P}$, and $\dot{p}_i(t) = \frac{1}{2}(\dot{v}_{i1}(t) - \dot{v}_{i0}(t))$, then $s_i(t) \in \mathcal{P} \ \forall \ t \geq 0$.  
\end{lemma}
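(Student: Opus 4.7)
The plan is to observe that membership in $\mathcal{P}$ is a single linear equation in the coordinates of $s_i$, and that the prescribed update rule is precisely the condition that keeps this linear functional constant along trajectories. So the proof reduces to checking an ODE invariance via a one-line derivative computation.

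Concretely, I would first unpack the definition of $\mathcal{P}$ as a linear constraint on $s_i = (v_{i1}, v_{i0}, p_i)$. Using $\vec{n} = (-\tfrac{1}{2}, \tfrac{1}{2}, 1)$ and $s_i \cdot \vec{n} = \tfrac{1}{2}$, membership in $\mathcal{P}$ is equivalent to
\[
  p_i - \tfrac{1}{2}(v_{i1} - v_{i0}) = \tfrac{1}{2}.
\]
(One can sanity-check that this is the same as $p_i^* = \tfrac{1}{2}(1 + D_i)$ in both cases $v_{i1} \geq v_{i0}$ and $v_{i1} < v_{i0}$, but only the linear form is needed for the dynamics argument.)

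Next, define the scalar function $h(t) = p_i(t) - \tfrac{1}{2}\bigl(v_{i1}(t) - v_{i0}(t)\bigr)$. By hypothesis $s_i(0) \in \mathcal{P}$, so $h(0) = \tfrac{1}{2}$. Since $p_i$, $v_{i1}$, $v_{i0}$ are assumed differentiable in $t$, differentiating gives
\[
  \dot{h}(t) = \dot{p}_i(t) - \tfrac{1}{2}\bigl(\dot{v}_{i1}(t) - \dot{v}_{i0}(t)\bigr),
\]
which is identically zero by the assumed update rule $\dot{p}_i(t) = \tfrac{1}{2}(\dot{v}_{i1}(t) - \dot{v}_{i0}(t))$. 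Hence $h(t) \equiv \tfrac{1}{2}$, i.e., $s_i(t) \in \mathcal{P}$ for all $t \geq 0$.

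There is essentially no technical obstacle here: once $\mathcal{P}$ is written as the level set of a linear functional, the update rule is tautologically the one that annihilates its time derivative. The only thing worth a brief remark is that the argument is purely about staying on the affine plane and does not by itself guarantee $p_i(t)\in[0,1]$; however, since $v_{i1},v_{i0}\in[0,1]$ and the plane equation forces $p_i = \tfrac{1}{2}+\tfrac{1}{2}(v_{i1}-v_{i0}) \in [0,1]$, this constraint is automatically maintained, so no feasibility issue arises.
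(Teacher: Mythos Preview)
Your proof is correct and is exactly the natural argument: write membership in $\mathcal{P}$ as the linear constraint $s_i\cdot\vec{n}=\tfrac12$, differentiate, and observe that the prescribed update rule makes the derivative vanish. The paper in fact states this lemma without proof, evidently regarding it as immediate; your write-up supplies precisely the one-line invariance computation the paper leaves implicit, and your additional remark that $p_i\in[0,1]$ is automatically maintained on $\mathcal{P}$ is a nice feasibility check that the paper does not mention.
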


\begin{theorem}\label{thm:UNC}
Under the initial conditions $p_i(0) = \frac{1}{2}$ for all $i$, the following continuous dynamic, \textbf{Uncoupled Continuous Nash (UCN)}, has all players reach $\mathcal{P}$ in at most $\frac{1}{2}$ time units. Furthermore, upon reaching $\mathcal{P}$ a player never leaves.
\[
 \dot{p}_i(t) = f(s_i(t), \dot{s}_i(t)) = 
  \begin{cases} 
      \hfill \ \ 1    \hfill & \text{ if } s_i \notin \mathcal{P} \text{ and } v_{i1} \geq v_{i0} \\
      \hfill -1 \hfill & \text{ if } s_i \notin \mathcal{P} \text{ and } v_{i1} < v_{i0} \\
      \hfill \frac{1}{2}(\dot{v}_{i1}(t) - \dot{v}_{i0}(t)) \hfill & \text{ if } s_i \in  \mathcal{P}
  \end{cases}
\]

\end{theorem}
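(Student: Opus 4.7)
The statement has two parts: (a) once a player's state lies in $\mathcal{P}$, she remains in $\mathcal{P}$; and (b) starting from $p_i(0)=\tfrac12$, each player enters $\mathcal{P}$ within time $\tfrac12$. Part (a) follows directly from \Cref{lem:remain-plane}, applied with the instant the player first reaches $\mathcal{P}$ as the new time origin, since the on-plane branch of UCN is precisely the hypothesis of that lemma. The real content is therefore (b).

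My plan for (b) is to track a smooth signed distance to the plane,
\[
\psi_i(t) \;:=\; p_i(t)-\tfrac12-\tfrac12\bigl(v_{i1}(t)-v_{i0}(t)\bigr).
\]
A direct check on both signs of $v_{i1}-v_{i0}$ shows that $\psi_i=0$ is equivalent to $s_i\in\mathcal{P}$ (the definition $p_i^*=\tfrac12(1+D_i)$ unfolds to $p_i=\tfrac12+\tfrac12(v_{i1}-v_{i0})$ in either sign regime). The initial condition gives $\psi_i(0)=-\tfrac12(v_{i1}(0)-v_{i0}(0))\in[-\tfrac12,\tfrac12]$, and by the $p_i\leftrightarrow 1-p_i$ symmetry I may assume $v_{i1}(0)\ge v_{i0}(0)$, so $\psi_i(0)\le 0$ and player $i$ is initially in the first branch of UCN. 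While she stays in that branch and off $\mathcal{P}$, $\dot{p}_i=1$ integrates to $p_i(t)=\tfrac12+t$, and hence $\psi_i(t)=t-\tfrac12(v_{i1}(t)-v_{i0}(t))$. Because every player satisfies $|\dot{p}_k|\le 1$ (trivially in the first two branches; in the third branch from $|\dot v_{k1}-\dot v_{k0}|\le 2$), \Cref{lem:lips} forces $|\dot v_{ij}|\le 1$, so $\dot\psi_i=1-\tfrac12(\dot v_{i1}-\dot v_{i0})\in[0,2]$. Thus $\psi_i$ is non-decreasing, and the a-priori bound $v_{i1}(t)-v_{i0}(t)\le 1$ gives the key estimate $\psi_i(t)\ge t-\tfrac12$; in particular $\psi_i(\tfrac12)\ge 0$, so by the intermediate value theorem there exists $\tau\in[0,\tfrac12]$ with $\psi_i(\tau)=0$, at which instant player $i$ enters $\mathcal{P}$.

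The one subtlety, and the main technical obstacle because the right-hand side of UCN is discontinuous, is to verify that the player does not jump out of her initial branch (into the branch with $\dot p_i=-1$) before $\psi_i$ reaches $0$. Suppose $\tau'>0$ were the first time $v_{i1}(\tau')=v_{i0}(\tau')$; then applying the first-branch formula up to $\tau'$ gives $\psi_i(\tau')=p_i(\tau')-\tfrac12=\tau'>0$, while the assumption that the player is still in the first branch forces $\psi_i(\tau')\le 0$ by continuity and the monotone trajectory above, a contradiction. Hence the first branch persists until $\psi_i$ hits $0$, and from that moment \Cref{lem:remain-plane} keeps the player on $\mathcal{P}$ forever, completing the proof.
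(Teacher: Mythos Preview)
Your proof is correct and takes essentially the same approach as the paper: both track a distance to $\mathcal{P}$ starting from $p_i(0)=\tfrac12$ and invoke continuity to locate the hitting time (your $\psi_i=s_i\cdot\vec{n}-\tfrac12$ is simply the signed, smooth version of the paper's $p_i^*-\tfrac12(1+D_i)$, and your bound $\psi_i(t)\ge t-\tfrac12$ is exactly the paper's observation that $p_i^*(\tfrac12)=1\ge\tfrac12(1+D_i(\tfrac12))$). Your explicit contradiction argument for the preference-switch case is more careful than the paper's one-line ``it is simple to see that the same holds in the case where preferences change'', but the underlying idea is identical.
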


\begin{proof}
  From \Cref{lem:remain-plane} it is clear that once a player reaches
  $\mathcal{P}$ they never leave the plane. It remains to show that it
  takes at most $\frac{1}{2}$ time units to reach $\mathcal{P}$.

  Since $p_i(0) = p_i^*(0) = \frac{1}{2}$, it follows that if $s_i(0)
  \notin \mathcal{P}$ then $p_i^*(0) < \frac{1}{2}(1 + D_i(0))$. On
  the other hand, if we assume that $\dot{p}_i^*(t) = 1$ for $t \in 
  [0,\frac{1}{2}]$, and that player preferences do not change,
  then it follows that $p_i^*(\frac{1}{2}) = 1$ and $p_i^*(\frac{1}{2})
  \geq \frac{1}{2}(1 + D_i(\frac{1}{2}))$, where equality holds only if
  $D_i(\frac{1}{2}) = 1$. By continuity of $p_i^*(t)$ and $D_i(t)$ it
  follows that for some $k \leq \frac{1}{2}$, $s_i(k) \in \mathcal{P}$.
  It is simple to see that the same holds in the case where preferences change.
\end{proof}

\subsection{Discrete Time-step Approximation}\label{sec:discrete-time-step}

The continuous-time dynamics of the previous section hinge on
obtaining expected payoffs in mixed strategy profiles, thus we will
approximate expected payoffs via $\mathcal{Q}_{\beta,\delta}$.
Our algorithm will have each player adjusting their mixed strategy over
rounds, and in each round query $\cQ_{\beta, \delta}$ to obtain the payoff values.

Since we are considering discrete approximations to~\textbf{UCN},
the dynamics will no longer guarantee that strategy/payoff states stay
on the plane $\mathcal{P}$. For this reason we define the following
region around $\mathcal{P}$:

\begin{definition}
Let $\mathcal{P}^\lambda = \{ s_i \ | \ s_i \cdot \vec{n} \in [\frac{1}{2} - \lambda, \frac{1}{2} + \lambda]\}$, with normal vector $\vec{n} = (-\frac{1}{2},\frac{1}{2},1)$. Equivalently, $\mathcal{P}^\lambda = \{s_i \ | \ p_i^* = \frac{1}{2}(1+D_i) + c, \  c \in[-\lambda,\lambda]\}$.
\end{definition}

\noindent
Just as in the proof of ~\Cref{lem:max-regret}, we can use the fact that a player's regret is $D_i(1-p_i^*)$ to bound regret on $\mathcal{P}^\lambda$.

\begin{lemma}\label{lem:max-regret-approx}
The worst case regret of any strategy/payoff state in $\mathcal{P}^\lambda$ is $\frac{1}{8}(1+2\lambda)^2$. This is attained on the boundary: $\partial \mathcal{P}^\lambda = \{ s_i  \ | \ s_i \cdot \vec{n} = \frac{1}{2} \pm \lambda \}$.
\end{lemma}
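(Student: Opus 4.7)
The plan is to reduce the claim to a two-variable optimisation problem on a compact set, solve it by calculus, and read off that the maximiser lies on the boundary $\partial \mathcal{P}^\lambda$.

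First I would recall from the proof of \Cref{lem:max-regret} that the regret of player $i$ at state $s_i=(v_{i1},v_{i0},p_i)$ can be written as $\mathrm{reg}(p,i)=D_i(1-p_i^*)$, where $D_i\in[0,1]$ is the discrepancy and $p_i^*\in[0,1]$ is the probability mass on the best response. The condition $s_i\in \mathcal{P}^\lambda$ says exactly that $p_i^* = \tfrac{1}{2}(1+D_i)+c$ for some slack $c\in[-\lambda,\lambda]$. Substituting this in gives
\[
\mathrm{reg}(p,i) \;=\; D_i\left(1-\tfrac{1}{2}(1+D_i)-c\right) \;=\; \tfrac{D_i}{2}\bigl(1-D_i-2c\bigr).
\]

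Next, I would maximise this bivariate expression over $(D_i,c)\in[0,1]\times[-\lambda,\lambda]$ (ignoring, for the moment, the box constraint on $p_i^*$, which I would verify holds at the optimum). Since the expression is linear and decreasing in $c$, the maximum over $c$ is attained at $c=-\lambda$, which corresponds precisely to a point of the boundary $\partial \mathcal{P}^\lambda$ where $s_i\cdot\vec n = \tfrac{1}{2}-\lambda$. Plugging this in, the function to maximise reduces to
\[
g(D_i) \;=\; \tfrac{D_i}{2}\bigl(1-D_i+2\lambda\bigr),
\]
a downward parabola in $D_i$ whose unique critical point is $D_i^\star=\tfrac{1+2\lambda}{2}$. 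A direct substitution yields $g(D_i^\star)=\tfrac{1}{8}(1+2\lambda)^2$, which is the claimed bound. I would then verify that this critical point lies in the feasible range, noting that $D_i^\star\in[0,1]$ for $\lambda\in[0,\tfrac12]$, and that the corresponding $p_i^*=\tfrac{3-2\lambda}{4}$ lies in $[0,1]$ as well, so the optimum is interior in $D_i$ and attained on the lower boundary in $c$.

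The only real obstacle is a bookkeeping one: ensuring that the formal optimisation domain for $(D_i,p_i^*)$ really is the slab $\mathcal{P}^\lambda\cap[0,1]^2$ and that the optimum computed above lies inside the unit box for the relevant regime of $\lambda$ used later; the conclusion that the maximiser lies on $\partial\mathcal{P}^\lambda$ (the face $c=-\lambda$) then follows immediately from monotonicity in $c$. No other cases need to be considered, and the bound is tight at $(D_i^\star,p_i^{*\star})=\bigl(\tfrac{1+2\lambda}{2},\tfrac{3-2\lambda}{4}\bigr)$.
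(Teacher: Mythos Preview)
Your proposal is correct and follows exactly the approach the paper indicates: it invokes the same identity $\mathrm{reg}=D_i(1-p_i^*)$, substitutes the defining constraint of $\mathcal{P}^\lambda$, and performs the straightforward optimisation that the paper leaves implicit. Your added feasibility check (that $D_i^\star=\tfrac{1+2\lambda}{2}\le 1$ requires $\lambda\le\tfrac12$) is a useful refinement the paper does not spell out but tacitly assumes.
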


\begin{corollary}\label{lambda-for-epsilon}
For a fixed $\alpha > 0$, if $\lambda = \frac{\sqrt{1+ 8\alpha} - 1}{2}$, then $\mathcal{P}^\lambda$ attains a maximal regret of $\frac{1}{8} +\alpha$. 
\end{corollary}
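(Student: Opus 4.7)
The proof is an immediate algebraic consequence of \Cref{lem:max-regret-approx}, so the plan is simply to invert the quadratic expression for the worst-case regret and solve for $\lambda$ in terms of $\alpha$.

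The plan is to start from \Cref{lem:max-regret-approx}, which states that the maximal regret over $\mathcal{P}^\lambda$ equals $\tfrac{1}{8}(1+2\lambda)^2$. I would set this equal to the target value $\tfrac{1}{8} + \alpha$, giving the equation $(1+2\lambda)^2 = 1 + 8\alpha$. Since we need $\lambda \geq 0$ (so that $\mathcal{P}^\lambda$ is a nonempty neighbourhood containing $\mathcal{P}$), taking the positive square root yields $1 + 2\lambda = \sqrt{1+8\alpha}$, and hence $\lambda = \tfrac{\sqrt{1+8\alpha} - 1}{2}$. This is exactly the value claimed in the statement. Note that for $\alpha > 0$ this value is strictly positive, as required.

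There is essentially no obstacle: once \Cref{lem:max-regret-approx} is granted, the corollary is pure one-line algebra. The only thing to verify carefully is that the positive branch of the square root is the appropriate one, which is justified by the fact that $\lambda$ parametrises the thickness of the slab $\mathcal{P}^\lambda$ around $\mathcal{P}$ and so must be non-negative.
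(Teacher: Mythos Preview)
Your proposal is correct and matches the paper's approach: the paper states this corollary immediately after \Cref{lem:max-regret-approx} without proof, since it is exactly the one-line algebraic inversion you describe.
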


 \newcommand{\un}{\textbf{UN}} We present an algorithm in the
completely uncoupled setting, {\un($\alpha,\eta$)}, that for any
parameters $\alpha, \eta \in (0, 1]$ computes a
$(\frac{1}{8} + \alpha)$-Nash equilibrium with probability at least
$1 - \eta$. 

Since $p_i(t) \in [0,1]$ is the mixed strategy of the $i$-th player at
round $t$ we let $p(t) = (p_i(t))_{i=1}^n$ be the resulting mixed
strategy profile of all players at round $t$. Furthermore, we use the
mixed strategy oracle $\mathcal{Q}_{\beta,\delta}$
from~\Cref{lem:approx-query} that for a given mixed strategy profile
$p$ returns the vector of expected payoffs for all players with an
additive error of $\beta$ and a correctness probability of $1-\delta$.

The following lemma is used to prove the correctness of
{\un($\alpha,\eta$)}: 
\begin{lemma}\label{lem:bounded-step}
  Suppose that $w \in \mathbb{R}^3$ with $\| w \|_\infty \leq \lambda$
  and let function $h(x) = x \cdot \vec{n}$, where $\vec{n}$ is the
  normal vector of $\cP$. Then
  $h(x + w) - h(x) \in [-2\lambda, 2\lambda]$. Furthermore, if
  $w_3 = 0$, then $h(x + w) - h(x) \in [-\lambda, \lambda]$.
\end{lemma}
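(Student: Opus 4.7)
The plan is to observe that $h$ is linear, so $h(x+w) - h(x) = h(w) = w \cdot \vec{n}$, and then bound $w \cdot \vec{n}$ directly using the explicit form of the normal vector $\vec{n} = (-\tfrac{1}{2}, \tfrac{1}{2}, 1)$ together with the hypothesis $\|w\|_\infty \leq \lambda$.

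First I would expand
\[
h(x+w) - h(x) = (x+w)\cdot \vec{n} - x \cdot \vec{n} = w \cdot \vec{n} = -\tfrac{1}{2} w_1 + \tfrac{1}{2} w_2 + w_3.
\]
Applying the triangle inequality and using $|w_i| \leq \lambda$ for each $i \in \{1,2,3\}$ yields
\[
|h(x+w) - h(x)| \leq \tfrac{1}{2}|w_1| + \tfrac{1}{2}|w_2| + |w_3| \leq \tfrac{\lambda}{2} + \tfrac{\lambda}{2} + \lambda = 2\lambda,
\]
which gives the first claim. For the second claim, substituting $w_3 = 0$ into the same inequality leaves only $\tfrac{1}{2}|w_1| + \tfrac{1}{2}|w_2| \leq \lambda$, giving the tighter bound.

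There is no real obstacle here: the lemma is a one-line computation once one writes out the coordinates of $\vec{n}$, and the two cases are distinguished simply by whether the third coordinate of $w$ (which is the coordinate weighted by $1$ rather than $\tfrac{1}{2}$ in $\vec{n}$) is allowed to contribute. The only thing worth flagging is the asymmetry in $\vec{n}$: the third coordinate carries twice the weight of each of the first two, which is precisely why killing $w_3$ halves the bound.
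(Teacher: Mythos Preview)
Your proof is correct and follows essentially the same approach as the paper: both reduce to the identity $h(x+w)-h(x) = w\cdot\vec{n} = \tfrac{1}{2}(w_2-w_1)+w_3$ and then read off the bounds from $\|w\|_\infty\leq\lambda$.
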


\begin{proof}
  The statement follows from the following expression:
$$
h(x + w) - h(x) = w \cdot \vec{n} = \frac{1}{2} (w_2 - w_1) + w_3
$$ 
\end{proof}

\begin{algorithm}                      % enter the algorithm environment
\caption{{\un($\alpha,\eta$)}}     
\label{approx_nash_block}                           
\begin{algorithmic}                    % enter the algorithmic environment
    \REQUIRE 
    \STATE Threshold: $\alpha > 0$    
    \STATE Confidence: $\eta > 0$
    \STATE
	\INITIALIZATION
	\STATE $\lambda \leftarrow \frac{\sqrt{1+8\alpha}-1}{2}$ 
	\STATE $\Delta \leftarrow \frac{\lambda}{4}$
	\STATE $N \leftarrow \lceil \frac{2}{\Delta} \rceil$
	\STATE $p_i(0) \leftarrow  \frac{1}{2}$ for $i \in [n]$
	\STATE   
    \item[\textbf{Initial Gradient Estimate:}]
    \FOR{$(i,j) \in [n]\times \{0,1\}$}
    	\STATE $\hat{v}_{ij}(-1) = \left( \mathcal{Q}_{(\Delta,\frac{\eta}{N})}(p(0)) \right)_{i,j}$ 
	\ENDFOR    
    \STATE
    \item[\textbf{Query Dynamics:}]
	\FOR{$t = 1,...,T$}
	\FOR{$(i,j) \in [n] \times \{0,1\}$}
		\STATE $\hat{v}_{ij}(t) \leftarrow \left( \mathcal{Q}_{(\Delta,\frac{\eta}{N})}(p(t)) \right)_{i,j}$
		\STATE $\Delta \hat{v}_{ij}(t) \leftarrow \hat{v}_{ij}(t) - \hat{v}_{ij}(t-1)$		
		\IF{$\hat{s}_i(t) = \left( \hat{v}_{i1}(t),\hat{v}_{i0}(t),p_i(t) \right) \notin \mathcal{P}^{\lambda/4}$}
			\STATE $p_i^*(t+1) \leftarrow p_i^*(t) + \Delta$
		\ELSE
			\STATE $p_i^*(t+1) \leftarrow p_i^*(t) + \frac{1}{2}(\Delta \hat{v}_{i1}(t) - \Delta \hat{v}_{i0}(t))$
		\ENDIF
	\ENDFOR
	\ENDFOR
	\RETURN $p(t)$
  
\end{algorithmic}
\end{algorithm}

\begin{theorem}\label{UN-proof}
With probability $1 - \eta$, {\un($\alpha,\eta$)} correctly returns a $(\frac{1}{8} + \alpha)$-approximate Nash equilibrium by using $O(\frac{1}{\alpha^4} \log \left( \frac{n}{\alpha \eta} 
\right) ) $  queries. 
\end{theorem}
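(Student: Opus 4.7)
The overall plan is to argue correctness conditional on a single high-probability good event, and then bound the query cost separately.

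\textbf{Good event and step-size control.} The algorithm issues at most $N+1$ calls to $\cQ_{\Delta,\eta/N}$; by \Cref{lem:approx-query} and a union bound, with probability at least $1-\eta$ every query returns payoff estimates within additive error $\Delta$ for all $(i,j)$ simultaneously. Condition on this event. By induction on $t$, show that the per-round motion $|\Delta p_i(t)|$ is $O(\Delta)$: in the moving branch it is exactly $\Delta$, and in the tracking branch it equals $\tfrac12|\Delta\hat v_{i1}(t-1)-\Delta\hat v_{i0}(t-1)|$, which decomposes into the true payoff-change (bounded by $\gamma\|\Delta p_{-i}(t-1)\|_1\le(1-1/n)\max_k|\Delta p_k(t-1)|$ via \Cref{lem:lips}) plus a noise term of size at most $2\Delta$ from two consecutive approximations. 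The same argument gives $|\Delta v_{ij}(t)|=O(\Delta)$.

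\textbf{Reaching the good region.} Mirroring \Cref{thm:UNC}, whenever $\hat s_i\notin\cP^{\lambda/4}$ the player pushes $p_i^*$ upward by exactly $\Delta$, so after at most $\lceil 1/(2\Delta)\rceil\le N$ rounds $p_i^*$ must reach $1$ and the state must first have crossed into $\cP^{\lambda/4}$. By \Cref{lem:bounded-step} the slab has $h$-width $\lambda/2=2\Delta$, so the $O(\Delta)$ step bound from the previous paragraph prevents the player from jumping over it in a single round.

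\textbf{Staying in $\cP^\lambda$ under tracking.} This is the main technical step. When $\hat s_i(t)\in\cP^{\lambda/4}$, \Cref{lem:bounded-step} (applied to the noise vector, which is nonzero only in the payoff coordinates) gives $s_i(t)\in\cP^{\lambda/2}$, so one begins with $\lambda/4$ of slack. Plugging the tracking update $\Delta p_i^*(t+1)=\tfrac12(\Delta\hat v_{i1}(t)-\Delta\hat v_{i0}(t))$ into $h(s_i(t+1))-h(s_i(t))$ and cancelling the first-order payoff terms yields
\[
\Delta h(s_i(t+1)) = \tfrac12\bigl[(\Delta v_{i0}(t+1)-\Delta v_{i0}(t))-(\Delta v_{i1}(t+1)-\Delta v_{i1}(t))\bigr] + \xi_t,
\]
with $|\xi_t|\le 2\Delta$ and the second-difference bracket of order $O(\Delta)$ by step-size control. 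Telescoping across the remaining $\le N$ rounds, and choosing the constants so the total drift fits inside the $\lambda/4$ slack, gives $s_i\in\cP^\lambda$ for every player throughout; occasional best-response switches in the tracking phase happen only near $D_i=0$, where regret is already well below $\tfrac18$, and can be absorbed into the slack. Invoking \Cref{lambda-for-epsilon} then yields $reg(p(N),i)\le\tfrac18+\alpha$ for every $i$, which is correctness.

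\textbf{Query count and main obstacle.} For the query bound, $N=\lceil 2/\Delta\rceil=O(1/\alpha)$ rounds, and each call to $\cQ_{\Delta,\eta/N}$ uses $O(\Delta^{-3}\log(nN/\eta))$ samples from $\cQ$; multiplying gives $O(\Delta^{-4}\log(n/(\alpha\eta)))=O(\alpha^{-4}\log(n/(\alpha\eta)))$, using that $\lambda=\Theta(\alpha)$ for small $\alpha$. The hardest part will be the tracking-stability step: in continuous time the update from \Cref{thm:UNC} preserves $h$ exactly, whereas the discrete algorithm uses a one-round-lagged gradient estimated with $\pm\Delta$ noise, so the proof must simultaneously control the lag (a second-difference of payoffs) and the accumulated query noise and keep all players inside the $\lambda/4$ buffer over every one of the $N$ rounds.
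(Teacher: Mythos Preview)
Your sketch follows the paper's arc (good event via union bound, reaching the slab, staying in the slab, query count), and the query-count calculation matches the paper's. There are, however, two concrete problems in the middle steps.

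\textbf{The step-size induction does not close.} In the tracking branch you bound the step by $(1-\tfrac1n)\max_k|\Delta p_k(t-1)|+2\Delta$, the $2\Delta$ coming from noise in two consecutive estimates. But the recursion $M\mapsto(1-\tfrac1n)M+2\Delta$ has fixed point $2n\Delta$, not a universal constant times $\Delta$; starting from $M=\Delta$ it can drift up to that value over the $N$ rounds. So you have not established $|\Delta p_i(t)|=O(\Delta)$ with an $n$-independent constant. Both your non-overshoot argument for reaching $\cP^{\lambda/4}$ and your telescoping bound (which needs the endpoint first differences $|\Delta v_{ij}|$ to be $O(\Delta)$) then fail to fit inside the $\lambda/4=\Theta(\alpha)$ slack when $n$ is large. (Your instinct about the noise term $\xi_t$ is actually right, for a reason you do not state: $\xi_t$ is itself a first difference of estimation errors, so $\sum_t\xi_t$ telescopes to at most $2\Delta$ over any span. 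The obstruction lies entirely in the Lipschitz/second-difference contribution, which is controlled only by the step-size bound you have not established.)

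\textbf{The paper does not telescope.} Rather than summing drift over all remaining rounds, the paper maintains a per-round invariant: once $s_i\in\cP^{\lambda/2}$, the tracking update keeps it there at the next round, by direct analogy with \Cref{lem:remain-plane} (where the continuous update $\dot p_i=\tfrac12(\dot v_{i1}-\dot v_{i0})$ preserves $h$ exactly). Combined with the fact that the algorithm re-tests $\hat s_i\in\cP^{\lambda/4}$ at \emph{every} round—so any excursion is immediately met by a corrective moving step—this per-round structure never needs to accumulate $N$ error terms. The paper's own argument is terse at this point and also invokes \Cref{lem:bounded-step} with an implicit bound on $\|\Delta\hat s_i\|_\infty$, so it is not beyond reproach either; but the per-round invariant is the right object to maintain, whereas telescoping over all $\le N$ tracking rounds, as you propose, is the wrong decomposition for this algorithm.
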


\begin{proof}
  By ~\Cref{lem:approx-query} and union bound, we can guarantee that
  with probability at least $1-\eta$ all sample approximations to
  mixed payoff queries have an additive error of at most
  $\Delta = \frac{\lambda}{4}$. We will condition on this accuracy
  guarantee in the remainder of our argument. Now we can show that for
  each player there will be some round $k \leq N$, such that at the
  beginning of the round their strategy/payoff state lies in
  $\mathcal{P}^{\lambda/2}$. Furthermore, at the beginning of all
  subsequent rounds $t \geq k$, it will also be the case that their
  strategy/payoff state lies in $\mathcal{P}^{\lambda/2}$.

The reason any player generally reaches $\mathcal{P}^{\lambda/2}$ follows from the fact that in the worst case, after increasing $p^*$ by $\Delta$ for $N$ rounds, $p^*=1$, in which case a player is certainly in $\mathcal{P}^{\lambda/2}$. Furthermore, ~\Cref{lem:bounded-step} guarantees that each time $p^*$ is increased by $\Delta$, the value of $\hat{s}_i \cdot \vec{n}$ changes by at most $\frac{\lambda}{2}$ which is why $\hat{s}_i$ are always steered towards $\mathcal{P}^{\lambda/4}$. Due to inherent noise in sampling, players may at times find that $\hat{s}_i$ slightly exit $\mathcal{P}^{\lambda/4}$ but since additive errors are at most $\frac{\lambda}{4}$. We are still guaranteed that true $s_i$ lie in $\mathcal{P}^{\lambda/2}$.

The second half of step 4 forces a player to remain in $\mathcal{P}^{\lambda/2}$ at the beginning of any subsequent round $t \geq k$. The argumentation for this is identical to that of ~\Cref{lem:remain-plane} in the continuous case.

Finally, the reason that individual probability movements are restricted to $\Delta = \frac{\lambda}{4}$ is that at the end of the final round, players will move their probabilities and will not be able to respond to subsequent changes in their strategy/payoff states. From the second part of ~\Cref{lem:bounded-step}, we can see that in the worst case this can cause a strategy/payoff state to move from the boundary of $\mathcal{P}^{\lambda/2}$ to the boundary of $\mathcal{P}^{\frac{3\lambda}{4}} \subset \mathcal{P}^\lambda$. However, $\lambda$ is chosen in such a way so that the worst-case regret within $\mathcal{P}^{\lambda}$ is at most $\frac{1}{8}+\alpha$, therefore it follows that {\un($\alpha,\eta$)} returns a $\frac{1}{8} + \alpha$ approximate Nash equilibrium. Furthermore, the number of queries is
$$
(N+1) \left( \frac{1024}{\lambda^3} \log \left(\frac{8nN}{\eta} \right) \right) 
= \left( \frac{1}{\lambda} + 1 \right) \left( \frac{1024}{\lambda^3} \log \left(\frac{8n}{\lambda \eta} \right) \right).
$$
It is not difficult to see that $\frac{1}{\lambda} = O(\frac{1}{\alpha})$ which implies that the number of queries made is $O \left( \frac{1}{\alpha^4} \log \left( \frac{n}{ \alpha \eta} \right) \right)$ in the limit.
\end{proof}

\subsection{Logarithmic Lower Bound}

As mentioned in the preliminaries section, all of our previous results extend to stochastic utilities. In particular, if we assume that $G$ is a game with stochastic utilities where expected payoffs are large with parameter $\frac{1}{n}$, then we can apply {\un($\alpha,\eta$)} with $O( \log (n))$ queries to obtain a mixed strategy profile where no player has more than $\frac{1}{8} + \alpha$ incentive to deviate. Most importantly, for $\ell>2$, we can use the same methods as \cite{GR14} to lower bound the query complexity of computing a mixed strategy profile where no player has more than $(\frac{1}{2} - \frac{1}{\ell})$ incentive to deviate.

\begin{restatable}{theorem}{lbstochastic}
If $\ell>2$, the query complexity of computing a mixed strategy profile where no player has more than $(\frac{1}{2} - \frac{1}{\ell})$ incentive to deviate for stochastic utility games is $\Omega (\log_{\ell(\ell-1)} (n))$. Alongside ~\Cref{UN-proof} this implies the query complexity of computing mixed strategy profiles where no player has more than $\frac{1}{8}$ incentive to deviate in stochastic utility games is $\Theta (\log (n))$. 
\end{restatable}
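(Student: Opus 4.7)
The plan is to adapt the adversary/information-theoretic construction of Goldberg and Roth~\cite{GR14} to the stochastic-utility setting so that each payoff query reveals at most $\log(\ell(\ell-1))$ bits about a hidden parameter that ranges over $\Omega(n)$ values. The overall structure is a Yao-style lower bound: exhibit a distribution $\mathcal{D}$ over stochastic-utility games (each of which is $\frac{1}{n}$-large in expectation) such that any deterministic algorithm making fewer than $c \log_{\ell(\ell-1)}(n)$ queries fails, with high probability over $\mathcal{D}$, to output any mixed profile in which every player has incentive to deviate at most $\frac{1}{2} - \frac{1}{\ell}$.

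Concretely, I would construct a family $\{G_k\}_{k\in [n]}$ indexed by a ``hidden'' coordinate $k$ drawn uniformly at random. In $G_k$, all players except player $k$ receive Bernoulli payoffs with mean $\tfrac{1}{2}$ on every action profile (so their samples are information-free), while player $k$'s expected payoffs encode enough structure that any profile in which $k$'s regret is below $\tfrac{1}{2}-\tfrac{1}{\ell}$ must place sufficient mass on a specific action, thereby revealing~$k$. The payoff function must be scaled so that the Lipschitz parameter is $\tfrac{1}{n}$; this is achieved by making the expected payoff to player $k$ vary only by $O(1/n)$ as a function of any single other player's action, which is consistent with the required $(\tfrac{1}{2}-\tfrac{1}{\ell})$ gap because the regret gap is a function of the mixed-profile shift, not the single-action shift.

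The information-theoretic heart of the argument is to bound the information each query gives about $k$. A payoff query to profile $a$ returns a vector of $n$ Bernoulli samples, but only those components corresponding to action configurations that actually depend on $k$ carry information. By designing $G_k$ so that only the pair (player $k$'s action, one other designated player's action) is informative, the query outcome partitions the $n$ candidate values of $k$ into at most $\ell(\ell-1)$ nontrivial classes (one for each ordered pair of distinct actions among $\ell$). Formalising this via a union bound over the binary transcripts (as in~\cite{GR14}), after $T$ queries the algorithm has narrowed $k$ down to at most $(\ell(\ell-1))^T$ values, so achieving nontrivial success probability forces $(\ell(\ell-1))^T \geq \Omega(n)$, giving $T = \Omega(\log_{\ell(\ell-1)} n)$. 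Combining this with Theorem~\ref{UN-proof} (which supplies the matching upper bound $O(\log n)$ for $\ell = 2$) yields the $\Theta(\log n)$ tight bound for the binary case stated at the end.

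The main obstacle is reconciling the three competing requirements: (i) expected payoffs must be $\tfrac{1}{n}$-large, which forces any single informative coordinate to carry only $O(1/n)$ signal; (ii) the $(\tfrac{1}{2}-\tfrac{1}{\ell})$-approximate-equilibrium condition must nevertheless pin down the hidden coordinate, which requires amplifying the global gap to $\Omega(1)$ by aggregating many small effects; and (iii) the per-query information must truly be capped at $\log(\ell(\ell-1))$ and not, say, $\log(\ell^2)$ or more, which depends on choosing the informative action pair carefully and exploiting the fact that Bernoulli samples from uninformative coordinates are independent of $k$. The delicate part of the proof will be verifying that these three constraints can be simultaneously satisfied by explicit payoff tables, essentially by mimicking the construction of~\cite{GR14} but replacing their deterministic binary payoffs with Bernoulli distributions of appropriately chosen means.
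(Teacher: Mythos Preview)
Your proposal rests on a misreading of the parameter $\ell$, and this makes the derivation of the $\log_{\ell(\ell-1)}$ base unfounded. Throughout this section the games are binary-action; $\ell$ is \emph{not} the number of actions but the parameter controlling the approximation target $\tfrac{1}{2}-\tfrac{1}{\ell}$. Your claim that a query ``partitions the $n$ candidate values of $k$ into at most $\ell(\ell-1)$ nontrivial classes (one for each ordered pair of distinct actions among $\ell$)'' therefore has no meaning here, and with it the whole information-capping step collapses.

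The paper's construction is both simpler and structurally different. It indexes the hard distribution by $b\in\{0,1\}^n$ rather than by a single hidden coordinate $k\in[n]$: for each player $i$ independently, action $b_i$ yields a Bernoulli payoff with mean $\tfrac{\ell-1}{\ell}$ and action $1-b_i$ yields mean $\tfrac{1}{\ell}$, \emph{regardless of what any other player does}. This makes the game trivially $0$-Lipschitz, so your obstacle~(i)/(ii) about reconciling a $\tfrac{1}{n}$ single-coordinate signal with an $\Omega(1)$ regret gap simply does not arise. The base $\ell(\ell-1)$ then comes out of the likelihoods of the Bernoulli observations with parameters $\tfrac{1}{\ell}$ and $\tfrac{\ell-1}{\ell}$, via exactly the argument of Theorem~3 in~\cite{GR14}, followed by Yao's minimax principle. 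If you want to repair your approach, you should replace the single-hidden-index family by the product family $\{G_b\}_{b\in\{0,1\}^n}$ and trace the $\ell(\ell-1)$ factor to those Bernoulli biases rather than to any action-counting.
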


\begin{proof}

Suppose that we have $n$ players and that $\ell>2$. For every $b \in \{0,1\}^n$ we can construct a stochastic utility game $G_b$ as follows: For each player $i$, the utility of strategy $b_i$ is bernoulli with bias $\frac{\ell}{\ell-1}$ and the utility of strategy $1-b_i$ is bernoulli with bias $\frac{1}{\ell}$. Note that this game is trivially $\left( \frac{1}{n} \right)$-Lipschitz, as each player's payoff distributions are completely independent of other players' strategies. 

Suppose that $\mathcal{G}$ is the uniform distribution on the set of all $G_b$, then using the same argumentation as Theorem 3 of \cite{GR14}, we get the following:

\begin{theorem}
Let $\mathcal{A}$ be a deterministic payoff-query algorithm that uses at most $\log_{\ell(\ell-1)}(n)$ queries and outputs a mixed strategy $p$. If $\mathcal{A}$ performs on $\mathcal{G}$, then with probability more than $\frac{1}{2}$, there will exist a player with a regret greater than $\frac{1}{2} - \frac{1}{\ell}$ in $p$. 
\end{theorem}  

\noindent
We can immediately apply Yao's minimax principle to this result to complete the proof.
\end{proof}

%%% Local Variables:
%%% mode: latex
%%% TeX-master: t
%%% End:

\section{Achieving $\eps < \frac{1}{8}$ with Communication}
\label{sec:communicate}

We return to continuous dynamics to show that we can obtain a
worst-case regret of slightly less than $\frac{1}{8}$ by using limited
communication between players, thus breaking the uncoupled setting we
have been studying until now.

First of all, let us suppose that initially $p_i(0) = \frac{1}{2}$ for
each player $i$ and that \textbf{UCN} is run for $\frac{1}{2}$ time
units so that strategy/payoff states for each player lie on
$\mathcal{P} = \{ s_i \ | \ p_i^* = \frac{1}{2}(1 + D_i) \}$. We
recall from ~\Cref{lem:max-regret} that the worst case regret of
$\frac{1}{8}$ on this plane is achieved when $p_i^* = \frac{3}{4}$ and
$D_i = \frac{1}{2}$. We say a player is~\emph{bad} if they achieve a
regret of at least $0{\cdot}12$, which on $\mathcal{P}$ corresponds to having
$p_i^* \in [0{\cdot}7,0{\cdot}8]$. Similarly, all other players are
\emph{good}. We denote $\theta \in [0,1]$ as the proportion of players
that are bad. Furthermore, as the following lemma shows, we can in a
certain sense assume that $\theta \leq \frac{1}{2}$.

\begin{lemma}\label{lem:balancing}
  If $\theta > \frac{1}{2}$, then for a period of $0{\cdot}15$ time units, we
  can allow each bad player to shift to their best response with unit
  speed, and have all good players update according to \textbf{UCN} to
  stay on $\mathcal{P}$. After this movement, at most
  $1-\theta$ players are bad.
\end{lemma}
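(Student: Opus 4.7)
My plan is to show that after the $0{\cdot}15$-time-unit movement, (a) every originally bad player has regret strictly below $0{\cdot}12$, after which (b) any remaining bad player must have been initially good, giving the count $(1-\theta)n$ immediately.

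For (a) I first establish the rate bound $|\dot p_k|\leq 1$ for every player. Bad players satisfy this by definition. For a good player $k$ staying on $\mathcal{P}$ under \textbf{UCN}, $\dot p_k=\tfrac{1}{2}(\dot v_{k1}-\dot v_{k0})$; applying the largeness property to the second differences gives $|\dot v_{k1}-\dot v_{k0}|\leq \tfrac{2}{n}\sum_{l\neq k}|\dot p_l|$. Writing $R:=\sum_l|\dot p_l|$ and solving the self-consistent inequality $R\leq \theta n+(1-\theta)R$ produces $R\leq n$, hence $|\dot p_k|\leq 1$ uniformly. This yields $|\dot v_{ij}|\leq 1$ and $|\dot D_i|=|\dot v_{i1}-\dot v_{i0}|\leq 2$ throughout.

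Next, for each initially bad player $i$, the bound $|\dot D_i|\leq 2$ combined with $D_i(0)\geq 0{\cdot}4$ shows $D_i(t)>0$ for all $t\leq 0{\cdot}15$, so the best response does not flip; therefore $p_i^*(0{\cdot}15)=p_i^*(0)+0{\cdot}15$ and $D_i(0{\cdot}15)\leq D_i(0)+0{\cdot}3$. Using $p_i^*(0)=(1+D_i(0))/2$ because $i$ started on $\mathcal{P}$, the post-shift regret satisfies
\[
  D_i(0{\cdot}15)\bigl(1-p_i^*(0{\cdot}15)\bigr)\leq \bigl(D_i(0)+0{\cdot}3\bigr)\cdot\tfrac{0{\cdot}7-D_i(0)}{2}=-\tfrac{1}{2}D_i(0)^2+0{\cdot}2\,D_i(0)+0{\cdot}105.
\]
This expression is decreasing on $[0{\cdot}4,0{\cdot}6]$ and attains its maximum $0{\cdot}105$ at $D_i(0)=0{\cdot}4$. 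Since $0{\cdot}105<0{\cdot}12$, every originally bad player becomes good.

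Step (b) is then immediate: since no originally bad player is bad any longer, every post-movement bad player was initially good, so their number is at most $(1-\theta)n$. The main obstacle is the self-consistent rate bound in step (a), since good players' rates depend circularly on bad and on other good players' rates through the Lipschitz chain; once that bound is in hand, the remainder is a one-variable optimisation. The specific value $0{\cdot}15$ of the time constant is precisely calibrated so that the worst-case post-movement regret of an originally bad player lands strictly below the $0{\cdot}12$ threshold, making the worst corner $D_i(0)=0{\cdot}4$ the binding case.
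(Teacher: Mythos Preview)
Your proof is correct and follows essentially the same idea as the paper's: the worst case for an initially bad player is $\dot D_i=2$ (equivalently, the player's state stays on $\mathcal{P}$), and after $0{\cdot}15$ time units this forces $p_i^*\geq 0{\cdot}85$, hence regret below $0{\cdot}12$. Your version is more thorough than the paper's brief sketch—you make the self-consistent rate bound $|\dot p_k|\leq 1$ explicit (the paper simply works within a framework where this is assumed), you verify that the best response does not flip, and you carry out the explicit one-variable optimisation yielding the sharp value $0{\cdot}105$, whereas the paper only appeals to the $p_i^*$-range characterisation of ``bad'' on $\mathcal{P}$.
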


\begin{proof}
If $i$ is a bad player, in the worst case scenario, $\dot{D}_i = 2$,
which keeps their strategy/payoff state, $s_i$, on the plane $\mathcal{P}$.
However, at the end of $0{\cdot}15$ time units, they will have $p_i^* > 0{\cdot}85$,
hence they will no longer be bad. On the other hand, since the good players
follow the dynamic, they stay on $\mathcal{P}$, and at worst, all of them become bad.
\end{proof}

\begin{observation}
After this movement, players who were bad are the only players possibly away
from $\mathcal{P}$ and they have a discrepancy that is greater than
$0 \cdot 1$. Furthermore, all players who become bad lie on $\mathcal{P}$.
\end{observation}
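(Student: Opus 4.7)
The observation packages three assertions which I would discharge in turn, leveraging the dynamics fixed in \Cref{lem:balancing} together with the Lipschitz bound from \Cref{lem:lips}. Let $s_i(0)$ denote the state just before the $0{\cdot}15$ time-unit window and $s_i(0{\cdot}15)$ the state at the end.

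For the first clause, I would simply note that good players begin on $\mathcal{P}$ and evolve according to the UCN rule $\dot{p}_i(t) = \tfrac{1}{2}(\dot{v}_{i1}(t) - \dot{v}_{i0}(t))$. \Cref{lem:remain-plane} applies to each good player verbatim, so their strategy/payoff states remain on $\mathcal{P}$ throughout $[0, 0{\cdot}15]$. Consequently the only players whose states can leave $\mathcal{P}$ are the initially-bad ones, which is exactly the first assertion.

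For the discrepancy bound I would combine the defining equation of $\mathcal{P}$ with a global Lipschitz estimate. Since a bad player starts on $\mathcal{P}$ with $p_i^*(0) \in [0{\cdot}7, 0{\cdot}8]$, the identity $p_i^* = \tfrac{1}{2}(1 + D_i)$ forces $D_i(0) \in [0{\cdot}4, 0{\cdot}6]$. Every player changes $p_j$ at speed at most $1$, so \Cref{lem:lips} gives $|\dot{v}_{ij}(t)| \leq 1$ and hence $|\dot{D}_i(t)| \leq 2$. Integrating over the window yields $|D_i(0{\cdot}15) - D_i(0)| \leq 0{\cdot}3$, so $D_i(0{\cdot}15) \geq 0{\cdot}4 - 0{\cdot}3 = 0{\cdot}1$, establishing the second clause.

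Finally, for the third clause I would use that previously-bad players satisfy $\dot{p}_i^*(t) = 1$ throughout the window, giving $p_i^*(0{\cdot}15) \geq 0{\cdot}85$, which (in the worst case where they remain on $\mathcal{P}$) falls strictly outside the bad range $[0{\cdot}7, 0{\cdot}8]$ and so corresponds to regret below $0{\cdot}12$. Therefore every player who is bad at the end of the window must have been good at the start, and by the first clause such players are still on $\mathcal{P}$. The one point I expect to require any care, and so the mild obstacle, is reconciling the definition of ``bad'' (via regret) with the $p_i^*$-based characterisation once a previously-bad player has drifted off $\mathcal{P}$; this is circumvented here because newly-bad players all come from the initially-good pool, on $\mathcal{P}$, where the two notions coincide.
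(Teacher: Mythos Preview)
Your plan for the first two clauses is sound. For the third clause, however, there is a real gap, and your closing remark misidentifies how it is resolved.

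To establish the third clause you (correctly) aim to show that every previously-bad player ends with regret below $0{\cdot}12$, so that anyone bad at the end was initially good and hence, by the first clause, still on $\mathcal{P}$. Your evidence for this is $p_i^*(0{\cdot}15) \geq 0{\cdot}85 \notin [0{\cdot}7, 0{\cdot}8]$. But the equivalence ``bad $\Leftrightarrow p_i^* \in [0{\cdot}7, 0{\cdot}8]$'' holds only on $\mathcal{P}$, and previously-bad players may have left $\mathcal{P}$. Your parenthetical ``in the worst case where they remain on $\mathcal{P}$'' asserts, without justification, that the on-$\mathcal{P}$ regret is an upper bound off $\mathcal{P}$ as well; and your final sentence then claims the difficulty is ``circumvented'' by appealing to the conclusion of the very implication you are trying to prove, which is circular.

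The missing step is short. During the window a previously-bad player has $\dot{p}_i^* = 1$ while $|\dot{D}_i| \leq 2$, so
\[
\tfrac{d}{dt}\bigl(p_i^* - \tfrac{1}{2}(1+D_i)\bigr) = 1 - \tfrac{1}{2}\dot{D}_i \geq 0.
\]
Since the player starts on $\mathcal{P}$, this quantity starts at $0$ and stays nonnegative, i.e.\ $p_i^* \geq \tfrac{1}{2}(1+D_i)$ throughout, so $D_i \leq 2p_i^* - 1$. Hence
\[
reg = D_i(1-p_i^*) \leq (2p_i^* - 1)(1-p_i^*),
\]
which is exactly the on-$\mathcal{P}$ regret at the same $p_i^*$. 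For $p_i^* \in [0{\cdot}85, 0{\cdot}95]$ this is at most $(0{\cdot}7)(0{\cdot}15) = 0{\cdot}105 < 0{\cdot}12$, completing the argument. (The paper gives no separate proof of this observation; it is left implicit following \Cref{lem:balancing}, whose proof makes the same informal jump from $p_i^* > 0{\cdot}85$ to ``no longer bad''.)
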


We can now outline a continuous-time dynamic that utilises~\Cref{lem:balancing} to obtain a $(\frac{1}{8} - \frac{1}{220})$ maximal regret.

\begin{enumerate}
\item Have all players begin with $p_i(0) = \frac{1}{2}$
\item Run \textbf{UCN} for $\frac{1}{2}$ time units.
\item Measure, $\theta$, the proportion of bad players. If $\theta > \frac{1}{2}$ apply the dynamics of ~\Cref{lem:balancing}.
\item Let all bad players use $\dot{p}_i^* = 1$ for $\Delta = \frac{1}{220}$ time units.
\end{enumerate}

\begin{restatable}{theorem}{communication}
\label{thm:communication}
If all players follow the aforementioned dynamic, no single player will have a regret greater than $\frac{1}{8} - \frac{1}{220}$.
\end{restatable}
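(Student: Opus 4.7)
The proof plan is to analyze the regret of every player at the end of step 4 by a case analysis on the player's type at that moment. First I would characterize the state after step 3: via~\Cref{lem:balancing} applied when $\theta > 1/2$, the fraction $\theta'$ of currently bad players is at most $1/2$, every such player lies on $\mathcal{P}$ with $p_i^* \in [0.7, 0.8]$ (hence regret in $[0.12, 1/8]$), and the remaining (good) players are either still on $\mathcal{P}$ with $p_i^* \notin [0.7, 0.8]$, or, if they passed through balancing, lie off $\mathcal{P}$ with $p_j^* \geq 0.85$ and $D_j > 0.1$.

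Next I would control how much each player's state can shift during step 4 (duration $\Delta = 1/220$). Since only the bad fraction $\theta' \leq 1/2$ moves at unit rate, while each good player on $\mathcal{P}$ adjusts at rate $\tfrac{1}{2}|\dot v_{j1} - \dot v_{j0}|$ in order to remain on $\mathcal{P}$ (and an off-plane good player need not move), the $1/n$-largeness bound of~\Cref{lem:lips} yields a uniform bound $|\dot v_{ij}| \leq c$ for some constant $c \leq 1$ determined by $\theta'$. Consequently any player's discrepancy changes by at most $2c\Delta$ and any $p_i^*$ changes by at most $\Delta$ over step 4.

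With these bounds in hand, I would bound the regret case by case. For a currently bad player starting at $(p_i^*, D_i) = (x, 2x-1)$ on $\mathcal{P}$ with $x \in [0.7, 0.8]$, the new regret equals $(D_i + \delta)(1 - x - \Delta)$ with $\delta$ the adversarial shift in $D_i$, and I would maximize this over $x$ and over the allowed range of $\delta$. For a currently good player on $\mathcal{P}$ (including the critical boundaries $p_j^* = 0.7, 0.8$), the new regret equals $D'(1 - D')/2$ provided the player remains on $\mathcal{P}$, and the choice $\Delta = 1/220$ is calibrated precisely so that the adversarial drift of $D'$ keeps this boundary regret at most $\tfrac{1}{8} - \tfrac{1}{220}$. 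For an old bad player now off $\mathcal{P}$ with $p_j^* \geq 0.85$, the factor $(1 - p_j^*) \leq 0.15$, together with the shrunken discrepancy inherited from the balancing phase, keeps the regret below the target.

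The main obstacle is that these cases impose competing constraints on $\Delta$: the bad player's regret reduction grows with $\Delta$, whereas a boundary good player's regret inflation from adversarial discrepancy drift also grows with $\Delta$, so $\Delta = 1/220$ must be the calibration at which the worst-case regret across all cases coincides at $\tfrac{1}{8} - \tfrac{1}{220}$. I expect the tightest case to be the bad player initially at $(p_i^*, D_i) = (3/4, 1/2)$, whose regret is exactly $1/8$ before step 4; demonstrating the required reduction in the adversarial scenario should require exploiting that the UCN adjustments of the good players and the sparsity $\theta' \leq 1/2$ of fast-moving bad players constrain $|\dot D_i|$ strictly below the naive bound of $2$, tightening the regret drop enough to reach the claimed target.
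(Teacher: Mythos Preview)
Your high-level case split (currently bad players on $\mathcal{P}$; good players; formerly-bad players now off $\mathcal{P}$) matches the paper, but you have misread step~4 of the dynamic: only the bad players move (at rate $\dot p_i^*=1$); all good players are \emph{static} during step~4. They do not continue \textbf{UCN}, so they do not remain on $\mathcal{P}$. This misreading derails both halves of your plan.

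For good players the paper's argument is much simpler than yours. Since only the $\theta\le\tfrac12$ fraction of bad players moves at unit rate, every player's discrepancy drifts by at most $2\theta\Delta\le\Delta$. A good player's $p^*$ is fixed, so her regret $(D\pm 2\theta\Delta)(1-p^*)$ (or the preference-switching variant) differs from her starting regret by at most $\Delta$, giving the bound $0.12+\Delta$. Your on-plane calculation $D'(1-D')/2$ is neither what happens nor what delivers this bound. The off-$\mathcal{P}$ players from the balancing step are handled by the same inequality once one notes $D>0.1$ prevents a preference flip when $\Delta<0.1$; your separate bound via $(1-p^*)\le 0.15$ is not needed.

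For bad players the missing key step is the explicit inequality
\[
D - 2\theta(1-p^*) \;\ge\; 0.4 - 0.3 \;=\; 0.1,
\]
using that a bad on-$\mathcal{P}$ player has $D=2p^*-1\ge 0.4$, $1-p^*\le 0.3$, and $\theta\le\tfrac12$. Expanding $r'=(D+2\theta\Delta)(1-p^*-\Delta)$ then shows the regret drops by at least $0.1\Delta$ from at most $\tfrac18$. The calibration is therefore the balance $0.12+\Delta = \tfrac18 - 0.1\Delta$, solved by $\Delta=\tfrac{1}{220}$; it is not a ``boundary good player stays on $\mathcal{P}$'' calculation. The sub-$2$ bound on $|\dot D|$ that you correctly anticipate comes solely from the sparsity $\theta\le\tfrac12$ of the movers, not from any UCN adjustment of the good players.
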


In essence one shows that if $\Delta$ is a small enough time interval (less than $0{\cdot}1$ to be exact), then all bad players will unilaterally decrease their regret by at least $0{\cdot}1\Delta$ and good players won't increase their regret by more than $\Delta$. The time step $\Delta = \frac{1}{220}$ is thus chosen optimally.

\begin{proof}
We have seen via ~\Cref{lem:balancing} that after step 3 the proportion of bad players is at most $\theta \leq \frac{1}{2}$, we wish to show that step 4 reduces maximal regret by at least $\frac{1}{220}$ for every bad player while maintaining a low regret for good players.

Since after step 3 all bad players remain on $\mathcal{P}$, we can consider an arbitrary bad player on the plane $\mathcal{P}$ with regret $r = D(1-p^*)$. Let us suppose that we allow all bad players to unilaterally shift their probabilities to their best response for a time period of $\Delta < 0{\cdot}4 \leq D$ units (the bound implies bad player preferences do not change). This means that  the worst case scenario for their regret is when their discrepancy increases to $D + 2\theta\Delta$. If we let $r'$ be their new regret after this move, we get the following:
$$
r' = (D + 2\theta\Delta)(1-p^*-\Delta) = D(1-p^*) + 2\theta\Delta(1-p^*) -D\Delta - 2\theta\Delta^2
$$
$$
= r -2\theta \Delta^2 + \left( 2\theta(1-p^*) - D \right)\Delta
$$
However, we can use our initial constraints on $D$ and $p^*$ from the fact that the players were bad, along with the fact that $\theta \leq \frac{1}{2}$ to obtain the following:
$$
2\theta(1-p^*) \leq (1-p^*) \leq 0{\cdot}3 < 0{\cdot}4 \leq D
$$
Hence as long as $\Delta < 0{\cdot}4$, $r' < r$ hence we can better the new bad players, without hurting the good players by choosing a suitably small value of $\Delta$.

To see that we don't hurt good players to much, suppose that we have a good player with discrepancy $D$ and best-response mass, $p^*$. By definition, their initial regret is $r = D(1-p^*) < 0{\cdot}12$. There are two extreme cases to what can happen to their regret after the bad players shift their strategies in step 4. Either their discrepancies increase by $2\theta \Delta$, in which case preferences are maintained, or either discrepancies decrease by $2\theta\Delta$ and preferences change (which can only occur when $2\theta\Delta > D$). For the first case we can calculate the new regret $r'$ as follows:
$$
r' = (D + 2\theta \Delta)(1-p^*) = r + 2\theta(1-p^*)\Delta \leq r + (1-p^*)\Delta \leq r+ \Delta
$$
This means that the total change in regret is at most $\Delta$. Note that if a player was originally bad and then shifted according to ~\Cref{lem:balancing} then their discrepancy is at least $0{\cdot}1$. For this reason if we limit ourselves to values of $\Delta < 0{\cdot}1$, then all such players will always fall in this case since their preferences cannot change.

Now we analyse the second case where preferences switch. Since we are only considering $\Delta < 0{\cdot}1$, then we can assume that all such profiles must lie on $\mathcal{P}$. In this case we get the following new regret:
$$
r' = (2\theta\Delta - D)(p) = r + 2\theta p^* \Delta - D \leq r + p^* \Delta  - D \leq r + \Delta
$$
Consequently, in the scenario that preferences change, the change of regret is bounded by $\Delta$ as well. This means that for $\Delta < 0{\cdot}1$, the decrease in regret for bad players is at least:
$$
2\theta \Delta^2 + (D - 2\theta(1-p^*))\Delta > 0{\cdot}1\Delta
$$
And for such time-steps $\Delta$, the regret for good players increases by at most $\Delta$. Thus under these bounds, the optimal value is $\Delta = \frac{1}{220}$ which gives rise to a maximal regret of $\frac{1}{8} - \frac{1}{220} = \frac{137}{1100}$.
\end{proof}

As a final note, we see that this process requires one round of
communication in being able to perform the operations
in~\Cref{lem:balancing}, that is we need to know if
$\theta > \frac{1}{2}$ or not to balance player profiles so that there
are at most the same number of bad players to good
players. Furthermore, in exactly the same fashion as
{\un($\alpha,\eta$)}, we can discretise the above process to
obtain a query-based algorithm that obtains a regret of
$\frac{1}{8} - \frac{1}{220} + \alpha < \frac{1}{8}$ for arbitrary
$\alpha$.

%%% Local Variables:
%%% mode: latex
%%% TeX-master: "large-games-sagt.tex"
%%% End:

\section{Extensions}\label{sec:extensions}

In this section we address two extensions to our previous results:
\begin{itemize}
\item (Section \ref{sec:various-lipschitz}) We extend the algorithm
  \textbf{UCN} to large games with a more general largeness parameter
  $\gamma = \frac{c}{n} \in [0,1]$, where $c$ is a constant.

\item (Section \ref{sec:block-update}) We consider large games with
$k$ actions and largeness parameter $\frac{c}{n}$ (previously we focused on $k=2$).
Our algorithm used a new uncoupled approach that is substantially different from the previous ones we have presented.
\end{itemize}

\subsection{Continuous Dynamics for Binary-action Games with  Arbitrary $\gamma$}\label{sec:various-lipschitz}

We recall that for large games, the largeness parameter $\gamma$
denotes the extent to which players can affect each others'
utilities. Instead of assuming that $\gamma = \frac{1}{n}$ we now let
$\gamma = \frac{c}{n} \in [0,1]$ for some constant $c$. We show that
we can extend \textbf{UCN} and still ensure a better than
$\frac{1}{2}$-equilibrium. We recall that for the original
\textbf{UCN}, players converge to a linear subspace
of strategy/payoff states and achieve a bounded regret. For arbitrary
$\gamma = \frac{c}{n}$, we can extend this subspace of strategy/payoff
states as follows:
\[
\mathcal{P}_\gamma = \Big\{ (p^*,D) \ | \ p^* = \min \left( \frac{1}{2} + \frac{D}{2c}, 1 \right) \Big\}
\]
where $D$ and $p^*$ represent respectively a player's discrepancy and
probability allocated to the best response. For $c = 1$ we recover the
subspace $\mathcal{P}$ as in \textbf{UCN}. Furthermore, if
$|\dot{{p}^*}| \leq 1$ for each player, then $| \dot{D} | \leq 2c$,
which means that we can implement an update as follows:
$$
\dot{{p}^*} = \frac{\dot{D}}{2c}
$$
This leads us to the following natural extension to~\Cref{thm:UNC}:
\begin{theorem}\label{thm:UNC-gamma}
Under the initial conditions $p_i(0) = \frac{1}{2}$ for all $i$, the following continuous dynamic, \textbf{UCN}-$\gamma$, has all players reach $\mathcal{P}_\gamma$ in at most $\frac{1}{2}$ time units. Furthermore, upon reaching $\mathcal{P}_\gamma$ a player never leaves.
\[
 \dot{{p}^*_i}(t) = f(D_i(t), \dot{D}_i(t)) = 
  \begin{cases} 
      \hfill \  1    \hfill & \text{ if } s_i \notin \mathcal{P}_\gamma  \\
      \hfill 0 \hfill & \text{ if } s_i \in \mathcal{P}_\gamma \text{ and } p^*_i >  \frac{1}{2} + \frac{D_i}{2c} \\
      \hfill \frac{\dot{D_i}}{2c} \hfill & \text{ otherwise }
  \end{cases}
\]
\end{theorem}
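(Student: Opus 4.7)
The plan is to lift the proof of Theorem~\ref{thm:UNC} to general largeness $\gamma = c/n$ by replacing the factor $2$ in the old invariance relation by $2c$, and additionally handling the cap at $p^* = 1$ that appears in the definition of $\mathcal{P}_\gamma$. The essential input remains Lemma~\ref{lem:lips}: if each player respects $|\dot{p}_i| \leq 1$, then $|\dot{v}_{ij}| \leq c$ and hence $|\dot{D}_i| \leq 2c$, so the third-case update rule $\dot{p}^*_i = \dot{D}_i/(2c)$ is always feasible with $|\dot{p}^*_i| \leq 1$, exactly as $\dot{p}^*_i = (\dot{v}_{i1} - \dot{v}_{i0})/2$ was feasible in the $c = 1$ case.

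First I would prove the analogue of Lemma~\ref{lem:remain-plane}, showing that $\mathcal{P}_\gamma$ is forward-invariant. In the slope region where $p^*_i = 1/2 + D_i/(2c) < 1$, the third-case update gives $\frac{d}{dt}\bigl(p^*_i - D_i/(2c)\bigr) = 0$, so $p^*_i - D_i/(2c)$ stays at its value $1/2$. In the cap region where $p^*_i = 1$ and $1/2 + D_i/(2c) \geq 1$, the second-case rule forces $\dot{p}^*_i = 0$, which preserves $p^*_i = \min(1/2 + D_i/(2c), 1) = 1$ as long as $D_i \geq c$; if $D_i$ drops back below $c$ we transition smoothly into the slope region and the third case takes over. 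Next I would prove reachability: if $s_i(0) \notin \mathcal{P}_\gamma$ then $p^*_i(0) = 1/2 < \min(1/2 + D_i(0)/(2c), 1)$, the first case applies, and $p^*_i$ grows at unit rate. Since the manifold value $\min(1/2 + D_i(t)/(2c), 1)$ is bounded above by $1$ while $p^*_i$ reaches $1$ at $t = 1/2$, by continuity there is a first time $k \leq 1/2$ with $s_i(k) \in \mathcal{P}_\gamma$, just as in the proof of Theorem~\ref{thm:UNC}. Preference changes are again handled by a continuity argument since $D_i$ vanishes at any such transition.

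The main obstacle I expect is rigorously handling the piecewise nature of $\mathcal{P}_\gamma$ at the kink $(D_i, p^*_i) = (c, 1)$, where the prescribed dynamic itself switches between two rules. To make the invariance argument fully rigorous I would introduce the Lyapunov-type quantity $\Phi(s_i) = p^*_i - \min(1/2 + D_i/(2c), 1)$ and verify, in each regime and at the kink via one-sided derivatives, that $\Phi$ stays identically zero on trajectories starting in $\mathcal{P}_\gamma$. Beyond this piecewise bookkeeping, the argument is a routine rescaling of the $c = 1$ proof.
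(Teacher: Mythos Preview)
Your proposal is correct and follows precisely the approach the paper intends: the paper presents Theorem~\ref{thm:UNC-gamma} as a ``natural extension'' of Theorem~\ref{thm:UNC} and does not give a separate proof, so the expected argument is exactly the rescaling you describe, together with the handling of the cap at $p^*_i = 1$. If anything, your treatment of the kink at $(D_i,p^*_i)=(c,1)$ via the quantity $\Phi$ is more careful than the paper, which only remarks that the dynamic remains continuous (though no longer continuously differentiable) when $c>1$.
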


\noindent
Notice that unlike \textbf{UCN}, this dynamic is no longer necessarily a continuously differentiable function with respect to time when $c > 1$. However, it is still continuous. 

Once again, we note that for all strategy/payoff states, regret can be expressed as
$$
R = (1-p^*)D,
$$
from which  we can prove the following:

\begin{theorem}\label{thm:gamma-binary}
Suppose that $\gamma = \frac{c}{n}$ and that a player's strategy/payoff state lies on $\mathcal{P}_\gamma$, then her regret is at most $\frac{c}{8}$ for $c \leq 2$ and her regret is at most $\frac{1}{2} - \frac{1}{2c}$ for $c > 2$. Furthermore, the equilibria obtained are also $c$-WSNE.
\end{theorem}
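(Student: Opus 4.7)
The plan is to parameterise strategy/payoff states on $\mathcal{P}_\gamma$ by the discrepancy $D$ alone, then directly optimise the regret function $R = (1-p^*)D$ over the valid range of $D$, splitting cases according to whether the unconstrained maximiser lies inside $[0,1]$. The WSNE claim follows from reading off the definition of $\mathcal{P}_\gamma$ directly.

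First I would split $\mathcal{P}_\gamma$ into the ``saturated'' part where $p^* = 1$ (which lies above $D = c$, when it occurs) and the ``linear'' part where $p^* = \tfrac{1}{2} + \tfrac{D}{2c} < 1$, i.e.\ $D \in [0,\min(c,1)]$. On the saturated part the regret vanishes, so the whole question reduces to maximising
\[
R(D) \;=\; (1-p^*)D \;=\; \Bigl(\tfrac{1}{2} - \tfrac{D}{2c}\Bigr)D \;=\; \tfrac{D}{2} - \tfrac{D^2}{2c}
\]
over $D \in [0,\min(c,1)]$. This is a concave quadratic with unconstrained maximiser $D^\ast = c/2$ and value $R(D^\ast) = c/8$.

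Next I would distinguish the two cases in the theorem. If $c \le 2$, then $D^\ast = c/2 \le 1$ lies in the feasible interval, so the maximum is exactly $c/8$, matching the first bound. If $c > 2$, then $D^\ast = c/2 > 1$ lies outside $[0,1]$, so by concavity the maximum on $[0,1]$ is attained at the boundary $D = 1$, giving $R(1) = \tfrac{1}{2} - \tfrac{1}{2c}$, matching the second bound. (Note that since $D$ is a difference of utilities in $[0,1]$ we always have $D \le 1$, which is what forces the boundary case.)

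For the WSNE claim, recall that in a binary-action game a profile is $\epsilon$-WSNE iff every player who genuinely mixes, i.e.\ has $p_i \in (0,1)$, has discrepancy $D_i < \epsilon$; players playing pure strategies are automatically fine. If $s_i \in \mathcal{P}_\gamma$ and player $i$ mixes, then $p_i^* < 1$, and by the definition of $\mathcal{P}_\gamma$ we have $p_i^* = \tfrac{1}{2} + \tfrac{D_i}{2c}$, whence $D_i = 2c(p_i^* - \tfrac{1}{2}) < c$. Thus every mixing player has discrepancy strictly less than $c$, which is exactly the $c$-WSNE condition. No step looks like a serious obstacle here; the only thing to be careful about is verifying that the two pieces of $\mathcal{P}_\gamma$ (linear and saturated) are handled uniformly, and that the physically meaningful constraint $D \le 1$ is the binding one precisely when $c > 2$.
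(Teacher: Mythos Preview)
Your proposal is correct and follows essentially the same approach as the paper: both parameterise the regret $(1-p^*)D$ along $\mathcal{P}_\gamma$ by the discrepancy $D$, identify the unconstrained maximiser $D=c/2$ of the resulting concave quadratic, and split into the cases $c\le 2$ (interior maximum, regret $c/8$) and $c>2$ (boundary maximum at $D=1$, regret $\tfrac{1}{2}-\tfrac{1}{2c}$). Your WSNE argument is likewise the contrapositive of the paper's observation that $D\ge c$ forces $p^*=1$; you have simply spelled out the details more fully than the paper does.
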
 

\begin{proof}
If $c \leq 2$, then regret is maximised when $D = \frac{c}{2}$ and consequently when $p^* = \frac{3}{4}$. This results in a regret of $\frac{c}{8}$. On the other hand, if $c > 2$, then regret is maximised when $D = 1$ and consequently $p^* = \frac{1}{2} + \frac{1}{2c}$. This results in a regret of $\frac{1}{2} - \frac{1}{2c}$.

As for the second part of the theorem, from the definition of $\mathcal{P}_\gamma$ and from the definition of $\epsilon$-WSNE in section \ref{def:WSNE} it is straightforward to see that when $D \geq c$, $p^* = 1$ which means that no weight is put on the strategy whose utility is at most $c$ from that of the best response.%\sw{did we define this?}
\end{proof}

Thus we obtain a regret that is better than simply randomizing between both strategies, although as should be expected, the advantage goes to zero as the largeness parameter increases.

\subsubsection{Discretisation and Query Complexity}

In the same way as \textbf{UN}-$(\alpha,\eta)$, where we discretised \textbf{UN}, Theorem \ref{thm:gamma-binary} can be discretised to yield the following result.

\begin{theorem}
For a given accuracy parameter $\alpha$ and correctness probability $\eta$, we can implement a query-based discretisation of \textbf{UCN}-$\gamma$ that with probability $1-\eta$ correctly computes an $\epsilon$-approximate Nash equilibrium for

\[
 \epsilon = 
  \begin{cases} 
      \hfill \  \frac{c}{8} + \alpha    \hfill & \text{ if } c \leq 2  \\
      \hfill \frac{1}{2} - \frac{1}{2c} + \alpha \hfill & \text{ if } c > 2
  \end{cases}
\]
Furthermore the discretisation uses $O \left( \frac{1}{\alpha^4} \left( \frac{n}{\alpha \eta} \right) \right)$ queries. 
\end{theorem}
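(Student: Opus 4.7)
The plan is to mirror closely the discretisation argument used for \textbf{UN}($\alpha,\eta$) in the proof of Theorem~\ref{UN-proof}, replacing the plane $\mathcal{P}$ and its increment analysis by their $\gamma$-analogues. First I would define a thickened region $\mathcal{P}_\gamma^\lambda$ around $\mathcal{P}_\gamma$, namely the set of strategy/payoff states with $|p^* - \min(\tfrac{1}{2} + \tfrac{D}{2c},1)| \leq \lambda$, and compute the worst-case regret $R = (1-p^*)D$ on this thickening. A direct calculation (analogous to Lemma~\ref{lem:max-regret-approx}) shows that for $c \leq 2$ the maximum regret on $\mathcal{P}_\gamma^\lambda$ is $\tfrac{c}{8} + O(\lambda)$ and for $c>2$ it is $\tfrac{1}{2} - \tfrac{1}{2c} + O(\lambda)$; solving the resulting equation for $\lambda$ in terms of the desired slack $\alpha$ yields a choice $\lambda = \Theta(\alpha)$.

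Next I would present the algorithm: each player starts at $p_i(0) = \tfrac{1}{2}$ and, for $N = \lceil 2/\Delta \rceil$ rounds with step size $\Delta = \lambda/4$, calls the oracle $\mathcal{Q}_{\Delta, \eta/N}$ from Lemma~\ref{lem:approx-query} to estimate payoffs. In each round, a player whose estimated state $\hat s_i$ lies outside $\mathcal{P}_\gamma^{\lambda/4}$ increments $p_i^*$ by $\Delta$; otherwise she updates $p_i^* \leftarrow p_i^* + \tfrac{1}{2c}(\Delta \hat v_{i1} - \Delta \hat v_{i0})$, following the continuous rule $\dot p_i^* = \dot D_i/(2c)$ of Theorem~\ref{thm:UNC-gamma}. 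By Lemma~\ref{lem:approx-query} and a union bound over at most $2n(N+1)$ oracle calls, all estimates have additive error at most $\Delta$ simultaneously with probability at least $1 - \eta$.

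Under this event, I would show (i) within $N$ rounds every player's true state lies in $\mathcal{P}_\gamma^{\lambda/2}$, and (ii) once inside, it never escapes $\mathcal{P}_\gamma^{\lambda/2}$ in subsequent rounds. Claim (i) follows because the pre-plane rule increases $p_i^*$ by $\Delta$ each round and within $N = O(1/\lambda)$ rounds saturates at $1$, and a $\gamma$-analogue of Lemma~\ref{lem:bounded-step} (with normal vector $(-\tfrac{1}{2c},\tfrac{1}{2c},1)$) guarantees that each such increment changes $\hat s_i\cdot\vec n$ by at most $\lambda/2$, so the trajectory cannot overshoot $\mathcal{P}_\gamma^{\lambda/4}$ by more than sampling noise. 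Claim (ii) is the discrete analogue of Lemma~\ref{lem:remain-plane}: the on-plane update was designed so that the deterministic drift is zero, and the two $\lambda/4$ buffers (sampling error and step size) absorb all fluctuations. At the final round a last probability update of at most $\Delta = \lambda/4$ can push the true state from $\mathcal{P}_\gamma^{\lambda/2}$ into $\mathcal{P}_\gamma^\lambda$ at worst, giving the claimed regret bound by the choice of $\lambda$.

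The main obstacle is handling the fact that for $c > 1$ the map defining $\mathcal{P}_\gamma$ contains the kink $\min(\tfrac{1}{2}+\tfrac{D}{2c},1)$, so the analogue of Lemma~\ref{lem:bounded-step} has to be stated carefully near the saturation region $p^* = 1$. I would dispose of this by treating the saturated branch separately: once $p^*$ reaches $1$ the update rule simply clamps, which can only decrease regret, so the buffer argument applies verbatim on the linear branch. The query count is then $(N+1) \cdot O(\Delta^{-3} \log(nN/\eta)) = O(\lambda^{-4} \log(n/(\lambda\eta))) = O(\alpha^{-4}\log(n/(\alpha\eta)))$, matching the stated bound.
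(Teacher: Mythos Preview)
Your proposal is correct and is exactly the approach the paper intends: the paper's own ``proof'' of this theorem is the single sentence ``In the same way as \textbf{UN}-$(\alpha,\eta)$, where we discretised \textbf{UCN}, Theorem~\ref{thm:gamma-binary} can be discretised to yield the following result,'' and you have faithfully spelled out that discretisation, including the correct $\gamma$-analogue of the normal vector $(-\tfrac{1}{2c},\tfrac{1}{2c},1)$ and the observation that the $c$-fold larger utility drift is exactly cancelled by the $\tfrac{1}{2c}$ factor in the on-plane update. Your treatment of the saturation kink at $p^*=1$ is an appropriate addition, and your query count $O(\alpha^{-4}\log(n/(\alpha\eta)))$ is the intended bound (the missing $\log$ in the paper's statement is evidently a typo, cf.\ Theorem~\ref{UN-proof}).
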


\subsection{Equilibrium Computation for $k$-action Games}\label{sec:block-update}

When the number of pure strategies per player is $k>2$, the initial ``strawman''
idea corresponding to Observation~\ref{obs:half} is to have all $n$ players randomize uniformly over their $k$ strategies.
Notice that the resulting regret may in general be as high as $1-\frac{1}{k}$.
In this section we give a new uncoupled-dynamics approach for computing approximate equilibria in $k$-action games where (for largeness parameter $\gamma=\frac{1}{n}$) the worst-case regret approaches $\frac{3}{4}$ as
$k$ increases, hence improving over uniform
randomisation over all strategies. Recall that in general we are considering
$\gamma = \frac{c}{n}$ for fixed $c \in [0,n]$.
The following is just a simple extension of the
payoff oracle $\cQ_{\beta,\delta}$ to the setting with $k$ actions:
for any input mixed strategy profile $p$, the oracle will with
probability at least $1 - \delta$, output payoff estimates for $p$
with error at most $\beta$ for all $n$ players.

\begin{paragraph}{Estimating payoffs for mixed profiles in $k$-action
    games.}
  Given a payoff oracle $\cQ$ and any target accuracy parameter
  $\beta$ and confidence parameter $\delta$, consider the following
  procedure to implement an oracle $\cQ_{\beta, \delta}$:
\begin{itemize}
\item For any input mixed strategy profile $p$, compute a new mixed
  strategy profile $p' = (1 - \frac{\beta}{2})p + (\frac{\beta}{2k})\mathbf{1}$ such
  that each player $i$ is playing uniform distribution with
  probability $\frac{\beta}{2}$ and playing distribution $p_i$ with
  probability $1 - \frac{\beta}{2}$.
\item Let $m = \frac{64k^2}{\beta^3} \log\left( 8n /\delta \right)$, and
  sample $m$ payoff queries randomly from $p'$, and call the oracle
  $\cQ$ with each query as input to obtain a payoff vector.
\item Let $\hat u_{i,j}$ be the average sampled payoff to player $i$
  for playing action $j$.\footnote{If the player $i$ never plays an
    action $j$ in any query, set $\hat u_{i,j} = 0$.} Output the
  payoff vector $(\hat{u}_{ij})_{i\in [n], j\in\{0, 1\}}$.
\end{itemize}
\end{paragraph}

As in previous sections, we begin by assuming that our algorithm has access to $\mathcal{Q}_M$, the more powerful query oracle that returns exact expected payoffs with regards to mixed strategies. We will eventually show in section \ref{sec:query-many} that this does not result in a loss of generality, as when utilising $\mathcal{Q}_{\beta, \delta}$ we incur a bounded additive loss with regards to the approximate equilibria we obtain.

The general idea of Algorithm~\ref{approx_nash_block} is as follows.
For a parameter $N\in\mathbb{N}$, every player uses a mixed strategy
consisting of a discretised distribution in which a player's probability
is divided into $N$ quanta of probability $\frac{1}{N}$, each of which
is allocated to a single pure strategy. We refer to these quanta as
``blocks'' and label them $B_1,\ldots,B_N$.
Initially, blocks may be allocated arbitrarily to pure strategies.
Then in time step $t$, for $t=1,\ldots,N$, block $t$ is reallocated
to the player's best response to the other players' current mixed
strategies.

The general idea of the analysis of Algorithm~\ref{approx_nash_block}
is the following. In each time step, a player's utilities change by
at most $n\gamma/N=c/N$. Hence, at the completion of
Algorithm~\ref{approx_nash_block}, block $N$ is allocated to a
nearly-optimal strategy, and generally, block $N-r$ is allocated to
a strategy whose closeness to optimality goes down as $r$ increases,
but enables us to derive the improved overall performance of each
player's mixed strategy.

\begin{algorithm} 
\caption{Block Equilibrium Computation \textbf{BU} (performed by each player)}
\label{approx_nash_block}                           
\begin{algorithmic}                    % enter the algorithmic environment
    \REQUIRE 
    \STATE Parameter: $N \in \mathbb{N}$ 
    \STATE
	\INITIALIZATION
	\STATE Blocks $B_1,\ldots,B_N$; $\{$a block represents $\frac{1}{N}$ of the player's mixed strategy $\}$
	\STATE Allocate each $B_i$ to arbitrary $j\in[k]$;
	\STATE   
      \item[\textbf{Block Updates:}] 

        \FOR{$t = 1,\ldots,N$}

        \STATE Observe expected utilities
        $(u_{j})_{ j \in [k]}$ to the current mixed strategy profile
        $\vec{p}=(\vec{p}_i)_{i\in[n]}$;
        \STATE Reallocate $B_t$ to best response to $\vec{p}$;
                
    \ENDFOR   
    \STATE
    \RETURN $\vec{p}_i = (p_{j})_{ j \in [k]}$ 
  
\end{algorithmic}
\end{algorithm}

\begin{theorem}\label{first-bound}
\textbf{BU} returns a mixed strategy profile $(\vec{p}_i)_{i \in [n]}$ that is an $\epsilon$-NE when:
\[
 \epsilon =
  \begin{cases} 
      \hfill c \left( 1 + \frac{1}{N} \right)    \hfill & \text{ if } c \leq \frac{1}{2} \\
      \hfill 1 - \frac{1}{4c} + \frac{1}{2N} \hfill & \text{ if } c > \frac{1}{2} \\
  \end{cases}
\]
\end{theorem}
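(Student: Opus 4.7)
Fix a player $i$, let $p^{(t)}$ denote the mixed-strategy profile after iteration $t$ (with $p^{(0)}$ the initial allocation), and let $j_t \in \argmax_{j} u_i(j, p^{(t-1)}_{-i})$ be the pure strategy to which block $B_t$ is reallocated. Since $i$'s final mixed strategy places mass $1/N$ on each $j_t$, the regret decomposes as
\[
reg(p^{(N)}, i) = \frac{1}{N} \sum_{t=1}^N \bigl[ u_i(j^*, p^{(N)}_{-i}) - u_i(j_t, p^{(N)}_{-i}) \bigr],
\]
where $j^* \in \argmax_j u_i(j, p^{(N)}_{-i})$. The strategy is to bound each summand using largeness and then sum.

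For the per-summand bound, the key estimate is that a single block reallocation by any $i' \ne i$ changes $u_i(j, p_{-i})$ by at most $\gamma/N = c/(nN)$ for every fixed pure $j$. Conditioning on the $N-1$ unchanged blocks of $i'$ writes the difference as $(1/N)\,\mathbb{E}[u_i(j, b, a_{-i,i'}) - u_i(j, a, a_{-i,i'})]$, and the pure-strategy largeness condition bounds the bracket by $\gamma$. Between iterations $t-1$ and $N$ the other $n-1$ players collectively perform $(n-1)(N-t+1)$ block updates, which accumulate to $|u_i(j, p^{(N)}_{-i}) - u_i(j, p^{(t-1)}_{-i})| \le c(N-t+1)/N$ for every $j$. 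Applying this bound to both $j^*$ and $j_t$ and chaining through the best-response inequality $u_i(j_t, p^{(t-1)}_{-i}) \ge u_i(j^*, p^{(t-1)}_{-i})$ gives per-block excess regret at most $\min(1, 2c(N-t+1)/N)$, where the $\min$ reflects the trivial bound from $u_i \in [0,1]$.

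Substituting $r = N - t + 1$ reduces the theorem to evaluating $S = \frac{1}{N}\sum_{r=1}^N \min(1, 2cr/N)$. For $c \le 1/2$ the linear term never saturates and $S = c(1 + 1/N)$ by elementary summation, matching the first case. For $c > 1/2$, I split the sum at $s = \lfloor N/(2c) \rfloor$ into a triangular prefix of the linear term plus a saturated tail of $1$'s; writing $s = N/(2c) - \delta$ with $\delta \in [0, 1)$ and simplifying makes the $\delta$-dependent terms cancel, yielding $S = 1 - 1/(4c) + 1/(2N)$ as required.

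The main obstacle is obtaining the sharp per-block Lipschitz constant $\gamma/N$ rather than the $2\gamma/N$ that a direct invocation of \Cref{lem:lips} would produce (because shifting $1/N$ mass between two non-baseline pure strategies moves two coordinates of the reduced representation used there). This factor of two would inflate every resulting bound and destroy the clean form of the stated theorem. Getting the sharp constant requires the block-conditioning argument above, which treats a single block as a fraction-$1/N$ pure-strategy deviation and invokes pure-strategy largeness directly; with that in hand the remaining case analysis is a routine summation.
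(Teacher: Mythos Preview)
Your proposal is correct and follows essentially the same approach as the paper: decompose the regret block-by-block, bound the regret of block $B_t$ by $\min\{1, 2c(N-t+1)/N\}$ using the largeness condition, and sum. Your block-conditioning argument for the sharp per-move constant $\gamma/N$ and your handling of the floor $s=\lfloor N/(2c)\rfloor$ in the $c>\tfrac12$ case are more carefully spelled out than in the paper (which treats $N/(2c)$ as an integer and glosses over why the per-step regret drift is $2c/N$), but the underlying argument is identical.
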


Notice for example that for $\gamma=\frac{1}{n}$ (i.e. putting $c=1$),
each player's regret is at most $\frac{3}{4}+\frac{1}{2N}$, so we
can make this arbitrarily close to $\frac{3}{4}$ since $N$ is a
parameter of the algorithm.

\begin{proof}
For an arbitrary player $i \in [n]$, in each step $t = 1,...,N$,
probability block $B_t$ is re-assigned to $i$'s current best response.

Since every player is doing the same transfer of probability, by the
largeness condition of the game, one can see that every block's assigned strategy incurs a regret that increases by at most $\frac{2c}{N}$ 
at every time step. This means that at the end of $N$ rounds, the
$j$-th block will at worst be assigned to a strategy that has
$\frac{\min \{1, (2c)j \}}{N}$ regret.
This means we can bound
a player's total regret as follows:
$$
R \leq \sum_{i=1}^N \frac{\min \{1, (2c)i \}}{N} \cdot \frac{1}{N}
$$  
There are two important cases for this sum: when $2c \leq 1$ and when $2c > 1$. In the first case:
$$
R \leq \sum_{i=1}^N \frac{2c i}{N^2} = n\gamma \left( 1 + \frac{1}{N} \right)
$$
And in the second:
$$
R \leq \left( \sum_{i=1}^{N/2c} \frac{2c i}{N^2} \right) + \left( N - \frac{N}{2c} \right) \cdot \frac{1}{N} = 1 - \frac{1}{4c} + \frac{1}{2N}
$$

\end{proof}

\noindent
In fact, we can slightly improve the bounds in Theorem \ref{first-bound}
via introducing a dependence on $k$.
In order to do so, we need to introduce some definitions first. 

\begin{definition} \label{truncated-triangles}
We denote $\mathcal{A}^{b,h}$ as the truncated triangle in the cartesian plane under the line $y = hx$ for $x \in [0,b]$ and height capped at $y = 1$. Note that if $bh \leq 1$ the truncated triangle is the entire triangle, unlike the case where $bh > 1$. See figure \ref{fig:triangle-def} for a visualisation.

\end{definition}

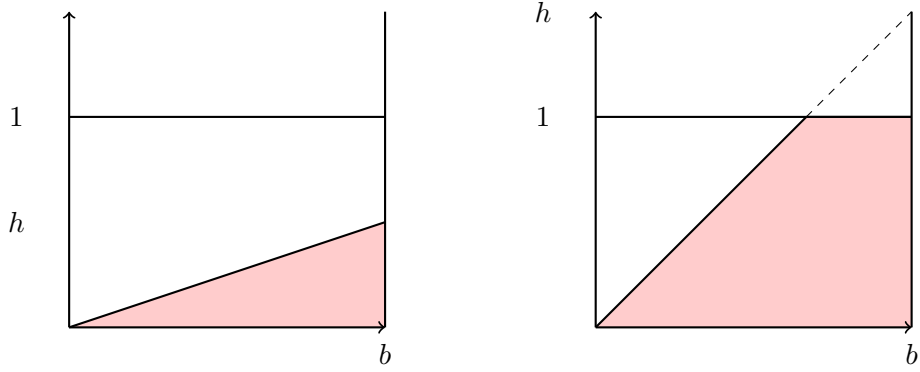
\begin{figure}
\center{
\begin{tikzpicture}[scale=0.7]
\tikzstyle{xxx}=[dashed,thick]

%Filling in colors
\fill[red!20](-8,2)--(-2,2)--(-2,4)--(-2,4)--cycle; %left triangle
\fill[red!20](2,2)--(8,2)--(8,6)--(6,6)--cycle; %left triangle

%Left x-axis
\draw[thick,->](-8,2)--(-2,2);
\draw[thick](-8,6)--(-2,6);

%Right x-axis
\draw[thick,->](2,2)--(8,2);
\draw[thick](2,6)--(8,6);

%Left y-axis
\draw[thick,->](-8,2)--(-8,8);
\draw[thick](-2,2)--(-2,8);

%Right y-axis
\draw[thick,->](2,2)--(2,8);
\draw[thick](8,2)--(8,8);

%Left Triangle outline
\draw[thick](-8,2)--(-2,4);

%Right Triangle outline
\draw[thick](2,2)--(6,6);
\draw[thick](6,6)--(8,6);
\draw[dashed](6,6)--(8,8);

%Height labels
\node at(-9,4){$h$};
\node at(1,8){$h$};
\node at(-9,6){$1$};
\node at(1,6){$1$};

%Base labels
\node at(8,1.5){$b$};
\node at(-2,1.5){$b$};

\end{tikzpicture}
\caption{Visualisation of $\mathcal{A}^{b,h}$ when $bh\leq 1$ (Left) and $bh > 1$ (Right).}\label{fig:triangle-def}
}
\end{figure}

\begin{definition} \label{left-sum}
For a given truncated triangle $\mathcal{A}^{b,h}$ and a partition of the base, $\mathcal{P} = \{x_1,...,x_r\}$ where $0\leq x_1 \leq \ldots \leq x_r \leq b$, we denote the left sum of $\mathcal{A}^{b,h}$ under $\mathcal{P}$ by $LS(\mathcal{A}^{b,h}, \mathcal{P})$ (for reference see figure \ref{fig:left-sum}) and define it as follows:
$$
LS(\mathcal{A}^{b,h}, \mathcal{P}) = \sum_{i=1}^{|\mathcal{P}|} (hx_i)(x_{i+1} - x_i)
$$
\end{definition}

\begin{figure}
\center{
\begin{tikzpicture}[scale=0.7]
\tikzstyle{xxx}=[dashed,thick]

%\draw[cyan!20] (0,0) grid (12,10);

% Regret area
\fill[red!20](0,2)--(2,2)--(2,3)--(0,3)--cycle; %bar 1
\fill[red!20](2,2)--(4.5,2)--(4.5,5)--(2,5)--cycle; %bar 2
\fill[red!20](4.5,2)--(8,2)--(8,7.5)--(4.5,7.5)--cycle; %bar 3
\fill[red!20](6,2)--(7,2)--(7,3.5)--(6,3.5)--cycle; %bar 4
\fill[red!20](7,2)--(8,2)--(8,8)--(7,8)--cycle; %bar 5
\fill[red!20](8,2)--(9,2)--(9,8)--(8,8)--cycle; %bar 6

%x-axis
\draw[thick, <->](-1,2)--(9,2);

%Length labels
\node at (-1.5,2){$0$};
\node at (-1.5,8){$1$};

%Left y-axis
\draw[thick, <->](-1,2)--(-1,8);

%Right y-axis
\draw[thick](9,2)--(9,8);

%Triangle outline
\draw[thick](-1,2)--(5,8);
\draw[thick](5,8)--(9,8);

%Sub-optimality  bars
\draw[thick](0,2)--(0,3);
\draw[thick](2,2)--(2,5);
\draw[thick](4.5,2)--(4.5,7.5);
\draw[thick](7,2)--(7,8);
\draw[thick](8,2)--(8,8);

%Regret markers
%\node at(10,2){$R_1$};
%\node at(10,3.5){$R_2$};
%\node at(10,4.5){$R_3$};
%\node at(10,6.5){$R_4$};
%\node at(10,8){$R_5$};

%Partition labels
\node at(-1,1.5){$0$};
\node at(0,1.5){$x_1$};
\node at(2,1.5){$x_2$};
\node at(5,1.5){$x_3$};
\node at(7,1.5){$x_4$};
\node at(8,1.5){$x_5$};
\node at(9,1.5){$b$};

\end{tikzpicture}
\caption{Example of left sum of five-element partition of base in the case where $bh>1$}\label{fig:left-sum}
}
\end{figure}
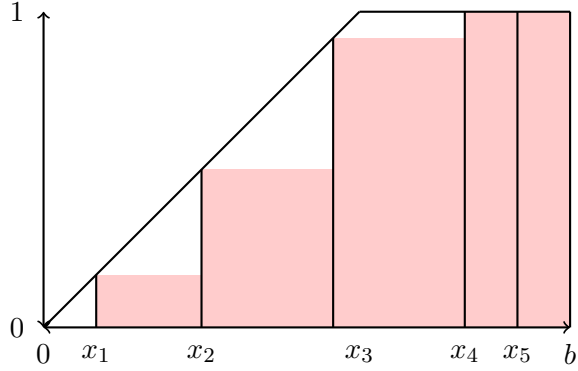

\noindent
With these definitions in hand, we can set up a correspondence between
the worst case regret of \textbf{BU} and left sums of
$\mathcal{A}^{(1+\frac{1}{N}), 2c}$. Suppose in the process of
\textbf{BU} a player has blocks $B_1,...,B_N$ in the
queue. Furthermore, without loss of generality, suppose that her $k$
strategies are sorted in ascending order of utility so that
$u_1,...,u_k$ where $u_j$ is the expected utility of the $j$-th
strategy at the end of the process. Furthermore, let $R_j = u_1 - u_j$
(i.e. the regret of strategy $j$), so that we also have
$0 = R_1 \leq R_2 \leq ... \leq R_k \leq 1$. If $N$ is much larger
than $k$, then by the pigeon-hole principle, many blocks will be
assigned to the same strategy, and hence will incur the same regret.
  However, as in the analysis of the
previous bounds, each block has restrictions as to how
much regret their assigned strategy can incur due to the largeness condition of the game. In
particular, the assigned strategy of block $B_b$ can only be
assigned to a strategy $j$ such that
$R_j \leq \min \{1, (2c)\} \cdot \left(\frac{b}{N} \right)$. For such
an assignment, since the block has probability mass $\frac{1}{N}$, it
contributes a value of
$R_j \cdot \left(\frac{j}{N} \right) \left(\frac{1}{N} \right)$ to the
overall regret of a player. Hence for fixed regret values
$(R_1,..,R_k)$, we can pick a valid assignment of these
values to blocks and get an expression for total regret that can
be visualised geometrically in figure \ref{fig:sup-optimal-k}.

The next important question is what valid assignment of blocks to
regret values results in the maximal amount of total regret for a
player. In figure \ref{fig:sup-optimal-k}, Block 1 is assigned to
strategy 1, Blocks 2,3, and 7 are assigned to strategy 2, blocks 4 and
5 are assigned to strategy 3, block 5 is assigned to strategy 4 and
finally blocks 8 and 9 are assigned to strategy 5.

One can see that this does not result in maximal regret. Rather it is simple to see that a greedy allotment of blocks to regret values results in maximal total regret. Such a greedy allotment can be described as follows: assign as many possible (their regret constraints permitting) blocks at the end of the queue to $R_k$, then repeat this process one-by-one for $R_i$ earlier in the queue. This is visualised in figure \ref{fig:optimal-k}, and naturally leads to the following result:

\begin{theorem}\label{reduction-triangle}
For any fixed $R_1,...,R_k$, the worst case assignment of probability blocks $B_b$ to strategies corresponds to a left sum of $\mathcal{A}^{(1+\frac{1}{N}), 2c}$ for some partition of $[0,1+\frac{1}{N}]$ with cardinality at most $k-1$.

\end{theorem}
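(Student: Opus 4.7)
The plan is to reduce the combinatorial optimization of picking the worst-case block assignment to a purely geometric left-sum computation, via a standard exchange-and-greedy argument followed by an explicit bijection.

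First, I would set things up formally: an assignment is a map $\sigma:\{1,\ldots,N\}\to\{1,\ldots,k\}$ subject to $R_{\sigma(b)} \leq \min\{1,(2c)(b/N)\}$, with objective $\tfrac{1}{N}\sum_b R_{\sigma(b)}$. I would then run an exchange argument to show that without loss of generality a regret-maximizing $\sigma$ can be taken to be monotone nondecreasing in $b$: if $b<b'$ but $R_{\sigma(b)}>R_{\sigma(b')}$, the feasibility constraint for the later block $b'$ is strictly weaker than for $b$, so swapping the assignments preserves both feasibility and the objective. Iterating, we reach a monotone optimum. Among monotone assignments, a second exchange argument shows the greedy one is optimal: push the largest regret value $R_k$ to the largest feasible suffix of blocks, then fill in $R_{k-1}$ on the next contiguous block to its left, and so on. Any monotone assignment that does not do this can be improved by shifting mass from a smaller $R_j$ to a larger $R_{j'}$ on the same block, which is allowed by monotonicity plus feasibility.

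Next, I would extract the geometric picture. The greedy partitions $\{1,\ldots,N\}$ into at most $k$ contiguous groups, one per distinct regret value actually used; the group assigned to $R_1=0$ is irrelevant. Let $b_1<b_2<\cdots<b_m$ be the starting block of each nontrivial group (so $m\leq k-1$) and set $x_j = b_j/N$ together with $x_{m+1}=1+\tfrac{1}{N}$, which gives a valid partition of $[0,1+\tfrac{1}{N}]$ of cardinality at most $k-1$. In the worst case over the choice of the $R_j$'s (which is the content of ``worst case assignment''), the feasibility constraint binds at the first block of each group, so that $R_j = \min\{1,2c\,x_j\}$, and the total regret of the greedy is
\[
\frac{1}{N}\sum_{j=1}^m R_j \cdot (b_{j+1}-b_j) \;=\; \sum_{j=1}^m (2c\,x_j)(x_{j+1}-x_j) \;=\; LS\bigl(\mathcal{A}^{(1+1/N),\,2c},\{x_1,\ldots,x_m\}\bigr),
\]
which is exactly the desired left sum.

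The main obstacle will be aligning the discrete block boundaries with the continuous partition of $[0,1+\tfrac{1}{N}]$ cleanly, and making precise that ``worst case'' lets us take the binding constraint $R_j = 2c\,x_j$ (so that the regret equals, rather than only lower-bounds, the left sum); once the exchange argument has forced the greedy structure, this is really just bookkeeping. Everything else is a direct translation between the staircase of contiguous greedy groups and the rectangles in the definition of $LS$.
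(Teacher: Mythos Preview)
Your approach is the same as the paper's: the paper's ``proof'' is really the discussion preceding the theorem, which asserts that the greedy allotment (fill the tail with $R_k$, then $R_{k-1}$, etc.) maximizes regret and then points at Figure~\ref{fig:optimal-k}. Your exchange argument is a correct formalization of that, and considerably more careful than what the paper actually writes.

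There is one precision issue. The theorem is stated \emph{for fixed} $R_1,\ldots,R_k$, so ``worst case assignment'' refers only to the choice of $\sigma$, not to a further optimization over the $R_j$'s. Consequently your displayed equality
\[
\frac{1}{N}\sum_{j=1}^m R_j\,(b_{j+1}-b_j) \;=\; \sum_{j=1}^m (2c\,x_j)(x_{j+1}-x_j)
\]
is not generally an equality: with $x_j=b_j/N$ and $b_j$ the first block feasible for $R_j$, you only have $R_j\le \min\{1,2c\,x_j\}$ (the ceiling in $b_j=\lceil NR_j/(2c)\rceil$ gives a one-block slack), so the greedy regret is \emph{at most} the left sum, not equal to it. Your parenthetical ``which is the content of `worst case assignment''' is a misreading of the statement; the binding-constraint step is where you silently switch from fixed $R_j$'s to worst-case $R_j$'s. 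Since everything downstream in the paper only uses the upper bound (one then maximizes over all $(k-1)$-element partitions), replacing ``$=$'' by ``$\le$'' and dropping the appeal to optimizing over the $R_j$'s fixes the argument with no further changes. The discretization ``bookkeeping'' you flag is exactly this point, and it is not quite as innocuous as you suggest, but it is the only adjustment needed.
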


\begin{figure}
\center{
\begin{tikzpicture}[scale=0.7]
\tikzstyle{xxx}=[dashed,thick]

%\draw[cyan!20] (0,0) grid (12,10);

% Regret area
\fill[red!20](0,2)--(1,2)--(1,2)--(0,2)--cycle; %bar 1
\fill[red!20](1,2)--(2,2)--(2,3.5)--(1,3.5)--cycle; %bar 2
\fill[red!20](2,2)--(3,2)--(3,3.5)--(2,3.5)--cycle; %bar 3
\fill[red!20](3,2)--(4,2)--(4,4.5)--(3,4.5)--cycle; %bar 4
\fill[red!20](4,2)--(5,2)--(5,6.5)--(4,6.5)--cycle; %bar 5
\fill[red!20](5,2)--(6,2)--(6,4.5)--(5,4.5)--cycle; %bar 6
\fill[red!20](6,2)--(7,2)--(7,3.5)--(6,3.5)--cycle; %bar 7
\fill[red!20](7,2)--(8,2)--(8,8)--(7,8)--cycle; %bar 8
\fill[red!20](8,2)--(9,2)--(9,8)--(8,8)--cycle; %bar 9

%x-axis
\draw[thick](0,2)--(8,2);

%length for blocks
\draw[thick,<->](0,0.5)--(9,0.5);

%Length labels
\node at(0,-0.5){$0$};
\node at(9,-0.5){$1$};
\node at (-3,2){$0$};
\node at (-3,8){$1$};

%Left y-axis
\draw[thick](-1,2)--(-1,8);

%Length of y axis
\draw[thick,<->](-2,2)--(-2,8);

%Right y-axis
\draw[thick](9,2)--(9,8);

%Triangle outline
\draw[thick](-1,2)--(5,8);

%Regret lines
\draw[thick](-1,8)--(9,8);
\draw[thick](-1,6.5)--(9,6.5);
\draw[thick](-1,4.5)--(9,4.5);
\draw[thick](-1,3.5)--(9,3.5);
\draw[thick](-1,2)--(9,2);

%Sub-optimality  bars
\draw[thick,->](0,2)--(0,3);
\draw[thick,->](1,2)--(1,4);
\draw[thick,->](2,2)--(2,5);
\draw[thick,->](3,2)--(3,6);
\draw[thick,->](4,2)--(4,7);
\draw[thick,->](5,2)--(5,8);
\draw[thick,->](6,2)--(6,8);
\draw[thick,->](7,2)--(7,8);
\draw[thick,->](8,2)--(8,8);

%Regret markers
\node at(10,2){$R_1$};
\node at(10,3.5){$R_2$};
\node at(10,4.5){$R_3$};
\node at(10,6.5){$R_4$};
\node at(10,8){$R_5$};

%Block labels
\node at(0,1.5){$B_1$};
\node at(1,1.5){$B_2$};
\node at(2,1.5){$B_3$};
\node at(3,1.5){$B_4$};
\node at(4,1.5){$B_5$};
\node at(5,1.5){$B_6$};
\node at(6,1.5){$B_7$};
\node at(7,1.5){$B_8$};
\node at(8,1.5){$B_9$};

%allotment of regret to blocks
\draw (8,8) circle (2mm);
\draw (7,8) circle (2mm);
\draw (6,3.5) circle (2mm);
\draw (5,4.5) circle (2mm);
\draw (4,6.5) circle (2mm);
\draw (3,4.5) circle (2mm);
\draw (2,3.5) circle (2mm);
\draw (1,3.5) circle (2mm);
\draw (0,2) circle (2mm);

\end{tikzpicture}
\caption{For $N$ = 9 and $k$ = 5, and $c > \frac{1}{2}$, this shows a visualisation of a feasible allotments of regret values to blocks after \textbf{BU}. Note that this does not exhibit worst case regret.}\label{fig:sup-optimal-k}
}
\end{figure}
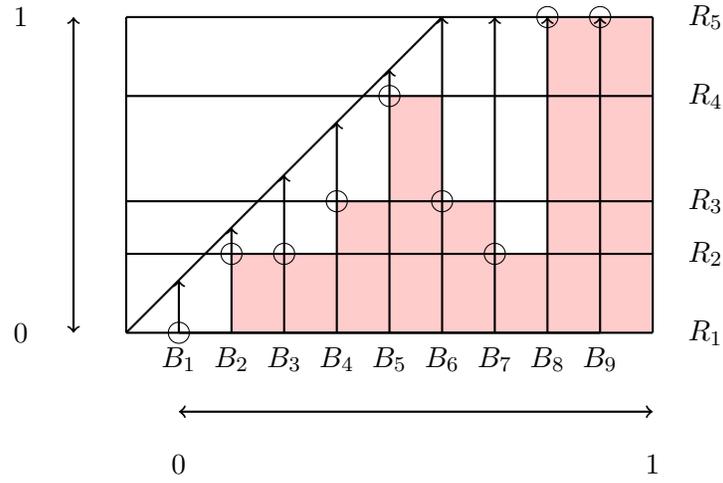

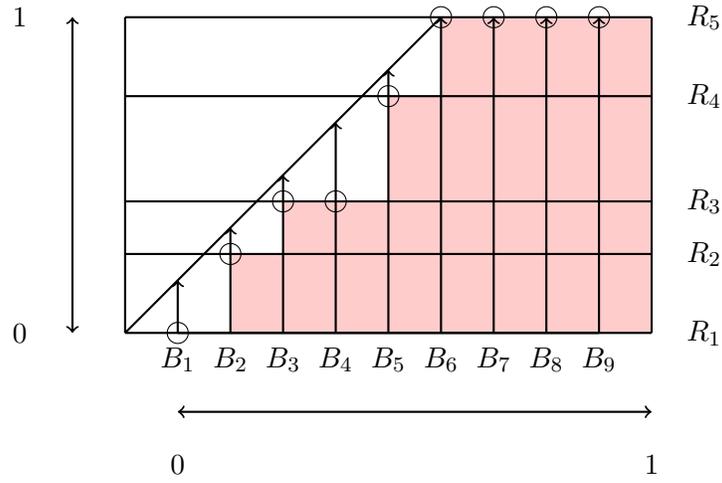
\begin{figure}
\center{
\begin{tikzpicture}[scale=0.7]
\tikzstyle{xxx}=[dashed,thick]

%\draw[cyan!20] (0,0) grid (12,10);

% Regret area
\fill[red!20](0,2)--(1,2)--(1,2)--(0,2)--cycle; %bar 1
\fill[red!20](1,2)--(2,2)--(2,3.5)--(1,3.5)--cycle; %bar 2
\fill[red!20](2,2)--(3,2)--(3,4.5)--(2,4.5)--cycle; %bar 3
\fill[red!20](3,2)--(4,2)--(4,4.5)--(3,4.5)--cycle; %bar 4
\fill[red!20](4,2)--(5,2)--(5,6.5)--(4,6.5)--cycle; %bar 5
\fill[red!20](5,2)--(6,2)--(6,8)--(5,8)--cycle; %bar 6
\fill[red!20](6,2)--(7,2)--(7,8)--(6,8)--cycle; %bar 7
\fill[red!20](7,2)--(8,2)--(8,8)--(7,8)--cycle; %bar 8
\fill[red!20](8,2)--(9,2)--(9,8)--(8,8)--cycle; %bar 9

%x-axis
\draw[thick](0,2)--(8,2);

%length for blocks
\draw[thick,<->](0,0.5)--(9,0.5);

%Length labels
\node at(0,-0.5){$0$};
\node at(9,-0.5){$1$};
\node at (-3,2){$0$};
\node at (-3,8){$1$};

%Left y-axis
\draw[thick](-1,2)--(-1,8);

%Length of y axis
\draw[thick,<->](-2,2)--(-2,8);

%Right y-axis
\draw[thick](9,2)--(9,8);

%Triangle outline
\draw[thick](-1,2)--(5,8);

%Regret lines
\draw[thick](-1,8)--(9,8);
\draw[thick](-1,6.5)--(9,6.5);
\draw[thick](-1,4.5)--(9,4.5);
\draw[thick](-1,3.5)--(9,3.5);
\draw[thick](-1,2)--(9,2);

%Sub-optimality  bars
\draw[thick,->](0,2)--(0,3);
\draw[thick,->](1,2)--(1,4);
\draw[thick,->](2,2)--(2,5);
\draw[thick,->](3,2)--(3,6);
\draw[thick,->](4,2)--(4,7);
\draw[thick,->](5,2)--(5,8);
\draw[thick,->](6,2)--(6,8);
\draw[thick,->](7,2)--(7,8);
\draw[thick,->](8,2)--(8,8);

%Regret markers
\node at(10,2){$R_1$};
\node at(10,3.5){$R_2$};
\node at(10,4.5){$R_3$};
\node at(10,6.5){$R_4$};
\node at(10,8){$R_5$};

%Block labels
\node at(0,1.5){$B_1$};
\node at(1,1.5){$B_2$};
\node at(2,1.5){$B_3$};
\node at(3,1.5){$B_4$};
\node at(4,1.5){$B_5$};
\node at(5,1.5){$B_6$};
\node at(6,1.5){$B_7$};
\node at(7,1.5){$B_8$};
\node at(8,1.5){$B_9$};

%allotment of regret to blocks
\draw (8,8) circle (2mm);
\draw (7,8) circle (2mm);
\draw (6,8) circle (2mm);
\draw (5,8) circle (2mm);
\draw (4,6.5) circle (2mm);
\draw (3,4.5) circle (2mm);
\draw (2,4.5) circle (2mm);
\draw (1,3.5) circle (2mm);
\draw (0,2) circle (2mm);

\end{tikzpicture}
}
\caption{For $N$ = 9 and $k$ = 5, and $c > \frac{1}{2}$, this shows a visualisation of a feasible allotments of regret values to blocks after \textbf{BU}. Unlike figure \ref{fig:sup-optimal-k}, this does exhibit worst-case regret.}\label{fig:optimal-k}
\end{figure}

This previous theorem reduces the problem of computing worst case regret to that of computing maximal left sums under arbitrary partitions. To that end, we define the precise worst-case partition value we will be interested in.

\begin{definition}
For a given $\mathcal{A}^{b,h}$, let us denote the maximal left sum under partitions of cardinality $k$ by $\mathcal{A}^{b,h}_k$. Mathematically, the value is defined as follows:
$$
\mathcal{A}^{b,h}_k = \sup_{|\mathcal{P}| = k} LS(\mathcal{A}^{b,h}, \mathcal{P})
$$
\end{definition}

\noindent
We can explicity compute these values which in turn will bound a player's maximal regret.

\begin{lemma}
$\mathcal{A}^{1,1}_k = \left( \frac{1}{2}  \right) \left(\frac{k}{k+1} \right)$ which is obtained on the partition $\mathcal{P} = \{\frac{1}{k+1}, \frac{2}{k+1},...,\frac{k}{k+1}\}$
\end{lemma}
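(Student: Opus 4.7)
The plan is to treat $\mathcal{A}^{1,1}_k$ as the maximum of the quadratic function
\[
F(x_1,\ldots,x_k) \;=\; \sum_{i=1}^{k} x_i (x_{i+1} - x_i),
\]
over the feasible region $0 \leq x_1 \leq \cdots \leq x_k \leq 1$, where by convention $x_{k+1}:=b=1$ (corresponding to the rightmost strip of the truncated triangle). Since $b=h=1$ the height cap $y=1$ is never active, so the integrand in \Cref{left-sum} simplifies to $h x_i = x_i$ as above. The problem reduces to a finite-dimensional concave optimisation.

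First I would verify that $F$ is strictly concave. A direct computation shows that its Hessian is the tridiagonal matrix with $-2$ on the diagonal and $1$ on the sub- and super-diagonals, i.e.\ the negative of the discrete Dirichlet Laplacian on a path, which is negative definite. Hence any interior critical point of $F$ is the unique global maximiser on the open box, and in particular on the closed feasible region provided we can exhibit such a critical point lying in the (relative) interior.

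Second, I would locate the critical point. Setting $\partial F/\partial x_i = 0$ for $1 \leq i \leq k$, with the conventions $x_0 := 0$ and $x_{k+1} := 1$, yields the second-difference recurrence $x_{i+1} - 2x_i + x_{i-1} = 0$. This forces the differences $x_{i+1}-x_i$ to be constant, so the $x_i$ form an arithmetic progression from $0$ to $1$ with common step $\tfrac{1}{k+1}$, giving $x_i = \tfrac{i}{k+1}$. These values are strictly increasing and lie in $(0,1)$, so the critical point is feasible, and by concavity it is the unique maximiser.

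Finally, I would substitute back. Each gap $x_{i+1}-x_i$ equals $\tfrac{1}{k+1}$ (including $x_{k+1}-x_k = 1 - \tfrac{k}{k+1}$), so
\[
F \;=\; \sum_{i=1}^{k} \frac{i}{k+1}\cdot\frac{1}{k+1} \;=\; \frac{1}{(k+1)^2}\cdot\frac{k(k+1)}{2} \;=\; \frac{1}{2}\cdot\frac{k}{k+1},
\]
matching the claim. The only non-routine step is establishing concavity (equivalently, that no partition with fewer than $k$ elements, or with collapsing coordinates, does better); once that is in hand, the remainder is just Lagrange-style calculus with boundary conditions $x_0=0$, $x_{k+1}=1$.
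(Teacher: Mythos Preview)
Your argument is correct but proceeds quite differently from the paper's. The paper argues by induction on $k$: for the base case $k=1$ it maximises the single-variable function $x(1-x)$, and for the inductive step it fixes the leftmost partition point $x_1=x$, observes that the remaining $k-1$ points must optimally partition the sub-triangle over $[x,1]$, which by self-similarity is a copy of $\mathcal{A}^{1,1}$ scaled by $(1-x)$, and then invokes the inductive hypothesis to reduce again to a one-variable maximisation in $x$. Your approach instead treats the full $k$-variable problem at once: you check that the Hessian of $F$ is the negative of the path-graph Dirichlet Laplacian (hence negative definite), solve the first-order conditions as a linear second-difference recurrence with boundary data $x_0=0$, $x_{k+1}=1$, and verify the resulting arithmetic progression is interior-feasible. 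The paper's route is more elementary in that it never needs multivariable concavity or the spectral fact about the tridiagonal Hessian, and the self-similarity observation is what drives the later scaling corollaries. Your route is more direct and makes the structure of the optimiser (equal spacing forced by a discrete Laplace equation) completely transparent; it would also adapt more readily if the height function $hx$ were replaced by something non-linear but still concave.
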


\begin{proof}

This result follows from induction and self-similarity of the original triangle. For $k = 1$, our partitions consist of a single point $x \in [0,1]$ hence the area under the triangle will be $\mathcal{A}^{1,1}_1(x) = (1-x)x$ which as a quadratic function of $x$ has a maximum at $x = \frac{1}{2}$. At this point we get $\mathcal{A}^{1,1}_1(x) = \frac{1}{2} \cdot \frac{1}{2}$ as desired.

Now let us assume that the lemma holds for $k = n$, we wish to show that it holds for $k = n+1$. Any $k = n+1$ element partition must have a left-most element, $x_1$. We let $\mathcal{A}'(x)$ be the maximal truncated area for an $n+1$ element partition, given that $x_1 = x$. By fixing $x$ we add an area of $x(1-x)$ under the triangle and we are left with $n$ points to partition $[x,1]$. We notice however that we are thus maximising truncated area under a similar triangle to the original that has been scaled by a factor of $(1-x)$. We can therefore use the inductive assumption and get the following expression:

$$
\mathcal{A}'(x) = (1-x)x + (1-x)^2 \mathcal{A}^{1,1}_n = (1-x)x + \frac{1}{2} (1-x)^2  \left( \frac{n}{n+1} \right)
$$

It is straightforward to see that $\mathcal{A}'(x)$ is maximised when $x = \frac{1}{k+2}$. Consequently the maximal truncated area arises from the partition where $x_i = \frac{i}{n+2}$ which in turn proves our claim.  

\end{proof}

\noindent
Via linear scaling, one can extend the above result to arbitrary base and height values $b,h$. 

\begin{corollary}\label{triangle-partition}
For $bh \leq 1$, $\mathcal{A}^{b,h}_k = \left( \frac{bh}{2} \right) \left( \frac{k}{k+1} \right)$ which is obtained on the partition $\mathcal{P} = \{\frac{b}{k+1}, \frac{2b}{k+1},...,\frac{kb}{k+1}\}$
\end{corollary}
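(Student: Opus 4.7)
The plan is to reduce the general case to the base case $b=h=1$ already handled by the preceding lemma, via an affine change of variables. The crucial observation is that when $bh \leq 1$ the truncation in Definition~\ref{truncated-triangles} is vacuous, so $\mathcal{A}^{b,h}$ is a genuine (non-truncated) right triangle of base $b$ and apex height $bh$. This triangle is the image of $\mathcal{A}^{1,1}$ under the linear map $(x,y)\mapsto (bx, bh\, y)$, so we expect the optimal left sum to scale by the product $bh$ while the optimising partition scales by $b$ in the horizontal coordinate.

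To make this precise, given a $k$-element partition $\mathcal{P}=\{x_1,\ldots,x_k\}\subseteq[0,b]$, define its rescaled version $\tilde{\mathcal{P}}=\{x_1/b,\ldots,x_k/b\}\subseteq[0,1]$. Writing $\tilde{x}_i = x_i/b$ and substituting directly into Definition~\ref{left-sum}, each summand transforms as $(h x_i)(x_{i+1}-x_i) = bh\cdot \tilde{x}_i(\tilde{x}_{i+1}-\tilde{x}_i)$, which yields the identity
\[
LS(\mathcal{A}^{b,h},\mathcal{P}) \;=\; bh \cdot LS(\mathcal{A}^{1,1}, \tilde{\mathcal{P}}).
\]
Since $\mathcal{P}\mapsto\tilde{\mathcal{P}}$ is a bijection between $k$-element partitions of $[0,b]$ and of $[0,1]$, taking suprema on both sides and applying the preceding lemma gives $\mathcal{A}^{b,h}_k = bh\cdot \mathcal{A}^{1,1}_k = \tfrac{bh}{2}\cdot\tfrac{k}{k+1}$, and the optimising partition on $[0,1]$, namely $\{1/(k+1),\ldots,k/(k+1)\}$, pulls back under multiplication by $b$ to the claimed optimiser $\{b/(k+1),\ldots,kb/(k+1)\}$.

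There is essentially no obstacle here: the scaling identity is a one-line substitution, and the only point worth flagging is that the hypothesis $bh\leq 1$ is exactly what is needed for the truncation to drop out, so that the affine map really sends $\mathcal{A}^{1,1}$ onto $\mathcal{A}^{b,h}$ without any clipping. For $bh>1$ the truncation would persist after rescaling and the clean product formula would no longer hold, which is why the corollary restricts to $bh\leq 1$.
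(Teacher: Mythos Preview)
Your approach is exactly what the paper intends: its entire proof is the single sentence ``Via linear scaling, one can extend the above result to arbitrary base and height values $b,h$'', and your rescaling $x_i = b\tilde{x}_i$ is precisely that linear scaling.

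There is, however, a slip in your substitution. With $x_i = b\tilde{x}_i$ one has
\[
(hx_i)(x_{i+1}-x_i) \;=\; (h\cdot b\tilde{x}_i)\bigl(b\tilde{x}_{i+1}-b\tilde{x}_i\bigr) \;=\; b^{2} h \cdot \tilde{x}_i(\tilde{x}_{i+1}-\tilde{x}_i),
\]
so the correct scaling factor is $b^{2}h$, not $bh$. Carrying this through yields $\mathcal{A}^{b,h}_k = \tfrac{b^{2}h}{2}\cdot\tfrac{k}{k+1}$ rather than $\tfrac{bh}{2}\cdot\tfrac{k}{k+1}$. A quick sanity check confirms this: as $k\to\infty$ the maximal left sum must converge to the area of the (untruncated) triangle, which is $\tfrac{1}{2}\cdot b\cdot(bh)=\tfrac{b^{2}h}{2}$, not $\tfrac{bh}{2}$. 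So your method is right but your arithmetic reproduced a typo in the stated corollary; in the paper's application the base is $b=1+\tfrac{1}{N}$, so the missing factor of $b$ only perturbs the bounds by $1+\tfrac{1}{N}$ and does not affect the asymptotic conclusions.
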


\begin{corollary}\label{trapezoid-partition}
For $bh > 1$, we obtain the following expressions for  $\mathcal{A}^{b,h}_k$:

\[
 \mathcal{A}^{b,h}_k =
  \begin{cases} 
      \hfill \left( \frac{bh}{2} \right) \left( \frac{k}{k+1} \right)     \hfill & \text{ if }  \frac{k}{k+1} \leq \frac{b}{h} \\

      \hfill b( 1 - \frac{1}{h} - \frac{1}{2hk})     \hfill & \text{ otherwise } \\
  \end{cases}
\]

\end{corollary}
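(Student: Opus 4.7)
My plan is to extend the analysis of Corollary~\ref{triangle-partition} to the truncated regime by splitting into two cases, depending on whether the optimal partition from the pure-triangle case survives the cap at $y = 1$. Since $\mathcal{A}^{b,h}$ decomposes at $x = 1/h$ into a triangle and a rectangle, any partition $\mathcal{P} = \{x_1, \ldots, x_k\}$ contributes $h x_i (x_{i+1} - x_i)$ from points $x_i$ below the cap and $(x_{i+1} - x_i)$ from points $x_i$ at or above it, so the optimisation splits naturally along this threshold and we only need to decide how many partition points to spend on each side.

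In the first case, the stated condition characterises the regime where the optimal partition from Corollary~\ref{triangle-partition}, namely $x_i = i b/(k+1)$, still satisfies $h x_k \leq 1$, so no partition point crosses the cap. In that regime the truncation is inactive, the left sum coincides with the value computed in Corollary~\ref{triangle-partition}, and this partition remains optimal in the truncated instance, yielding the first expression.

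In the second case the cap is active, and I would argue that the optimal configuration has exactly one partition point at the transition $x = 1/h$, with the remaining $k-1$ points placed optimally inside $(0, 1/h)$. The argument is via first-order perturbation: if $x_i \geq 1/h$ and $x_{i-1} < 1/h$, then the partial derivative of the sum with respect to $x_i$ equals $h x_{i-1} - 1 < 0$, so $x_i$ slides down to the boundary $x = 1/h$. Once a point sits there, adding further points strictly inside the cap region cannot improve the objective because the cap contribution telescopes to $b - x_{\mathrm{first\ cap}} = b - 1/h$ independently of their positions. Hence the optimal configuration places a single transition point at $1/h$ and invokes Corollary~\ref{triangle-partition} for the remaining $k-1$ points on $[0, 1/h]$ (a boundary instance with $b' h' = 1$), then adds the rectangular contribution $b - 1/h$ to produce the second expression.

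The principal obstacle will be making the structural claim in the second case fully rigorous---in particular, ruling out configurations with two or more cap points separated strictly inside $(1/h, b)$, and verifying that the derivative argument at the transition genuinely forces exactly one boundary point rather than permitting a family of equivalent optima. The key observation that closes this gap is that on the rectangular portion of $\mathcal{A}^{b,h}$ the left-Riemann sum depends only on the leftmost cap point, so any additional cap points waste partition budget that would otherwise refine the triangle portion, where Corollary~\ref{triangle-partition} guarantees strict improvement with each extra point.
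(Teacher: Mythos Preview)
Your plan is sound, and for the first case it coincides with the paper's argument: the optimal partition $x_i = ib/(k+1)$ for the untruncated triangle $\mathcal{B}^{b,h}$ lies entirely below the cap, and since $LS(\mathcal{A}^{b,h},\mathcal{P}) \leq LS(\mathcal{B}^{b,h},\mathcal{P})$ for every $\mathcal{P}$ (truncation only lowers heights), equality on $\mathcal{P}^*$ forces optimality for $\mathcal{A}^{b,h}$ as well. You should state this domination step explicitly; at present you only observe that the two left sums \emph{coincide} on $\mathcal{P}^*$, which by itself does not yield optimality.

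For the second case your route genuinely differs from the paper's. The paper fixes the rightmost partition element $x$, introduces an auxiliary non-truncated triangle $\mathcal{B}^{\ell,(k+1)/k}$ with $\ell = (k+1)b/(kh)$, and writes $\mathcal{A}(x) = \mathcal{B}(x) + \text{(linear term in }x\text{)}$; since both summands are maximised at the cap threshold, so is $\mathcal{A}(x)$, and then Corollary~\ref{triangle-partition} handles the remaining $k-1$ points. Your approach instead works directly: the derivative $hx_{i-1}-1<0$ pushes the first cap point down to the threshold, and telescoping shows all further cap points are wasted budget better spent refining the triangle. This is more elementary and self-contained; the paper's auxiliary construction is slicker once set up but relies on a somewhat informal game-theoretic argument to restrict attention to $x$ below the threshold.

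One case your outline does not yet cover: the possibility that the optimal partition has \emph{no} point at or above the cap ($m=0$ in your implicit case split). Your derivative and telescoping observations only treat $m\geq 1$. The fix is short: if all $x_j$ lie strictly below the threshold, the first-order conditions force the equally spaced partition $x_i = ib/(k+1)$, whose last point exceeds the threshold precisely in the second case --- a contradiction. Hence the constrained optimum has $x_k$ at the threshold, which is your $m=1$ configuration. You should add this sentence to close the argument.
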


\begin{proof}

For the first case (when $\frac{k}{k+1} \leq \frac{b}{h}$), let us consider $\mathcal{B}^{b,h}$ to be the the triangle with base $b$ and height $h$ that unlike $\mathcal{A}^{b,h}$ is not truncated at unit height. From scaling our previous result from corollary \ref{triangle-partition}, the largest $k$-element left sum for $\mathcal{B}^{b,h}$ occurs for the partition $\mathcal{P} = \{\frac{b}{k+1}, \frac{2b}{k+1},...,\frac{bk}{k+1} \}$. However, from the fact that $\mathcal{A}^{b,h} \subset \mathcal{B}^{b,h}$, at precisely these values the left sums of $\mathcal{P}$ for both geometric figures coincide. It follows that this partition also gives a maximal $k$-element partition for left sums of $\mathcal{A}^{b,h}$ and thus the claim holds. 

On the other hand, let us know consider the case where $\frac{k}{k+1} > \frac{b}{h}$. In a similar spirit to previous proofs, let us define $\mathcal{A}(x): [0,b] \rightarrow \mathbb{R}$ to be the maximal left-sum  under $\mathcal{A}^{b,h}$ for a given partition $\mathcal{P}$ whose right-most element is $x$. From figures \ref{fig:sup-optimal-k} and \ref{fig:optimal-k}, it should be clear that we should only consider $x \in [0,\frac{b}{h}]$, because if ever we have a $x \geq \frac{b}{h}$, that would correspond to some block being assigned a regret value of $R_j = 1$ for some strategy $j$. However with the existence of such a maximal regret strategy, the greedy allotment of blocks to strategies would assign the most blocks possible to strategy $j$ (or some other maximal regret strategy), which would correspond again to the final element in our partition being $\frac{b}{h}$. 

Now that we have restricted our focus to $x \in [0,\frac{b}{h}]$, we wish to consider the triangle $\mathcal{B}^{\ell,\frac{k+1}{k}}$ of base length $\ell = \frac{(k+1)b}{kh}$, and height $\frac{k+1}{k}$ which is not truncated at height 1. Let us define $\mathcal{B}(x)$ to be a similar function that computes the maximal $k$-element left sum under $\mathcal{B}^{\ell,\frac{k+1}{k}}$ given that the right-most partition element is $x \in [0,\frac{•}{•}c{b}{h}]$. Geometrically, one can see that we get the following identity:

$$
\mathcal{A}(x) = \mathcal{B}(x) + \frac{hx}{b} \left( b - \frac{b}{h} \right)
$$
However, from corollary \ref{triangle-partition}, the optimal $k$-element partition on $\mathcal{B}^{\ell,\frac{k+1}{k}}$ has a right-most element of $\frac{\ell k}{k+1} = \frac{b}{h}$, it follows that $\mathcal{B}(x)$ is maximised at $x = \frac{b}{h}$. Furthermore, the second part of the above sum is also maximised at this value, therefore $\mathcal{A}(x)$ is maximised at $\frac{b}{h}$. Concretely, this means that the maximal $k$-element partition for $\mathcal{A}^{b,h}$ is $\mathcal{P} = \{\frac{b}{hk}, \frac{2b}{hk}, ...,\frac{(k-1)b}{hk}, \frac{b}{h}\}$. This partition results in a maximal left sum of $\mathcal{A}^{\frac{b}{h},1}_{k-1} + \left( b - \frac{b}{h} \right)$ which after simplification gives us the value $b( 1 - \frac{1}{h} - \frac{1}{2hk})$ as desired.

\end{proof}

\noindent
Finally, we can combine everything above to obtain:

\begin{theorem}\label{final-bounds-blocks}
With access to a query oracle that computes exact expected utilities for mixed strategy profiles, \textbf{BU} returns an $\epsilon$-approximate Nash equilibrium for

\[
 \epsilon =
  \begin{cases} 
      \hfill c \left( \frac{k-1}{k} \right) \left(1 + \frac{1}{N} \right)    \hfill & \text{ if } c \leq \frac{1}{2} \\
      \hfill c \left( \frac{k-1}{k} \right)\left(1 + \frac{1}{N} \right)     \hfill & \text{ if } c > \frac{1}{2} \text{ and } \frac{k-1}{k} \leq \frac{1}{2c} \\
      \hfill \left( 1 - \frac{1}{4c} - \frac{1}{4c (k-1)} \right) \left(1 + \frac{1}{N} \right) \hfill & \text{ if } c > \frac{1}{2}  \text{ and } \frac{k-1}{k} > \frac{1}{2c} \\
  \end{cases}
\]

\end{theorem}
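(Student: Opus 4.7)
The plan is to simply combine Theorem~\ref{reduction-triangle} with the closed-form evaluations of the extremal left sums given in Corollaries~\ref{triangle-partition} and~\ref{trapezoid-partition}. By Theorem~\ref{reduction-triangle}, for any player the worst-case regret at the end of \textbf{BU} is bounded above by the maximal $(k-1)$-element left sum of the truncated triangle $\mathcal{A}^{b,h}$ with $b = 1 + \tfrac{1}{N}$ and $h = 2c$, i.e.\ by $\mathcal{A}^{b,h}_{k-1}$. The parameters arise because each block $B_t$ (of probability mass $\tfrac{1}{N}$) is constrained by largeness to be allocated to a strategy whose regret is at most $\min\{1,2c\cdot t/N\}$, so the block regrets form a left Riemann-type sum of the function $\min\{1,2cx\}$ over $[0,1+\tfrac{1}{N}]$, which has at most $k-1$ distinct ``jumps'' among the $k$ strategies.

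Next I would instantiate the corollaries with $b = 1+\tfrac{1}{N}$, $h = 2c$, and partition cardinality $k-1$, and match the three case distinctions to those in the statement. When $c \leq \tfrac{1}{2}$, one has $h \leq 1$ and hence $\tfrac{k-1}{k} < \tfrac{1}{h} \leq \tfrac{b}{h}$, so the triangular formula of Corollary~\ref{triangle-partition} (extended to $bh > 1$ via the first subcase of Corollary~\ref{trapezoid-partition}) applies and yields
\[
\mathcal{A}^{b,h}_{k-1} \;=\; \frac{bh}{2}\cdot\frac{k-1}{k} \;=\; c\left(1+\tfrac{1}{N}\right)\frac{k-1}{k}.
\]
The same expression covers the second case $c > \tfrac{1}{2}$ with $\tfrac{k-1}{k} \leq \tfrac{1}{2c}$, again by the first subcase of Corollary~\ref{trapezoid-partition}. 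In the remaining case, the ``otherwise'' branch of Corollary~\ref{trapezoid-partition} applies, producing a trapezoidal expression which I would simplify via $\mathcal{A}^{b/h,1}_{k-1} + (b - b/h)$ to $b\bigl(1 - \tfrac{1}{2h} - \tfrac{1}{2h(k-1)}\bigr)$; plugging $h = 2c$ and $b = 1+\tfrac{1}{N}$ gives precisely the third case of the theorem statement.

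The only mildly delicate step is the third case: one must verify that the corollary's regime condition $\tfrac{k-1}{k} > \tfrac{b}{h}$ is governed by the stated $\tfrac{k-1}{k} > \tfrac{1}{2c}$ (they agree up to the $\tfrac{1}{N}$ factor in $b$, which is absorbed into the $(1+\tfrac{1}{N})$ slack already present in the bound). Once this matching is checked, no further calculation is needed: the three pieces of the case split follow directly from the corresponding pieces of the corollaries. I do not foresee any genuine obstacle here, since all non-trivial combinatorial and geometric work has been done in Theorem~\ref{reduction-triangle} and the two corollaries; the proof is essentially a bookkeeping exercise translating the triangle-area computation back into player regret.
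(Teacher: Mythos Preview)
Your proposal is correct and follows exactly the paper's approach: the paper's own proof is a single sentence stating that the result is ``a straightforward application of Theorem~\ref{reduction-triangle} and Corollaries~\ref{triangle-partition} and~\ref{trapezoid-partition},'' which is precisely the combination you describe. You actually supply more detail than the paper does, including the instantiation $b=1+\tfrac{1}{N}$, $h=2c$, partition size $k-1$, and you correctly flag the one mildly delicate point (the $\tfrac{1}{N}$ discrepancy between the regime threshold $\tfrac{k-1}{k}>\tfrac{b}{h}$ in the corollary and the stated $\tfrac{k-1}{k}>\tfrac{1}{2c}$), which the paper does not address explicitly either.
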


\begin{proof}

This just a straightforward application of theorem \ref{reduction-triangle} and corollaries \ref{triangle-partition} and \ref{trapezoid-partition}.

\end{proof}

\subsubsection{Query Complexity of Block Method}\label{sec:query-many}

In the above analysis we assumed access to a mixed strategy oracle as we computed expected payoffs at each time-step for all players. When using $\cQ_{\beta, \delta}$ however, there is an additive error and a bounded correctness probability to take into account. 

In terms of the additive error, if we assume that there is an additive error of $\beta$ on each of the $N$ queries in \textbf{BU}, then at any time step, the $b$-th block will be assigned to a strategy that incurs at most $\left( \frac{\min \{1, (2c)b \}}{N} + \beta \right)$ regret, which can visualised geometrically in figure \ref{fig:additive-regret}, and which leads to the following extension of theorem \ref{reduction-triangle}.

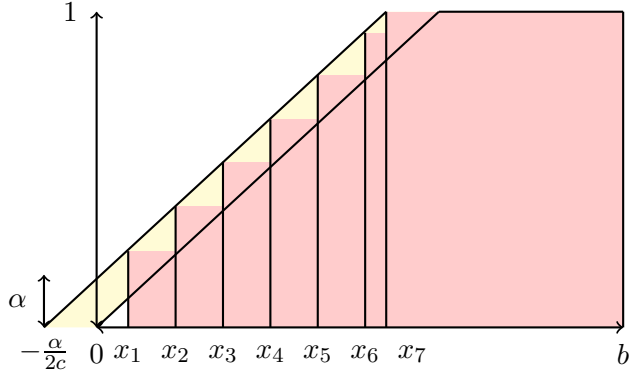
\begin{figure}
\center{
\begin{tikzpicture}[scale=0.7]
\tikzstyle{xxx}=[dashed,thick]

%\draw[cyan!20] (0,0) grid (12,10);

% epsilon area
\fill[yellow!20](-2,2)--(-1,2)--(5.5,8)--(4.5,8)--cycle; %bar 1

% Regret area
\fill[red!20](-0.4,2)--(0.5,2)--(0.5,3.45)--(-0.4,3.45)--cycle; %bar 1
\fill[red!20](0.5,2)--(1.4,2)--(1.4,4.3)--(0.5,4.3)--cycle; %bar 1
\fill[red!20](1.4,2)--(2.3,2)--(2.3,5.15)--(1.4,5.15)--cycle; %bar 1
\fill[red!20](2.3,2)--(3.2,2)--(3.2,5.95)--(2.3,5.95)--cycle; %bar 1
\fill[red!20](3.2,2)--(4.1,2)--(4.1,6.8)--(3.2,6.8)--cycle; %bar 1
\fill[red!20](4.1,2)--(4.5,2)--(4.5,7.6)--(4.1,7.6)--cycle; %bar 1
\fill[red!20](4.5,2)--(8,2)--(8,8)--(4.5,8)--cycle; %bar 3
\fill[red!20](6,2)--(7,2)--(7,3.5)--(6,3.5)--cycle; %bar 4
\fill[red!20](7,2)--(8,2)--(8,8)--(7,8)--cycle; %bar 5
\fill[red!20](8,2)--(9,2)--(9,8)--(8,8)--cycle; %bar 6

%x-axis
\draw[thick, <->](-1,2)--(9,2);

%Length labels
%\node at (-1.5,2){$0$};
\node at (-1.5,8){$1$};

%Left y-axis
\draw[thick, <->](-1,2)--(-1,8);

%Demonstrating Alpha height
\draw[thick, <->](-2,2)--(-2,3);
\node at (-2.5,2.5){$\alpha$};

%Right y-axis
\draw[thick](9,2)--(9,8);

%Triangle outline
\draw[thick](-1,2)--(5.5,8);
\draw[thick](5.5,8)--(9,8);

%epsilon triangle outline
\draw[thick](-2,2)--(4.5,8);

%Sub-optimality  bars
\draw[thick](-0.4,2)--(-0.4,3.45);
\draw[thick](0.5,2)--(0.5,4.3);
\draw[thick](1.4,2)--(1.4,5.15);
\draw[thick](2.3,2)--(2.3,5.95);
\draw[thick](3.2,2)--(3.2,6.8);
\draw[thick](4.1,2)--(4.1,7.6);
\draw[thick](4.5,2)--(4.5,8);

%Regret markers
%\node at(10,2){$R_1$};
%\node at(10,3.5){$R_2$};
%\node at(10,4.5){$R_3$};
%\node at(10,6.5){$R_4$};
%\node at(10,8){$R_5$};

%Partition labels
\node at(-2,1.5){$-\frac{\alpha}{2c}$};
\node at(-1,1.5){$0$};
\node at(-0.4,1.5){$x_1$};
\node at(0.5,1.5){$x_2$};
\node at(1.4,1.5){$x_3$};
\node at(2.3,1.5){$x_4$};
\node at(3.2,1.5){$x_5$};
\node at(4.1,1.5){$x_6$};
\node at(5,1.5){$x_7$};
\node at(9,1.5){$b$};

\end{tikzpicture}
\caption{Example of $\alpha$ additive error in utility sampling. For this 7 element partition, regret bounds are increased by $\alpha$ and we get an augmented truncated triangle.}\label{fig:additive-regret}
}
\end{figure}

\begin{theorem}\label{reduction-triangle-additive-error}
In \text{BU}, if queries incorporate an additive error of $\alpha$ on expected utilities, for any fixed choice of $R_1,...,R_k$, the worst case assignment of probability blocks $B_b$ to strategies corresponds to a left sum of $\mathcal{A}^{(1+\frac{1}{N} + \frac{\beta}{2c}), 2c}$ for some partition of $[0,1+\frac{1}{N}]$ with cardinality at most $k-1$.

\end{theorem}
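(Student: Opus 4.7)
The plan is to mirror the argument used for Theorem~\ref{reduction-triangle}, with the key modification that the additive error $\beta$ loosens each block's regret constraint by exactly $\beta$, which geometrically corresponds to extending the base of the associated truncated triangle by $\frac{\beta}{2c}$. First I would track how the per-block regret bound changes: in the error-free analysis, the block $B_b$ (placed at round $b$) can be assigned to any strategy whose regret is at most $\frac{\min\{1,(2c)b\}}{N}$, because at round $b$ only $b$ blocks have been moved by $\frac{1}{N}$ each and every utility shifts by at most $\gamma = \frac{c}{n}$ per reassignment. With an additive sampling error of $\beta$ on each queried expected utility, the estimated best response at round $b$ may differ from the true best response by up to $\beta$ in regret, so the feasible constraint relaxes to $R_j \le \frac{\min\{1,(2c)b\}}{N} + \beta$.

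Next I would translate this augmented constraint into geometry. The original feasibility region corresponds to the region under the line $y = 2c \cdot x$, truncated at $y=1$, for $x \in [0, 1+\frac{1}{N}]$, namely $\mathcal{A}^{(1+\frac{1}{N}), 2c}$ (as in Figure~\ref{fig:sup-optimal-k}). Adding $\beta$ to each height is equivalent to shifting the triangle horizontally to the left by $\frac{\beta}{2c}$ (since the untruncated line has slope $2c$), or equivalently extending the base on the right by $\frac{\beta}{2c}$; see Figure~\ref{fig:additive-regret}. The resulting feasible region is precisely $\mathcal{A}^{(1+\frac{1}{N} + \frac{\beta}{2c}), 2c}$. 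I would verify carefully that the truncation at height $1$ still applies, so that when the slope exceeds the remaining budget the ``ceiling'' kicks in exactly as in Definition~\ref{truncated-triangles}.

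With the augmented triangle in place, the rest of the argument is identical to the proof of Theorem~\ref{reduction-triangle}. The $k$ strategies partition the block queue into at most $k-1$ non-trivial pieces (blocks assigned to the zero-regret best response contribute nothing), and the worst-case total regret is obtained by the greedy allotment described before Theorem~\ref{reduction-triangle}, which matches exactly a left-sum $LS(\mathcal{A}^{(1+\frac{1}{N}+\frac{\beta}{2c}), 2c}, \mathcal{P})$ for some partition $\mathcal{P}$ of cardinality at most $k-1$. I would conclude by invoking Definition~\ref{left-sum} on this augmented triangle.

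The main obstacle I anticipate is the bookkeeping around the truncation and the off-by-one nature of the partition cardinality. Specifically, I need to check that the strategy whose true regret is $0$ (the best response) does not consume one of the $k-1$ partition slots, and that any strategy whose feasibility bound exceeds $1$ under the augmented constraint still sits on the truncated (height-$1$) portion, so the geometric picture really is $\mathcal{A}^{(1+\frac{1}{N}+\frac{\beta}{2c}), 2c}$ rather than an unbounded triangle. Once these are verified, the reduction to a left-sum maximisation carries over verbatim.
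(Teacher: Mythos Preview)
Your proposal is correct and follows the paper's own (informal) argument: the paper does not give a separate proof of this theorem but simply notes, just before its statement and via Figure~\ref{fig:additive-regret}, that an additive error of $\beta$ raises each block's regret constraint by $\beta$, which geometrically amounts to extending the base of the truncated triangle by $\frac{\beta}{2c}$, after which the greedy argument of Theorem~\ref{reduction-triangle} applies verbatim. Your write-up fills in exactly these implicit steps; the only cosmetic discrepancy is that the paper's figure depicts the extension on the \emph{left} of the base (so the partition points still lie in $[0,1+\tfrac{1}{N}]$ as in the theorem statement), whereas you describe it as an extension on the right, but the two pictures are of course equivalent.
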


Finally, since our approximate query oracle is correct with a bounded probability, in order to assure that the same additive error of $\beta$ holds on all $N$ queries of \text{BU}, we need to impose a correctness probability of $\frac{\delta}{N}$ in order to achieve the former with a union bound. This leads to the following query complexity result for \textbf{BU}.

\begin{theorem}\label{query-BU}
For any $\alpha, \eta > 0$, if we implement \textbf{BU} using $\cQ_{\beta,\delta}$ with $\beta = \alpha$ and $\delta = \frac{\eta}{N}$, with probability $1 - \eta$, we will obtain an $\epsilon$-approximate Nash equilibrium for

\[
 \epsilon =
  \begin{cases} 
      \hfill c \left( \frac{k-1}{k} \right) \left(1 + \frac{1}{N}  + \frac{\alpha}{2c} \right)    \hfill & \text{ if } c \leq \frac{1}{2} \\
      \hfill c \left( \frac{k-1}{k} \right)\left(1 + \frac{1}{N} + \frac{\alpha}{2c}\right)     \hfill & \text{ if } c > \frac{1}{2} \text{ and } \frac{k-1}{k} \leq \frac{1}{2c} \\
      \hfill \left( 1 - \frac{1}{4c} - \frac{1}{4c (k-1)} \right) \left(1 + \frac{1}{N} + \frac{\alpha}{2c} \right) \hfill & \text{ if } c > \frac{1}{2}  \text{ and } \frac{k-1}{k} > \frac{1}{2c} \\
  \end{cases}
\]
The total number of queries used is $\frac{64k^2}{\alpha^3} \log\left( \frac{8nN}{\delta} \right)$

\end{theorem}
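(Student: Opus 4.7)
The plan is to assemble three ingredients that are already in place: the geometric reduction of regret to a left-sum problem (Theorem \ref{reduction-triangle-additive-error}), the closed-form evaluation of the maximal left sums (Corollaries \ref{triangle-partition} and \ref{trapezoid-partition}), and the sample-complexity guarantee for $\cQ_{\beta,\delta}$ (Lemma \ref{lem:approx-query} applied in the $k$-action setting). First I would condition on the event that every one of the $N$ invocations of $\cQ_{\beta,\delta}$ returns payoff estimates within additive error $\beta=\alpha$; by a union bound, choosing confidence $\delta=\eta/N$ in each call makes this event hold with probability at least $1-\eta$.

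Next, under this accuracy event, Theorem \ref{reduction-triangle-additive-error} says that the worst-case regret of any player after \textbf{BU} is upper-bounded by the maximal $(k-1)$-element left sum of the truncated triangle $\mathcal{A}^{(1+\frac{1}{N}+\frac{\alpha}{2c}),\,2c}$. I would then apply Corollary \ref{triangle-partition} when $bh\le 1$ (i.e.\ when $2c\,(1+\frac{1}{N}+\frac{\alpha}{2c})\le 1$, which covers the first two cases of the theorem statement by scaling) to obtain a bound of the form $\bigl(\tfrac{bh}{2}\bigr)\bigl(\tfrac{k-1}{k}\bigr)$, and Corollary \ref{trapezoid-partition} in the truncated regime to obtain the third case $\bigl(1-\tfrac{1}{2ch}\bigl(1+\tfrac{1}{k-1}\bigr)\bigr)$ after substituting $b=1+\frac{1}{N}+\frac{\alpha}{2c}$ and $h=2c$. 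The three cases in the theorem arise precisely from which regime $(c,k)$ falls into, and a short algebraic simplification of the product $b\cdot c\cdot \tfrac{k-1}{k}$ (respectively the truncated expression) yields the three claimed bounds on $\epsilon$.

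Finally, for the query count I would simply plug into the sample-complexity formula for the $k$-action oracle $\cQ_{\beta,\delta}$ given in the paragraph preceding Theorem \ref{first-bound}: each invocation costs $m = \tfrac{64k^2}{\beta^3}\log(8n/\delta)$ queries, and \textbf{BU} performs exactly $N$ such invocations (one per round). Substituting $\beta=\alpha$ and $\delta=\eta/N$ gives $N\cdot\tfrac{64k^2}{\alpha^3}\log(8nN/\eta)$ queries, matching (up to the way the author writes it) the stated $\tfrac{64k^2}{\alpha^3}\log(8nN/\delta)$. The only step requiring genuine thought is verifying the case split: one must check that the inequality defining the boundary between triangle and trapezoid regimes in Corollary \ref{trapezoid-partition}, namely $\frac{k}{k+1}\le \frac{b}{h}$, translates correctly under $b=1+\tfrac{1}{N}+\tfrac{\alpha}{2c}$, $h=2c$, into the conditions $c\le\tfrac12$ or $\tfrac{k-1}{k}\le\tfrac{1}{2c}$ appearing in the theorem; everything else is routine substitution and a union bound.
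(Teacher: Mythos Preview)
Your proposal is correct and follows exactly the route the paper takes: the paper does not give a standalone proof of this theorem, but presents it as the immediate consequence of Theorem~\ref{reduction-triangle-additive-error}, Corollaries~\ref{triangle-partition} and~\ref{trapezoid-partition}, and a union bound over the $N$ calls to $\cQ_{\beta,\delta}$, precisely as you describe. Your observation about the query count (the stated expression in the theorem appears to omit the outer factor of $N$ and writes $\delta$ where $\eta$ is meant) and your caveat about the case split not literally matching the geometric threshold once the base is augmented to $1+\tfrac{1}{N}+\tfrac{\alpha}{2c}$ are both accurate readings of minor looseness already present in the paper.
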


Once again, it is interesting to note that the first regret bounds we derived do not depend on $k$. It is also important to note the regret has an extra term of the form $O(\frac{1}{N})$ in the number of probability blocks. Although this can be minimised in the limit, there is a price to be paid in query complexity, as this would involve a larger number of rounds in the computation of approximate equilibria.

\subsection{Comparison Between Both Methods}

We can compare the guarantees from our methods from \ref{sec:various-lipschitz} and \ref{sec:block-update} when we let the number of strategies $k = 2$ and we consider largeness parameters $\gamma = \frac{c}{n} \in [0,1]$. Furthermore, we consider how both methods compare when $N\rightarrow \infty$.

\begin{center}

\begin{tabular}{|c|c|c|c|}
 \hline 
 • & $c \leq 1$ & $1 \leq c \leq 2$ & $c \geq 2$ \\ 
 \hline 
 \textbf{UNC} & $\frac{c}{8}$ & $\frac{c}{8}$ & $\frac{1}{2} - \frac{1}{2c}$ \\ 
 \hline 
 \textbf{BU} & $\frac{c}{2}$ & $1 - \frac{1}{2c}$ & $1 - \frac{1}{2c}$ \\ 
 \hline 
 \end{tabular}  

 \end{center}

One can see that \textbf{UNC} does better by a multiplicative factor of $\frac{1}{4}$ in the case of small $c$ and better by an additive factor of $\frac{1}{2}$ for large $c$.

%%% Local Variables: 
%%% mode: latex
%%% TeX-master: "large-games-arxiv.tex"
%%% End: 

\section{Conclusion and Further Research}

The obvious question raised by our results is the possible improvement
in the additive approximation obtainable.
Since {\em pure} approximate equilibria are known to exist for these games,
the search for such equilibria is of interest.
A slightly weaker objective (but still stronger than the solutions we obtain here)
is the search for {\em well-supported} approximate equilibria in cases where $c > 1$ and for better {\em well-supported} approximate equilibria in general.

There is also the question of lower bounds, especially in the completely
uncoupled setting.
Our algorithms are randomised (estimating the payoffs that result from
a mixed strategy profile via random sampling) and one might also ask
what can be achieved using deterministic algorithms.

\bibliographystyle{plainnat}

\bibliography{./upload-large-games-arxiv.bbl}

%\appendix

\end{document}

%%% Local riables:
%%% mode: latex
%%% TeX-master: t
%%% End: